\newcommand{\bA}{{\boldsymbol A}}
\newcommand{\bB}{{\boldsymbol B}}
\newcommand{\bC}{{\boldsymbol C}}
\newcommand{\bD}{{\boldsymbol D}}
\newcommand{\bG}{{\boldsymbol G}}
\newcommand{\bH}{{\boldsymbol H}}
\newcommand{\bI}{{\boldsymbol I}}
\newcommand{\bK}{{\boldsymbol K}}
\newcommand{\bM}{{\boldsymbol M}}
\newcommand{\bP}{{\boldsymbol P}}
\newcommand{\bQ}{{\boldsymbol Q}}
\newcommand{\bS}{{\boldsymbol S}}
\newcommand{\bT}{{\boldsymbol T}}
\newcommand{\bU}{{\boldsymbol U}}
\newcommand{\bV}{{\boldsymbol V}}
\newcommand{\bW}{{\boldsymbol W}}
\newcommand{\bX}{{\boldsymbol X}}
\newcommand{\bY}{{\boldsymbol Y}}
\newcommand{\bZ}{{\boldsymbol Z}}
\newcommand{\by}{{\boldsymbol y}}
\newcommand{\cS}{{\mathcal S}}
\newcommand{\cE}{{\mathcal E}}
\newcommand{\argmin}{\mathrm{argmin}}
\begin{document}

\theoremstyle{plain}\newtheorem{lemma}{\textbf{Lemma}}\newtheorem{theorem}{\textbf{Theorem}}\newtheorem{corollary}{\textbf{Corollary}}\newtheorem{assumption}{\textbf{Assumption}}\newtheorem{example}{\textbf{Example}}\newtheorem{definition}{\textbf{Definition}}\newtheorem{proposition}{\textbf{Proposition}}

\theoremstyle{definition}

\theoremstyle{remark}\newtheorem{remark}{\textbf{Remark}}

\title{Nonconvex Low-Rank Matrix Recovery with Arbitrary Outliers\\
via Median-Truncated Gradient Descent}

\author{Yuanxin Li$^{\dag}$, Yuejie Chi$^{\dag}$, Huishuai Zhang$^{\ddag}$, Yingbin Liang$^{\dag}$ \\
$^{\dag}$Department of ECE, The Ohio State University, Columbus, OH 43210\\
$^{\ddag}$Department of EECS, Syracuse University, Syracuse, NY 13244\\
Emails:  \{li.3822, chi.97, liang.889\}@osu.edu,  \{hzhan23\}@syr.edu}

\date{\today}

\maketitle

\begin{abstract}
Recent work has demonstrated the effectiveness of gradient descent for directly recovering the factors of low-rank matrices from random linear measurements in a globally convergent manner when initialized properly. However, the performance of existing algorithms is highly sensitive in the presence of outliers that may take arbitrary values. In this paper, we propose a truncated gradient descent algorithm to improve the robustness against outliers, where the truncation is performed to rule out the contributions of samples that deviate significantly from the {\em sample median} of measurement residuals adaptively in each iteration. We demonstrate that, when initialized in a basin of attraction close to the ground truth, the proposed algorithm converges to the ground truth at a linear rate for the Gaussian measurement model with a near-optimal number of measurements, even when a constant fraction of the measurements are arbitrarily corrupted. In addition, we propose a new truncated spectral method that ensures an initialization in the basin of attraction at slightly higher requirements.  We finally provide numerical experiments to validate the superior performance of the proposed approach.

\end{abstract}

\textbf{Keywords:} median-truncated gradient descent, low-rank matrix recovery, nonconvex approach, robust algorithms, outliers

\section{Introduction}

Low-rank matrix recovery is a problem of great interest in applications such as collaborative filtering, signal processing, and computer vision. A considerable amount of work has been done on low-rank matrix recovery in recent years, where it is shown that low-rank matrices can be recovered accurately and efficiently from much fewer observations than their ambient dimensions \cite{recht2010guaranteed,gross2011recovering, negahban2011estimation, candes2012exact, chen2014robust,chen2015exact}. An extensive overview on low-rank matrix recovery can be found in \cite{davenport2016overview}. In particular, convex relaxation is a popular strategy which replaces the low-rank constraint by a convex surrogate, such as nuclear norm minimization \cite{recht2010guaranteed, jain2010guaranteed, candes2011tight,chen2014robust,chen2015exact}. However, despite statistical (near-)optimality, the computational and memory costs of this approach are prohibitive for high-dimensional problems.  

In practice, a widely used alternative, pioneered by Burer and Monteiro \cite{burer2003nonlinear}, is to directly estimate the factors $\bX\in\mathbb{R}^{n_1\times r}$ and $\bY\in\mathbb{R}^{n_2\times r}$, of a low-rank matrix $\bM = \bX\bY^T\in\mathbb{R}^{n_1\times n_2}$ if its rank $r$ is approximately known or can be upper bounded. Since the factors have a much lower-dimensional representation, this approach admits more computationally and memory efficient algorithms. Due to the bilinear constraint induced by factorization, this typically leads to a nonconvex loss function that may be difficult to optimize globally. Interestingly, a series of recent work has demonstrated that, starting from a careful initialization, simple algorithms such as gradient descent \cite{zheng2015convergent, tu2016low, zhao2015nonconvex, chen2015fast, park2016provable} and alternating minimization \cite{jain2013low,hardt2014understanding} enjoy global convergence guarantees under near-optimal sample complexity. Some of these algorithms also converge at a linear rate, making them extremely appealing computationally. On the other hand, the global geometry of nonconvex low-rank matrix estimation has been investigated in \cite{bhojanapalli2016global,ge2016matrix,li2016nonconvex,li2016symmetry}, and it is proven that no spurious local optima, except strict saddle points, exist under suitable coherence conditions and sufficiently large sample size. This implies global convergence from random initialization, provided the algorithm of choice can escape saddle points \cite{ge2015escaping, lee2016gradient, jin2017escape}.

In real-world applications, it is quite typical that measurements may suffer from outliers that need to be addressed carefully. In this paper, we focus on low-rank matrix recovery from random linear measurements in the presence of arbitrary outliers. Specifically, the sensing matrices are generated with i.i.d. standard Gaussian entries. Moreover, we assume that a small number of measurements are corrupted by outliers, possibly in an adversarial fashion with arbitrary amplitudes. This setting generalizes the outlier-free models studied in \cite{recht2010guaranteed,candes2011tight,zheng2015convergent,tu2016low}, where convex and nonconvex approaches have been developed to accurately recover the low-rank matrix. Unfortunately, the vanilla gradient descent algorithm in \cite{zheng2015convergent,tu2016low} is very sensitive in the presence of even a single outlier, as the outliers can perturb the search directions arbitrarily. To handle outliers, existing convex optimization approaches based on sparse and low-rank decompositions can be applied using semidefinite programming \cite{li2017lowrank, wright2013compressive}. However, their computational cost is very expensive. Therefore, our goal in this paper is to develop fast and robust nonconvex alternatives that are globally convergent in a provable manner that can handle a large number of adversarial outliers.

\subsection{Our Approach and Results}

We propose a median-truncation strategy to robustify the gradient descent approach in \cite{zheng2015convergent,tu2016low}, which includes careful modifications on both initialization and local search. As it is widely known, the sample median is a more robust quantity to outliers, compared with the sample mean, which cannot be perturbed arbitrarily unless over half of the samples are outliers \cite{huber2011robust}. Therefore, it becomes an ideal metric to illuminate samples that are likely to be outliers and therefore should be eliminated during the gradient descent updates. Indeed, in a recent work by a subset of current authors \cite{zhang2016provable}, a median-truncated gradient descent algorithm has been proposed to robustify phase retrieval via a nonconvex method, where the sample median was exploited to control both the initialization and the local search step, so that only a subset of samples are selected to contribute to the search direction in each iteration. It was demonstrated that such an approach provably tolerates a constant fraction of outliers at a near-optimal sample complexity up to a logarithmic factor for the phase retrieval problem.

Inspired by \cite{zhang2016provable}, we design a tailored median-truncated gradient descent (median-TGD) algorithm for low-rank matrix recovery, where we carefully set the truncation strategy to mitigate the impact of outliers. Specifically, we develop a truncated spectral method for initialization, where only samples whose absolute values are not too deviated from the sample median are included. Similarly, we develop a truncated gradient update, where only samples whose measurement residuals using the current estimates are not too deviated from the sample median are included. This leads to an adaptive, iteration-varying strategy to mitigate the effects of outliers. In particular, the proposed algorithm does not assume a priori information regarding the outliers in terms of their fraction, distribution nor values. 

Theoretically, we demonstrate that, when initialized in a basin of attraction close to the ground truth, the proposed algorithm converges to the ground truth at a linear rate for the Gaussian measurement model with an order of $nr\log{n}$ measurements, where $n =(n_1+n_2)/2$, even when a constant fraction of the measurements are arbitrarily corrupted, which is nearly optimal up to a logarithmic factor. In addition, the truncated spectral method ensures an initialization in the basin of attraction with an order of $nr^2\log{n}\log^2 r$ measurements when a fraction of $1/\sqrt{r}$ measurements are arbitrarily corrupted. In the case when the rank is a small constant, our results indicate that the proposed algorithm can tolerate a constant fraction of outliers with an order of $n\log n$ measurements, which is much smaller than the size of the matrix.



To obtain the performance guarantees, we establish that the proposed median-truncated gradient satisfies a so-called regularity condition \cite{candes2015phase}, which is a sufficient condition for establishing the linear convergence to the ground truth. Since its debut in \cite{candes2015phase}, the regularity condition has been employed successfully in the analysis of phase retrieval \cite{candes2015phase,zhang2016reshaped,chen2015solving,zhang2016provable}, blind deconvolution \cite{li2016rapid} and low-rank matrix recovery \cite{zheng2015convergent,tu2016low,park2016provable} in the recent literature, to name a few. However, our analysis is significantly more involved due to the fact that the truncation procedure involving low-rank matrices has not been tackled in the previous literature. In particular, we establish a new restricted isometry property (RIP) of the sample median for the class of low-rank matrices, which can be thought as an extension of the RIP for the sample mean in compressed sensing literature \cite{recht2010guaranteed,candes2011tight}. We remark that such a result can be of independent interest, and its establishment is non-trivial due to the nonlinear character of the median operation. Numerical experiments demonstrate the excellent empirical performance of the proposed algorithm for low-rank matrix recovery from outlier-corrupted measurements, which significantly outperforms the existing algorithms that are not resilient to outliers \cite{zheng2015convergent,tu2016low}.

Computationally, because the sample median can be computed in a linear time \cite{tibshirani2008fast}, our median-truncated gradient descent algorithm shares a similar attractive computational cost as \cite{zheng2015convergent, tu2016low}. Specifically, the per-iteration computational complexity of the proposed algorithm is on the order of $\mathcal{O}\left(mn^{2} + 2n^{2}r + 4nr^{2}\right)$, which is linear with respect to $m$, while is quadratic with respect to $n$ and $r$\footnote{In practice, our algorithm can be applied to other measurement ensembles with more structures, such as sparsity, and the computational complexity can be further reduced.}. The proposed algorithm enjoys a lower computational complexity, compared with SVD-based methods \cite{jain2010guaranteed} and alternating minimization \cite{jain2013low}, which usually require more than $\mathcal{O}\left(mn^{2}  + n^{3}\right)$ or $\mathcal{O}\left(mn^{2}  + m^2\right)$ operations during each iteration. 

\subsection{Related Works}

Our work is amid the recent surge of nonconvex approaches for high-dimensional signal estimation, e.g.\ an incomplete and still growing list \cite{candes2015phase,zhao2015nonconvex,zhang2016reshaped,chen2015fast,chen2015solving,zhang2016provable, li2016rapid,zheng2015convergent,tu2016low,park2016provable}. The most closely-related work is on low-rank matrix recovery using random linear measurements  \cite{zheng2015convergent,tu2016low} in the absence of outliers, in the context of which our algorithm can be thought as a robust counterpart. Our particular approach is inspired by the previous work of a subset of current authors \cite{zhang2016provable} on robust phase retrieval, which can be thought as robust recovery of a rank-one positive semidefinite (PSD) matrix using rank-one measurement operators \cite{chen2015exact}. Our model in the current paper differs as we tackle low-rank matrix recovery using random full-rank  measurement operators, and thus non-trivial technical developments are necessary.

It is worth mentioning that other nonconvex approaches for robust low-rank matrix completion have been presented in \cite{yi2016fast,cherapanamjeri2016nearly,zhang2017nonconvex}, where the goal is to separate a low-rank matrix and sparse outliers from a small number of direct or linear measurements of their sum. The approaches typically use thresholding-based truncation for outlier removal and projected gradient descent for low-rank matrix recovery, which are somewhat similar to our approach in terms of different ways to remove outliers. However, this line of work typically requires stronger assumptions on the outliers such as spread-ness conditions, while we allow arbitrary outliers.


\subsection{Paper Organizations and Notations}
The remainder of this paper is organized as follows. Section~\ref{sec_problem} formulates the problems of interest. Section~\ref{sec_main} describes the proposed algorithm and its performance guarantees. Section~\ref{sec_numerical} provides numerical evidence on the superior performance of the proposed algorithm in the presence of outliers. Section~\ref{sec_proof} and Section~\ref{sec_initialization} provide the proofs of the main theoretical results, and finally, we conclude the paper in Section~\ref{sec_conclusions}.

Throughout this paper, we denote vectors by boldface lowercase letters and matrices by boldface uppercase letters. The notations $\bA^{T}$, $\left\Vert \bA \right\Vert$, and $\left\Vert \bA \right\Vert_{F}$ represent the transpose, the spectral norm and the Frobenius norm of a matrix $\bA$, respectively. We denote the $k$th singular value of $\bA$ by $\sigma_k(\bA)$, and the $k$th eigenvalue by $\lambda_{k}(\bA)$. For a vector $\by\in\mathbb{R}^n$, $\mathrm{med}(\by)$ denotes the median of the entries in $\by$, and $|\by|$ denotes the vector that contains its entry-wise absolute values. The $(k,t)$th entry of a matrix $\bA$ is denoted by $\bA_{k,t}$. Besides, the inner product between two matrices $\bA$ and $\bB$ is defined as $\langle \bA, \bB \rangle = \mathrm{Tr}\left(\bB^{T}\bA\right)$, where $\mathrm{Tr}(\cdot)$ denotes the trace of a matrix. The indicator function of an event $\mathcal{E}$ is denoted by $\mathbb{I}_{\mathcal{E}}$, which equals to $1$ if $\mathcal{E}$ is true and $0$ otherwise. In addition, we use $C$, $c_1$, $c_2$, $\ldots$ with different superscripts and subscripts to represent universal constants, whose values may change from line to line.
 

\section{Problem Formulation} \label{sec_problem}

Let $\bM\in\mathbb{R}^{n_{1}\times n_{2}}$ be a rank-$r$ matrix that can be written as 
\begin{equation}
\bM=\bX\bY^T, 
\end{equation}
where $\bX\in\mathbb{R}^{n_{1}\times r}$ and $\bY\in\mathbb{R}^{n_{2}\times r}$ are the low-rank factors of $\bM$. Define the condition number and the {\em average} condition number of $\bM$ as $\kappa = \frac{\sigma_1(\bM)}{\sigma_r(\bM)}$, and $\bar{\kappa} =\frac{\|\bM\|_F}{\sqrt{r}\sigma_r(\bM)}$,
respectively. Clearly, $\bar{\kappa}\leq \kappa$. 


Let $m$ be the number of measurements, and the set of sensing matrices are given as $\{\bA_i\}_{i=1}^m$, where $\bA_{i} \in \mathbb{R}^{n_{1}\times n_{2}}$ is the $i$th sensing matrix. In particular, each entry of $\bA_i$ is generated with i.i.d. standard Gaussian entries, i.e. $\left(\bA_{i}\right)_{k,t}\sim\mathcal{N}\left(0,1\right)$. Denote the index set of corrupted measurements by $\mathcal{S}$, and correspondingly, the index set of clean measurements is given as the complementary set $\mathcal{S}^c$. Mathematically, the measurements $\by=\{y_i\}_{i=1}^m$ are given as
\begin{equation}\label{measurement_model}
y_{i} = \left\{\begin{array}{cc}
\langle \bA_{i}, \bM \rangle , &\ \quad \mbox{if}\quad i \in\cS^c; \\ 
\eta_i  , & \quad \mbox{if}\quad i\in\cS, 
\end{array} \right.
\end{equation}
where $\boldsymbol{\eta}=\{\eta_i\}_{i\in\mathcal{S}}$ is the set of outliers that can take arbitrary values. Denote the cardinality of $\mathcal{S}$ by $|\mathcal{S}| =s\cdot m$, where $0\leq s<1$ is the fraction of outliers. To simplify the notations, we define the linear maps $\mathcal{A}_i(\bM)=\{\mathbb{R}^{n_{1}\times n_{2}} \mapsto \mathbb{R}: \langle \bA_i, \bM\rangle\} $, and $\mathcal{A}(\bM)=\{\mathbb{R}^{n_{1}\times n_{2}}\mapsto \mathbb{R}^m: \{\mathcal{A}_i(\bM)\}_{i=1}^m\}$. 




Instead of recovering $\bM$, we aim to directly recover its low-rank factors $\left(\bX, \bY\right)$ from the corrupted measurements $\by$, without a priori knowledge of statistical distribution or fractions of the outliers, in a computationally efficient and provably accurate manner. It is straightforward to see that for any orthonormal matrix $\bP\in\mathbb{R}^{r\times r}$ and scaler $\gamma \in\mathbb{R}$ such that $\gamma \neq 0$, we have $(\gamma\bX\bP) (\gamma^{-1}\bY\bP )^{T} = \bX\bY^{T}$. To address the scaling ambiguity, we assume $\bX^{T}\bX = \bY^{T}\bY$, and consequently, $\left(\bX, \bY\right)$ can be recovered only up to orthonormal transformations. Hence, we measure the estimation accuracy by taking this into consideration. Let the estimates of low-rank factors be $\bU\in\mathbb{R}^{n_{1}\times r}$ and $\bV\in\mathbb{R}^{n_{2}\times r}$, and define the augmented variables
\begin{equation}\label{augmented_variables}
\bW = \begin{bmatrix}\bU \\ \bV\end{bmatrix} \in\mathbb{R}^{\left(n_{1}+n_{2}\right)\times r}, \quad \bZ = \begin{bmatrix}\bX \\ \bY\end{bmatrix} \in\mathbb{R}^{\left(n_{1}+n_{2}\right)\times r}.
\end{equation}
Then the distance between $\bW$ and $\bZ$ is measured as
 \begin{equation}
\mathrm{dist}\left(\bW, \bZ\right) = \min_{\bP\in\mathbb{R}^{r\times r}, \bP\bP^{T} = \bI} \left\Vert \bW - \bZ\bP \right\Vert_{F}.
\end{equation}
Define
\begin{equation}\label{def_Q}
\bQ_{\left(\bW, \bZ\right)} = \argmin_{\bP\in\mathbb{R}^{r\times r}, \bP\bP^{T}  = \bI} \left\Vert \bW - \bZ\bP \right\Vert_{F},
\end{equation} 
and then $\mathrm{dist}\left(\bW, \bZ\right) = \left\Vert \bW - \bZ\bQ \right\Vert_{F}$, where the subscript of $\bQ$ is dropped for notational simplicity.


\section{Proposed Algorithm and Theoretical Guarantees} \label{sec_main}

Define a quadratic loss function with respect to the $i$th measurement as
\begin{equation}
f_i (\bU, \bV) =\frac{1}{4m}  \left(y_{i} -\mathcal{A}_i(\bU\bV^{T})  \right)^2,
\end{equation}
where $\bU\in\mathbb{R}^{n_{1}\times r}$ and $\bV\in\mathbb{R}^{n_{2}\times r}$. In order to get rid of the impact of outliers, an ideal approach is to minimize an {\em oracle} loss function, expressed as 
\begin{equation}\label{oracle_loss}
h_{\mathrm{oracle}}(\bU, \bV) = f_{\mathrm{oracle}}(\bU, \bV) + g(\bU,\bV) =  \sum_{i\in\mathcal{S}^c} f_i (\bU,\bV)  + \frac{\lambda}{4} \left\Vert \bU^{T}\bU - \bV^{T}\bV \right\Vert_{F}^{2},
\end{equation} 
which aims to minimize the quadratic loss over only the {\em clean} measurements, in addition to a regularization term
\begin{equation}
g(\bU,\bV) = \frac{\lambda}{4} \left\Vert \bU^{T}\bU - \bV^{T}\bV \right\Vert_{F}^{2},
\end{equation}
that aims at balancing the norm of the two factors. Nevertheless, it is impossible to minimize $h_{\mathrm{oracle}}(\bU,\bV)$ directly, since the oracle information regarding the support of outliers is absent. Moreover, the loss function is nonconvex, adding difficulty to its global optimization.

\subsection{Median-Truncated Gradient Descent}
We consider a gradient descent strategy where in each iteration, only a subset of all samples contribute to the search direction:
\begin{equation} \label{update_rule}
\begin{split}
\bU_{t+1} & = \bU_{t} - \frac{\mu_t}{\|\bU_0\|^{2}} \cdot \nabla_{\bU} h_{t}(\bU_{t}, \bV_{t}) ;\\
\bV_{t+1} & = \bV_{t} - \frac{\mu_t}{\|\bV_0\|^{2}} \cdot \nabla_{\bV} h_{t}(\bU_{t}, \bV_{t}) ,
\end{split}
\end{equation}
where $\mu_t$ denotes the step size, and $\bW_0=[\bU_{0}^T, \bV_{0}^T]^T$ is the initialization that will be specified later. Also, denote $\bW_t = [\bU_t^T,\bV_t^T]^T$. In particular, the iteration-varying loss function is given as 
\begin{equation}
h_{t}(\bU, \bV) =   \sum_{i\in  \mathcal{E}^t } f_i(\bU, \bV) + g(\bU,\bV) := f_{\mathrm{tr}} (\bU, \bV) + g(\bU,\bV) , 
\end{equation}
where the set $\cE^t$ varies at each iteration and includes only samples that are likely to be inliers. Denote the residual of the $i$th measurement at the $t$th iteration by
\begin{equation}
r_i^t = y_i   - \mathcal{A}_i (\bU_t\bV_t^T), \quad i=1,2,\ldots,m,
\end{equation}
and $\boldsymbol{r}^t =[r_1^t, r_2^t,\cdots, r_m^t]^T = \by - \mathcal{A}(\bU_t\bV_t^T)$. Then the set $\cE^t $ is defined as
\begin{equation}
\cE^t = \left\{ i \Big| |r_i^t| \leq \alpha_h \cdot \mathrm{med}\{ |\boldsymbol{r}^t| \} \right\},
\end{equation}
where $\alpha_h$ is some small constant. In other words, only samples whose current absolute residuals are not too deviated from the sample median of the absolute residuals are included in the gradient update. As the estimate $(\bU_t, \bV_{t})$ gets more accurate, we expect that the set $\cE^t$ gets closer to the oracle set $\cS^c$, and hence the gradient search is more accurate. Note that the set $\mathcal{E}^t$ varies per iteration, and therefore can adaptively prune the outliers. The gradients of $h_t(\bU,\bV)$ with respect to $\bU$ and $\bV$ are given as
\begin{equation}
\begin{split}
\nabla_{\bU} h_{t}(\bU, \bV) & = \frac{1}{2m}  \sum_{i\in  \mathcal{E}^t } \left[ \mathcal{A}_{i}\left(\bU\bV^{T}\right) - y_{i} \right] \bA_{i}\bV + \lambda \bU \left(\bU^{T}\bU - \bV^{T}\bV\right);\\
\nabla_{\bV} h_{t}(\bU,\bV) & = \frac{1}{2m}  \sum_{i\in  \mathcal{E}^t } \left[ \mathcal{A}_{i}\left(\bU\bV^{T}\right) - y_{i} \right]  \bA_{i}^{T}\bU + \lambda \bV\left(\bV^{T}\bV - \bU^{T}\bU\right).
\end{split}
\end{equation}


\begin{algorithm}[t]

\caption{Median-Truncated Gradient Descent (median-TGD)}\label{alg_minimize_quad_residual}
  
  \textbf{Parameters:} Thresholds $\alpha_{y}$ and $\alpha_{h}$, step size $\mu_{t}$, average condition number bound $\bar{\kappa}_{0}$, and rank $r$.
  
\textbf{Input:} Measurements $\by = \left\{y_{i}\right\}_{i=1}^{m}$, and sensing matrices $\left\{ \bA_{i} \right\}_{i=1}^{m}$.

\textbf{Initialization:} 

1) Set $\by_{1} = \left\{y_{i}\right\}_{i=1}^{m_{1}}$ and $\by_{2} = \left\{y_{i}\right\}_{i=m_1+1}^{m}$, where $m_1=\lceil m/2\rceil$ and $m_2=m-m_1$.

2) Take the rank-$r$ SVD of the matrix 
\begin{equation}\label{equ_initial_matrix}
\bK = \frac{1}{m_{1}} \sum_{i=1}^{m_1} y_{i}\bA_{i} \mathbb{I}_{\left\{ \left\vert y_{i} \right\vert \le \alpha_{y} \cdot \mathrm{med}\left( \left\vert \by_2 \right\vert \right) \right\}},
\end{equation}
which is denoted by $\bC_{L}\boldsymbol{\Sigma}\bC_{R}^{T}:=$ rank-$r$ SVD of $\bK$, where $\bC_{L} \in \mathbb{R}^{n_{1}\times r}$, $\bC_{R} \in \mathbb{R}^{n_{2}\times r}$ and $\boldsymbol{\Sigma} \in \mathbb{R}^{r\times r}$.

3) Initialize $\bU_{0} = \bC_{L} \boldsymbol{\Sigma}^{1/2} $, $\bV_{0} = \bC_{R} \boldsymbol{\Sigma}^{1/2} $.

\textbf{Gradient Loop:} For $t = 0:1:T-1$ do 
\begin{equation*}
\begin{split}
\bU_{t+1}  & = \bU_{t}  -    \frac{\mu_{t}}{\left\Vert \bU_{0} \right\Vert^{2}} \cdot \left[\frac{1}{2m}  \sum_{i=1}^{m} \left( \mathcal{A}_{i}\left(\bU_{t}\bV_{t}^{T}\right) - y_{i} \right) \bA_{i}\bV_{t}  \mathbb{I}_{ \mathcal{E}_{i}^t}  +  \lambda \bU_{t} \left(\bU_{t}^{T}\bU_{t} - \bV_{t}^{T}\bV_{t}\right)\right] ;\\
\bV_{t+1} & = \bV_{t}  -    \frac{\mu_{t}}{\left\Vert \bV_{0} \right\Vert^{2}} \cdot \left[\frac{1}{2m}  \sum_{i=1}^{m} \left( \mathcal{A}_{i}\left(\bU_{t}\bV_{t}^{T}\right) - y_{i} \right)  \bA_{i}^{T}\bU_{t}  \mathbb{I}_{ \mathcal{E}_{i}^t} + \lambda \bV_{t}\left(\bV_{t}^{T}\bV_{t} - \bU_{t}^{T}\bU_{t}\right)\right],
\end{split}
\end{equation*}
where 
\begin{equation*} 
\mathcal{E}_{i}^t = \left\{ \left\vert y_{i} -  \mathcal{A}_i (\bU_{t}\bV_{t}^{T})  \right\vert \le \alpha_{h} \cdot \mathrm{med}\left(  \left\vert \by - \mathcal{A}\left( \bU_{t}\bV_{t}^{T} \right)     \right\vert  \right)  \right\}.
\end{equation*}
\textbf{Output:} $\hat{\bX}= \bU_{T}$, and $\hat{\bY}= \bV_{T}$.
\end{algorithm}

For initialization, we adopt a truncated spectral method, which uses the top singular vectors of a sample-weighted surrogate matrix, where again only the samples whose values do not significantly digress from the sample median are included. To avoid statistical dependence in the theoretical analysis, we split the samples by using the sample median of $m_2$ samples to estimate $\|\bM\|_F$, and then using the rest of the samples to construct the truncated surrogate matrix to perform a spectral initialization. In practice, we find that this sample split is unnecessary, as demonstrated in the numerical simulations.

The details of the proposed algorithm, denoted as median-truncated gradient descent (median-TGD), are provided in Algorithm~\ref{alg_minimize_quad_residual}, where 
the stopping criterion is simply set as reaching a preset maximum number of iterations. In practice, it is also possible to set the stopping criteria by examining the progress between iterations. In sharp contrast to the standard gradient descent approach that exploits all samples in every iteration \cite{zheng2015convergent}, both the initialization and the search directions are controlled more carefully in order to adaptively eliminate outliers, while maintaining a similar low computational cost. 

\subsection{Theoretical Guarantees}

Theorem~\ref{main_theorem} summarizes the performance guarantee of median-TGD in Algorithm~\ref{alg_minimize_quad_residual} for low-rank matrix recovery using Gaussian measurements in the presence of sparse arbitrary outliers, when initialized within a proper neighborhood around the ground truth.

\begin{theorem}[Exact recovery with sparse arbitrary outliers]\label{main_theorem}
Assume the measurement model \eqref{measurement_model}, where each $\bA_i$ is generated with i.i.d. standard Gaussian entries. Suppose that the initialization $\bW_0$ satisfies 
\begin{equation*}
\mathrm{dist}\left(\bW_{0}, \bZ\right) \le \frac{1}{24} \sigma_{r}\left(\bZ\right).
\end{equation*}
Recall that $\kappa = \frac{\sigma_1(\bM)}{\sigma_r(\bM)}$. Set $\alpha_h=6$ and $\lambda =  \mathbb{E}\left[ \xi^{2}  \mathbb{I}_{\left\{  \left\vert \xi \right\vert \le 0.65\alpha_{h}  \right\}} \right]/4 $ with $\xi \sim \mathcal{N}\left(0, 1\right)$. There exist some constants $0<s_0<1$, $c_0>1, c_1>1$ such that with probability at least $1 - e^{ -c_{1}m }$, if $s\le s_{0}$, and $m \ge c_1nr\log{n}$, there exists a constant $\mu \le \frac{1}{740}$, such that with $\mu_{t}\le\mu$, the estimates of median-TGD satisfy
\begin{equation*}
\mathrm{dist}\left(\bW_{t}, \bZ\right)\le \left(1 - \frac{\mu}{10\kappa}\right)^{t/2} \mathrm{dist}\left(\bW_{0}, \bZ\right).
\end{equation*}
\end{theorem}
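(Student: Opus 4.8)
The plan is to follow the regularity-condition framework of \cite{candes2015phase} invoked by the paper: I would show that throughout the basin $\mathrm{dist}(\bW, \bZ) \le \frac{1}{24}\sigma_r(\bZ)$ the median-truncated gradient obeys a regularity condition of the form
\[
\left\langle \nabla h_t(\bW),\, \bW - \bZ\bQ \right\rangle \ge \frac{1}{\alpha}\,\mathrm{dist}^2(\bW, \bZ) + \frac{1}{\beta}\left\Vert \nabla h_t(\bW)\right\Vert_F^2
\]
for suitable constants $\alpha, \beta > 0$, where $\bQ = \bQ_{(\bW,\bZ)}$. Granting this, the contraction is routine: expanding $\mathrm{dist}^2(\bW_{t+1}, \bZ) \le \left\Vert \bW_{t+1} - \bZ\bQ\right\Vert_F^2$ via the update rule \eqref{update_rule}, substituting the regularity inequality, and taking $\mu_t \le \mu \le \tfrac{1}{740}$ small enough to absorb the quadratic $\mu_t^2\left\Vert \nabla h_t\right\Vert_F^2$ term yields $\mathrm{dist}^2(\bW_{t+1}, \bZ) \le \bigl(1 - \tfrac{\mu}{10\kappa}\bigr)\mathrm{dist}^2(\bW_t, \bZ)$, whence the stated bound follows by induction, the factor $\kappa$ entering through the spectrum of $\bZ$. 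The separate scalings by $\left\Vert \bU_0\right\Vert^2$ and $\left\Vert \bV_0\right\Vert^2$ in \eqref{update_rule} are benign, since both are comparable to $\sigma_1(\bM)$ under the initialization hypothesis.

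The crux is verifying the regularity condition for the \emph{truncated} gradient, and the genuinely new ingredient is controlling the sample median $\mathrm{med}\{|\boldsymbol{r}^t|\}$ of the residuals. Because at most a fraction $s_0 < 1/2$ of the entries of $\boldsymbol{r}^t$ are outliers, this median is pinned down by the clean residuals $\langle \bA_i, \bM - \bU_t\bV_t^T\rangle$. I would therefore establish a restricted-isometry-type bound showing that, uniformly over all $\bD$ of the form $\bM - \bU\bV^T$ (hence of rank at most $2r$), the median of $\{|\langle \bA_i, \bD\rangle|\}$ is sandwiched between constant multiples of $\left\Vert \bD\right\Vert_F$. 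Proving this median-RIP is the main obstacle: unlike the mean, the median is a nonlinear, nonsmooth functional of the samples, so the Gaussian RIP cannot be applied directly. Instead I would combine a pointwise concentration bound for the empirical median of $|\langle \bA_i, \bD\rangle|$ around its population value (a fixed constant times $\left\Vert \bD\right\Vert_F$) with a covering-net argument over the rank-$2r$ set and a Lipschitz continuity estimate to upgrade the pointwise bound to a uniform one.

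With the median controlled, the choice $\alpha_h = 6$ ensures that the truncation set $\mathcal{E}^t$ retains clean samples with small residuals while discarding the large-residual outliers, so I would decompose $\nabla h_t = \nabla h_{\mathrm{oracle}} + \Delta$, where $\Delta$ collects the few surviving outlier terms together with the borderline clean terms that are erroneously dropped. Bounding $\left\Vert \Delta\right\Vert_F$ by a small multiple of $\mathrm{dist}(\bW, \bZ)$ then lets the regularity condition for the oracle gradient---inherited from the outlier-free analysis of \cite{tu2016low} using the Gaussian RIP and the balancing action of $g(\bU,\bV)$---survive the perturbation at the cost of only a constant degradation of $\alpha$ and $\beta$. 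Throughout, the estimate-dependence of the truncation threshold forces every RIP and concentration statement to hold \emph{uniformly} over the basin; this uniformity, together with the decoupling of the spectral initialization from the gradient loop afforded by the $m_1/m_2$ sample split, is what dictates the $m \gtrsim nr\log n$ sample complexity.
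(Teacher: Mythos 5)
Your proposal is correct in outline and matches the paper's strategy at every structural level: the regularity-condition framework, the median-RIP for rank-$2r$ matrices proved by pointwise concentration plus a covering net and a Lipschitz/RIP continuity estimate (the paper's Proposition~\ref{prop_sample_median_concen}), the sandwiching of the data-dependent truncation event between two deterministic-threshold events as in \eqref{truncation_bound}, and the separate treatment of clean and outlier contributions with the outlier part controlled via Cauchy--Schwarz and a $\sqrt{s}$ factor. The one genuine difference is the decomposition of the truncated gradient. The paper writes $\nabla f_{\mathrm{tr}} = \nabla^{c} f_{\mathrm{tr}} + \nabla^{o} f_{\mathrm{tr}}$, where $\nabla^{c} f_{\mathrm{tr}}$ is the \emph{truncated} gradient computed as if all measurements were clean (with the truncation event $\tilde{\mathcal{E}}_{i}$ built from clean residuals), and then re-derives its curvature from scratch: the truncation survives in the main term as the constant $\gamma_{1} = \mathbb{E}\bigl[\xi^{2}\mathbb{I}_{\{|\xi|\le 0.65\alpha_{h}\}}\bigr]$, and the regularizer weight $\lambda=\gamma_{1}/4$ in the theorem statement is chosen precisely so that $\frac{\gamma_{1}}{2}\langle \bU\bV^{T}-\bX\bY^{T}, \bH_{1}\bV^{T}+\bU\bH_{2}^{T}\rangle + \langle \nabla g(\bW),\bH\rangle$ collapses to $\lambda\langle \bW\bW^{T}-\bZ\bZ^{T},\bH\bW^{T}\rangle + \lambda\Vert\bH^{T}\bD\bZ\Vert_{F}^{2}$ as in \eqref{positive_side}--\eqref{relation_XY}. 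You instead take the untruncated oracle gradient as the main term and push both the surviving outliers and the erroneously dropped clean samples into a perturbation $\Delta$. This is workable (the dropped clean fraction is governed by $\mathbb{P}(|\xi|>3.9)$ and its second-moment contribution is $1-\gamma_{1}\approx 0.0017$, and the required uniform tail bound is essentially the complement of Proposition~\ref{prop_local_curv_clean_first}); it buys you the ability to cite the outlier-free regularity condition of Tu et al.\ wholesale, at the cost that the exact cancellation with the regularizer no longer holds and the $(1-\gamma_{1})$ mismatch must also be absorbed into $\Delta$.

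Two points need more care than your sketch gives them. First, bounding $\Vert\Delta\Vert_{F}$ by a small multiple of $\mathrm{dist}(\bW,\bZ)$ and pairing it with $\Vert\bH\Vert_{F}$ is too lossy: the positive curvature term in \eqref{RC_lhs_sorted} is proportional to $\Vert\bH\bQ^{T}\bZ^{T}\Vert_{F}^{2}$, so the perturbation must be shown to pair with $\bH$ as a small multiple of $\Vert\bU\bV^{T}-\bX\bY^{T}\Vert_{F}\cdot\Vert\bH_{1}\bV^{T}+\bU\bH_{2}^{T}\Vert_{F}$ (as in Propositions~\ref{prop_local_curv_clean_first}--\ref{prop_outlier_curvature}); a crude $\Vert\Delta\Vert_{F}\Vert\bH\Vert_{F}$ bound forces the tolerable outlier fraction $s_{0}$ to degrade with the condition number, whereas the theorem's $s_{0}$ is a universal constant. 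Second, the set of "erroneously dropped" clean indices is itself data-dependent through the median of the contaminated residuals, so defining and bounding $\Delta$ requires not only the two-sided sandwich \eqref{truncation_bound} but also a sign-splitting step (the paper's set $\mathcal{D}$ in \eqref{def_setD} and Lemma~\ref{lemma_setD}) to decide which of the two deterministic thresholds may be applied to each summand without reversing the inequality. Neither issue is fatal, but both must appear explicitly for the argument to close.
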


Theorem~\ref{main_theorem} suggests that if the initialization $\bW_0$ lies in the basin of attraction, median-TGD converges to the ground truth at a linear rate as long as the number $m$ of measurements is on the order of $nr\log n$, even when a constant fraction of measurements are corrupted arbitrarily. In comparisons, the gradient descent algorithm by Tu et.al. \cite{tu2016low} achieves the same convergence rate in a similar basin of attraction, with an order of $nr$ measurements using outlier-free measurements. Therefore, our algorithm achieves robustness up to a constant fraction of outliers with a slight price of an additional logarithmic factor in the sample complexity.

Theorem~\ref{theorem_initialization} guarantees that the proposed truncated spectral method provides an initialization in the basin of attraction with high probability.
\begin{theorem}\label{theorem_initialization}
Assume the measurement model \eqref{measurement_model}, and $\bar{\kappa} \leq \bar{\kappa}_{0}$. Set $\alpha_{y} = 2\log{(r^{1/4}\bar{\kappa}_{0}^{1/2} + 20)}$. There exist some constants $0<s_1<1$ and $c_2,c_3,c_4>1$ such that with probability at least $1 - n^{-c_2} -\exp(-c_3 m)$, if $s \le s_{1}/ (\sqrt{r}\bar{\kappa})$, and $m \ge c_4 \alpha_y^2 \bar{\kappa}^2 nr^{2}\log n $, we have
\begin{equation*}
\mathrm{dist}\left(\bW_{0}, \bZ\right) \le \frac{1}{24} \sigma_{r}\left(\bZ\right).
\end{equation*}
\end{theorem}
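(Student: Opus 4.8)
The plan is to show that the truncated surrogate matrix $\bK$ in \eqref{equ_initial_matrix} concentrates around a scalar multiple of the ground-truth matrix $\bM=\bX\bY^{T}$, and then to invoke a standard factorization-perturbation lemma (of the type in \cite{tu2016low}) to convert closeness of $\bK$ to $\bM$ into closeness of the spectral factors $\bW_{0}=[\bU_{0}^{T},\bV_{0}^{T}]^{T}$ to $\bZ$. A crucial structural point is the sample split: the truncation level $\tau:=\alpha_{y}\,\mathrm{med}(|\by_{2}|)$ depends on $\by_{2}$ only, so after conditioning on $\by_{2}$ the threshold is a fixed scalar that is statistically independent of the sensing matrices $\{\bA_{i}\}_{i=1}^{m_{1}}$ appearing in $\bK$. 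All concentration statements for $\bK$ are therefore made conditionally on $\by_{2}$.

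First I would control the threshold. Since for a clean index $y_{i}=\langle\bA_{i},\bM\rangle\sim\mathcal{N}(0,\|\bM\|_{F}^{2})$, the sample median of $|\by_{2}|$ restricted to clean entries concentrates around $\Phi^{-1}(3/4)\,\|\bM\|_{F}$; because the median tolerates any corruption of fewer than half the samples, an $s$-fraction of arbitrary outliers with $s\le s_{1}/(\sqrt{r}\bar{\kappa})<1/2$ shifts it by at most a constant factor. This yields, with high probability, two-sided bounds $c_{\ell}\,\alpha_{y}\|\bM\|_{F}\le\tau\le c_{u}\,\alpha_{y}\|\bM\|_{F}$, so the truncation cap is of order $\alpha_{y}\|\bM\|_{F}$.

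Next I would compute the conditional mean of the clean part of $\bK$ and concentrate around it. Decomposing $\bA_{i}=\tfrac{\langle\bA_{i},\bM\rangle}{\|\bM\|_{F}^{2}}\bM+\bA_{i}^{\perp}$ with $\bA_{i}^{\perp}$ independent of $y_{i}$, the perpendicular term averages out, giving for each clean $i$
\[
\mathbb{E}\!\left[y_{i}\bA_{i}\,\mathbb{I}_{\{|y_{i}|\le\tau\}}\mid\by_{2}\right]=\beta\,\bM,\qquad \beta=\mathbb{E}\big[\xi^{2}\,\mathbb{I}_{\{|\xi|\le\tau/\|\bM\|_{F}\}}\big],\ \xi\sim\mathcal{N}(0,1).
\]
Because $\tau/\|\bM\|_{F}\gtrsim\alpha_{y}$ and $\alpha_{y}=2\log(r^{1/4}\bar{\kappa}_{0}^{1/2}+20)$ grows, the truncated second moment $\beta$ is within a super-polynomially small factor of $1$; quantitatively $|1-\beta|$ can be driven below $(r\bar{\kappa})^{-1}$, which is exactly what the logarithmic choice of $\alpha_{y}$ buys. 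The clean sum thus has mean $\tfrac{|\cS^{c}\cap[m_{1}]|}{m_{1}}\beta\bM=(1-O(s))\beta\bM$. The technical heart is to show that the fluctuation $\|\bK_{\mathrm{clean}}-\mathbb{E}\bK_{\mathrm{clean}}\|$ is at most a small multiple of $\sigma_{r}(\bM)/\sqrt{r}$. On the high-probability event $\max_{i}\|\bA_{i}\|\lesssim\sqrt{n}$ the truncated summands $y_{i}\bA_{i}\mathbb{I}_{\{|y_{i}|\le\tau\}}$ are bounded in operator norm by $\tau\sqrt{n}\lesssim\alpha_{y}\|\bM\|_{F}\sqrt{n}$ and have variance proxy of order $\|\bM\|_{F}^{2}n$, so a truncated matrix Bernstein inequality combined with a covering-number argument over rank-$r$ test directions (plus a net over the admissible range of $\tau$, using monotonicity of $\bK$ in the threshold) delivers the bound once $m\gtrsim\alpha_{y}^{2}\bar{\kappa}^{2}nr^{2}\log n$; the extra factor $r$ relative to the outlier-free spectral method \cite{tu2016low} is precisely the price of needing accuracy $\sigma_{r}(\bM)/\sqrt{r}$ rather than $\sigma_{r}(\bM)$ in the subsequent Frobenius-norm perturbation bound. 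The outlier part is controlled purely through the cap: $\|\bK_{\mathrm{out}}\|\le\tfrac{1}{m_{1}}\sum_{i\in\cS\cap[m_{1}]}|\eta_{i}|\mathbb{I}_{\{|\eta_{i}|\le\tau\}}\|\bA_{i}\|$, and an $\varepsilon$-net bound on $\sup_{\|\bu\|=\|\bv\|=1}\sum_{i\in\cS\cap[m_{1}]}|\bu^{T}\bA_{i}\bv|$ shows this is of order $s\,\tau\lesssim s\,\alpha_{y}\sqrt{r}\,\bar{\kappa}\,\sigma_{r}(\bM)$, which is negligible under $s\le s_{1}/(\sqrt{r}\bar{\kappa})$.

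Finally I would assemble the pieces. Combining the three estimates gives operator-norm accuracy $\|\bK-\beta\bM\|\lesssim\sigma_{r}(\bM)/\sqrt{r}$, hence $\|\mathcal{P}_{r}(\bK)-\beta\bM\|_{F}\lesssim\sqrt{r}\,\|\bK-\beta\bM\|\lesssim\sigma_{r}(\bM)$, where $\mathcal{P}_{r}$ denotes the best rank-$r$ approximation. The factorization-perturbation lemma, which bounds $\mathrm{dist}(\bW_{0},\sqrt{\beta}\,\bZ)\lesssim\|\mathcal{P}_{r}(\bK)-\beta\bM\|_{F}/\sqrt{\sigma_{r}(\bM)}$, then yields at most half the target; the residual rescaling error $|\sqrt{\beta}-1|\,\|\bZ\|_{F}\le|\sqrt{\beta}-1|\sqrt{r\bar{\kappa}}\,\sigma_{r}(\bZ)$ absorbs the other half thanks to $|1-\beta|\lesssim(r\bar{\kappa})^{-1}$, and the triangle inequality gives $\mathrm{dist}(\bW_{0},\bZ)\le\tfrac{1}{24}\sigma_{r}(\bZ)$. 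I expect the main obstacle to be the operator-/Frobenius-norm concentration of the truncated surrogate to the demanding accuracy $\sigma_{r}(\bM)/\sqrt{r}$ while the threshold $\tau$ is itself random and data-dependent: handling the unbounded-but-truncated products $y_{i}\bA_{i}$, the dependence on $\tau$, and the covering over rank-$r$ directions simultaneously is what forces the $nr^{2}\log n$ sample complexity and is considerably more delicate than the outlier-free argument.
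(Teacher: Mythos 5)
Your overall architecture matches the paper's proof of Theorem~\ref{theorem_initialization} almost step for step: the sample split so that the threshold $\alpha_y\,\mathrm{med}(|\by_2|)$ is independent of the $\bA_i$'s in $\bK$ (the paper controls it via Lemma~\ref{lemma_samp_quantile_bound_outlier} and Proposition~\ref{prop_sample_median_concen}, giving $0.65\|\bM\|_F\le C_M\le 0.70\|\bM\|_F$); the conditional mean $\gamma_2\bM$ of the clean part with the logarithmic choice of $\alpha_y$ forcing the truncation bias $1-\gamma_2$ below $c/(\sqrt{r}\bar\kappa)$; operator-norm concentration of $\bK$ via matrix Bernstein (Proposition~\ref{bound_Y_noisy}, which needs no net over rank-$r$ directions nor over the threshold range — the conditioning you already invoke makes the latter redundant); and the assembly via Weyl, the rank-$2r$ Frobenius conversion (source of the extra factor of $r$ in the sample complexity, as you correctly diagnose), and Lemmas~\ref{lemma_fronorm_lowbound}--\ref{lemma_fronorm_enhancebound}.

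The genuine gap is in your treatment of the outlier block. You bound it by pulling $|\eta_i|\mathbb{I}_{\{|\eta_i|\le\tau\}}\le\tau$ out and controlling $\sup_{\|\bu\|=\|\bv\|=1}\frac{1}{m_1}\sum_{i\in\cS_1}|\bu^T\bA_i\bv|$, which yields an operator-norm contribution of order $s\tau\asymp s\,\alpha_y\|\bM\|_F = s\,\alpha_y\sqrt{r}\,\bar\kappa\,\sigma_r(\bM)$. This term sits inside the operator-norm error that subsequently gets multiplied by $\sqrt{2r}$ in the Frobenius conversion, so its budget is $\sigma_r(\bM)/\sqrt{r}$, not $\sigma_r(\bM)$; under the theorem's hypothesis $s\le s_1/(\sqrt{r}\bar\kappa)$ your bound gives $s\tau\asymp\alpha_y\,\sigma_r(\bM)$, which overshoots the budget by a factor of $\alpha_y\sqrt{r}$ — it is \emph{not} negligible, and your argument would only tolerate $s\lesssim 1/(\alpha_y r\bar\kappa)$. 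The missing idea is that taking absolute values $|\bu^T\bA_i\bv|$ discards the cancellation coming from $\mathbb{E}[\bA_i]=\boldsymbol{0}$: the paper instead treats $\bT_i=\eta_i\bA_i\mathbb{I}_{\{|\eta_i|\le\alpha_yC_M\}}$ as zero-mean (the $\eta_i$ being independent of $\bA_i$) and applies the Orlicz-norm matrix Bernstein inequality (Lemma~\ref{lemma_orlicz_matrix_bernstein}), so the outlier contribution is $\lesssim\alpha_y\|\bM\|_F\sqrt{s\,n\log n/m}$ and is absorbed into the same $\epsilon$ as the clean fluctuation, leaving the constraint on $s$ to come only from the benign scaling term $s_1\gamma_2\|\bM\|_F$. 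You need this cancellation (or an equivalent second-moment argument) to reach the stated outlier fraction $s\le s_1/(\sqrt{r}\bar\kappa)$.
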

Theorem~\ref{theorem_initialization} suggests that the proposed initialization scheme is guaranteed to obtain a valid initialization in the basin of attraction with an order of $nr^2\log{n}\log^2r$ measurements when a fraction of $1/\sqrt{r}$ measurements are arbitrarily corrupted, assuming the average condition number $\bar{\kappa}$ is a small constant. In comparisons, in the outlier-free setting, Tu et.al. \cite{tu2016low} requires an order of $nr^2\kappa^2$ measurements for a one-step spectral initialization, which is closest to our scheme. Therefore, our initialization achieves robustness to a $1/\sqrt{r}$ fraction of outliers at a slight price of additional logarithmic factors in the sample complexity. It is worthwhile to note that in the absence of outliers, Tu et.al. \cite{tu2016low} was able to further reduce the sample complexity of initialization to an order of $nr$ by running multiple iterations of projected gradient descent. However, it is not clear whether such an iterative scheme can be generalized to the setting with outliers in our paper. 

Finally, we note that the parameter bounds in all theorems, including $\alpha_h$, $\alpha_y$ and $\mu$, are not optimized for performance, but mainly selected to establish the theoretical guarantees.

\section{Numerical Experiments} \label{sec_numerical}

In this section, we evaluate the performance of the proposed median-TGD algorithm via conducting several numerical experiments. As mentioned earlier, for the initialization step, in practice we find it is not necessary to split the samples into two parts. Therefore, the matrix in \eqref{equ_initial_matrix} is changed instead to
\begin{equation}\label{equ_initial_matrix_algorithm}
\bY = \frac{1}{m} \sum_{i=1}^{m} y_{i}\bA_{i} \mathbb{I}_{\left\{ \left\vert y_{i} \right\vert \le \alpha_{y} \cdot \mathrm{med}\left( \left\vert \by \right\vert \right) \right\}}.
\end{equation}
In particular, we check the trade-offs between the number of measurements, the rank and the fraction of outliers for accurate low-rank matrix recovery, and compare against the algorithm in \cite{tu2016low}, referred to as the vanilla gradient descent algorithm (vanilla-GD), to demonstrate the performance improvements in the presence of outliers due to median truncations. 

Let $n_{1} = 150$, $n_{2} = 120$. We randomly generate a rank-$r$ matrix as $\bM = \bX\bY^{T}$, where  both $\bX\in\mathbb{R}^{n_{1}\times r}$ and $\bY\in\mathbb{R}^{n_{2}\times r}$ are composed of i.i.d. standard Gaussian variables. The outliers are i.i.d. randomly generated following $\mathcal{N}(0, 10^{4}\left\Vert \bM \right\Vert_{F}^{2} )$. We set $\alpha_{y} = 12$ and $\alpha_{h} = 6$, and pick a constant step size $\mu_{t} = 0.4$. In all experiments, the maximum number of iterations for median-TGD algorithm is set as $T = 10^{4}$ to guarantee convergence. Moreover, let $(\hat{\bX}, \hat{\bY})$ be the solution to the algorithm under examination, and the recovered low-rank matrix is given as $\hat{\bM} =\hat{\bX}\hat{\bY}{}^{T}$. Then, the normalized estimate error is defined as $ \Vert \hat{\bM}  - \bM  \Vert_{F} / \left\Vert \bM \right\Vert_{F}$.

\subsection{Phase Transitions}

We first examine the phase transitions of median-TGD algorithm with respect to the number of measurements, the rank and the percent of outliers. Fix the percent of outliers as $s = 5 \%$. Given a pair of $r$ and $m$, a ground truth $(\bX, \bY)$ is generated composed of i.i.d. standard Gaussian variables. Multiple Monte Carlo trials are carried out, and each trial is deemed a success if the normalized estimate error is less than $10^{-6}$. Fig.~\ref{fig_Rectangle_est_random_matrix_outlier_fixnoutlier_changemr_outlier005} (a) shows the success rates of median-TGD, averaged over $10$ trials, with respect to the number of measurements and the rank, where the red line shows the theoretical limit defined as $r = (1-s)m/(n_{1}+n_{2})$ by a heuristic count of the degrees of freedom. It can be seen that the required number of measurements for a successful matrix recovery scales linearly with the rank $r$, and the transition is sharp. We next examine the success rates of median-TGD with respect to the percent of outliers and the rank. Fix $m = 2700$. Under the same setup as Fig.~\ref{fig_Rectangle_est_random_matrix_outlier_fixnoutlier_changemr_outlier005} (a), Fig.~\ref{fig_Rectangle_est_random_matrix_outlier_fixnoutlier_changemr_outlier005} (b) shows the success rate of median-TGD, averaged over $10$ trials, with respect to the rank and the percent of outliers. The performance of median-TGD degenerates smoothly with the increase of the percent of outliers. Similarly, the red line shows the theoretical limit as a comparison.
\begin{figure*}[htp]
\begin{tabular}{cc}
\includegraphics[width=0.5\textwidth]{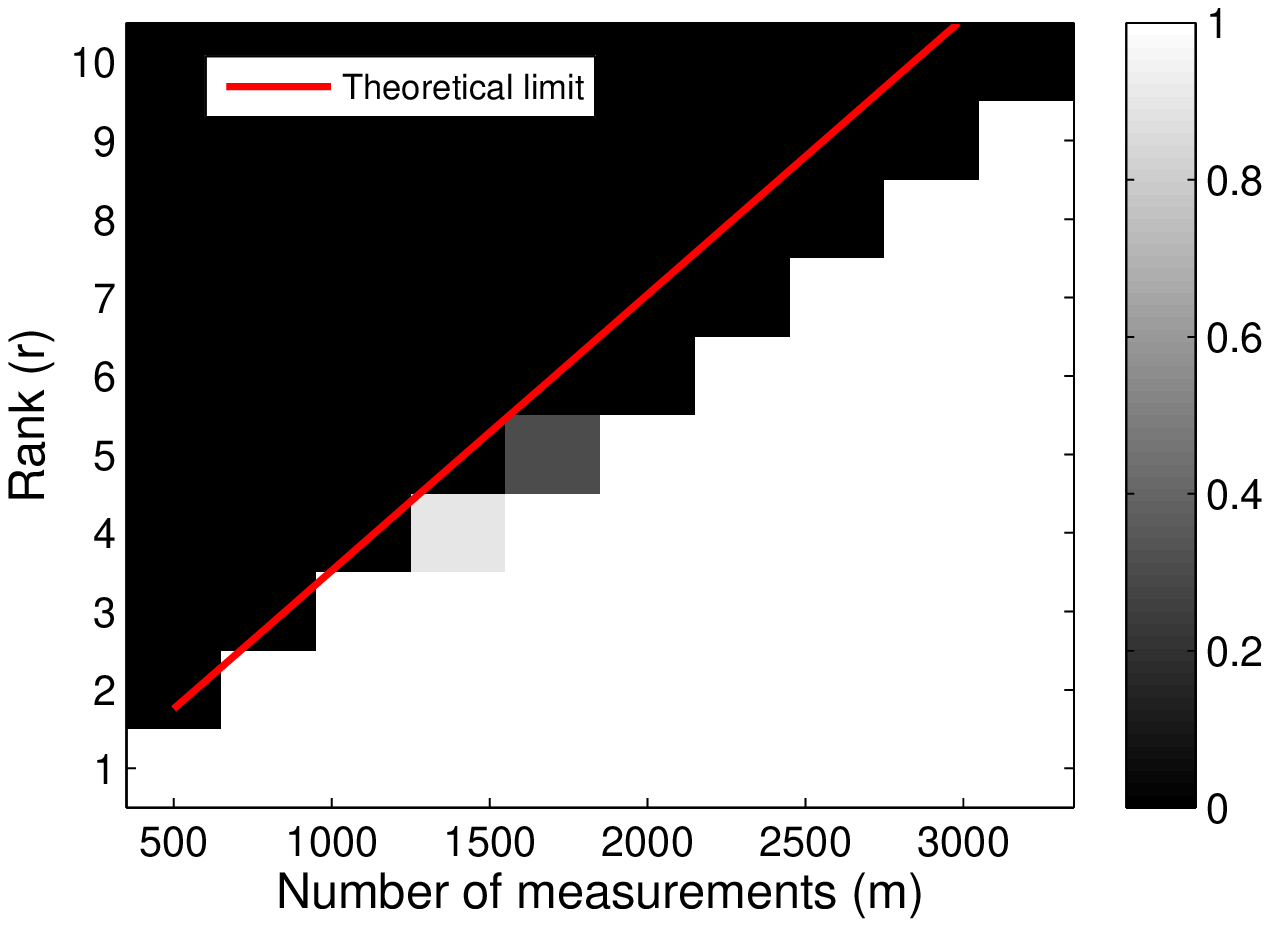} &\includegraphics[width=0.5\textwidth]{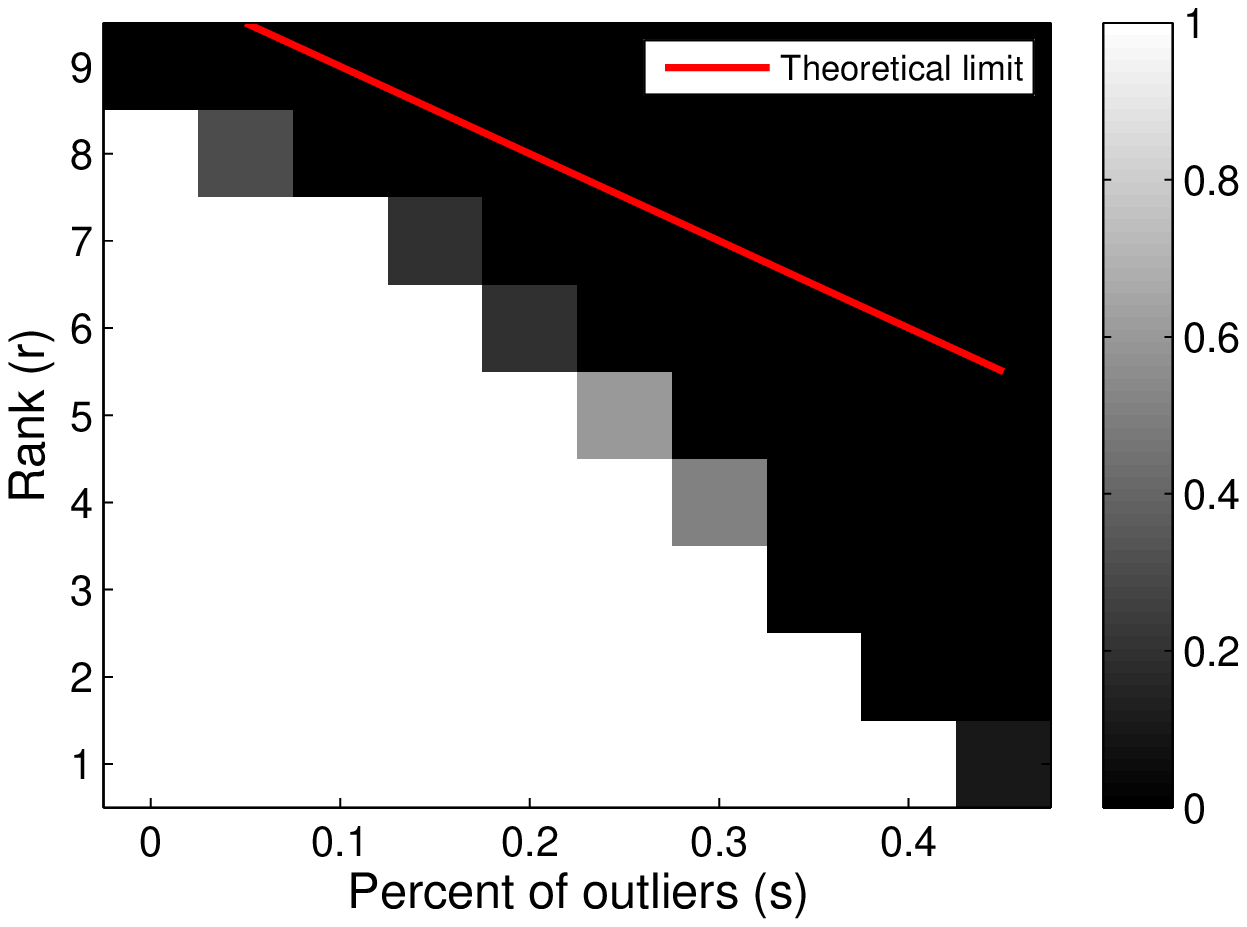}\\
(a) & (b)
\end{tabular}
\caption{Phase transitions of low-rank matrix recovery when $n_1=150$ and $n_2=120$. (a) Success rate with respect to the number of measurements and the rank, when 5$\%$ of measurements are corrupted by outliers. (b) Success rate with respect to the percent of outliers and the rank, when $m = 2700$.}
\label{fig_Rectangle_est_random_matrix_outlier_fixnoutlier_changemr_outlier005}
\end{figure*}

\subsection{Stability to Additional Bounded Noise}

We next examine the performance of median-TGD when the measurements are contaminated by both sparse outliers and dense noise. Here, the measurements are rewritten as
\begin{equation*}
y_{i} = \left\{\begin{array}{cc}
\langle \bA_{i}, \bM \rangle + w_{i} , &\quad \mbox{if}\quad i \in\cS^c; \\
\eta_i + w_{i},  & \quad \mbox{if}\quad i\in\cS,
\end{array} \right.
\end{equation*}
where $w_{i}$, for $i=1,2,\dots,m$, denote the additional bounded noise. 
\begin{figure}[ht]
\centering
\includegraphics[width=0.55\textwidth]{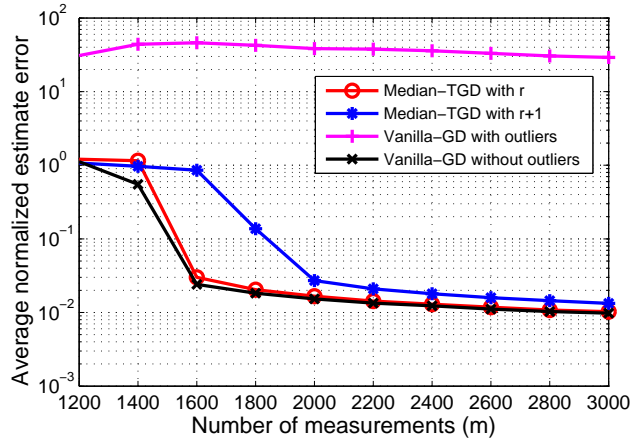}
\caption{Comparisons of average normalized estimate errors between median-TGD  and vanilla-GD in \cite{tu2016low} with respect to the number of measurements, with 5$\%$ of measurements corrupted by outliers and additional bounded noise, when $n_1=150$, $n_2=120$, and $r = 5$.}
\label{fig_Rectangle_est_random_matrix_outlier_noisy_comparison}
\end{figure}
Fix $r = 5$ and $s = 5 \%$. The dense noise is generated with i.i.d. random entries following $0.05\sigma_{5}\left(\bM\right) \cdot \mathcal{U}\left[-1, 1 \right]$. Fig.~\ref{fig_Rectangle_est_random_matrix_outlier_noisy_comparison} depicts the average normalized reconstruction errors with respect to the number of measurements using both median-TGD and vanilla-GD \cite{tu2016low}, where vanilla-GD is always given the true rank information, i.e. $r=5$. The performance of median-TGD is comparable to that of vanilla-GD using outlier-free measurements, which cannot produce reliable estimates when the measurements are corrupted by outliers. Therefore, median-TGD can handle outliers in a much more robust manner. Moreover, the performance of median-GD is stable as long as an upper bound of the true rank is used.

We next compare the convergence rates of median-TGD and vanilla-GD under various outlier settings, by fixing $m=2400$ while keeping the other settings the same as Fig.~\ref{fig_Rectangle_est_random_matrix_outlier_noisy_comparison}.
Fig.~\ref{fig_Rectangle_est_random_matrix_outlier_noisy_convergence_rate} shows the normalized estimate error with respect to the number of iterations of median-TGD  and vanilla-GD with no outliers, $1\%$ of outliers, and $10\%$ of outliers, respectively. In the outlier-free case, both algorithms have comparable convergence rates. However, even with a few outliers, vanilla-GD suffers from a dramatical performance degradation, while median-TGD is robust against outliers and can still converge to an accurate estimate. 
\begin{figure*}[htp]
\begin{tabular}{ccc}
\hspace{-0.15in}\includegraphics[width=0.35\textwidth]{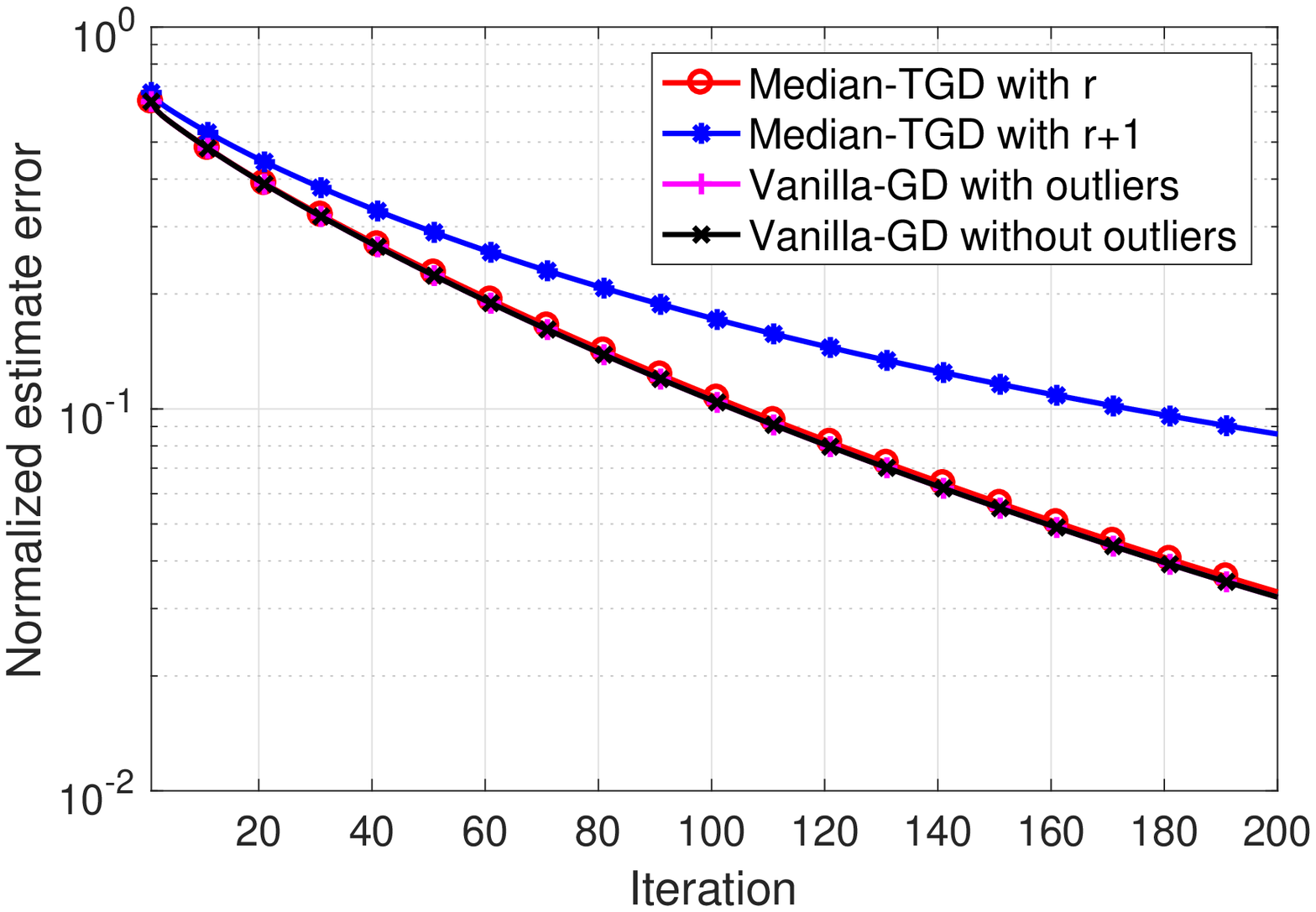} &\hspace{-0.2in}\includegraphics[width=0.35\textwidth]{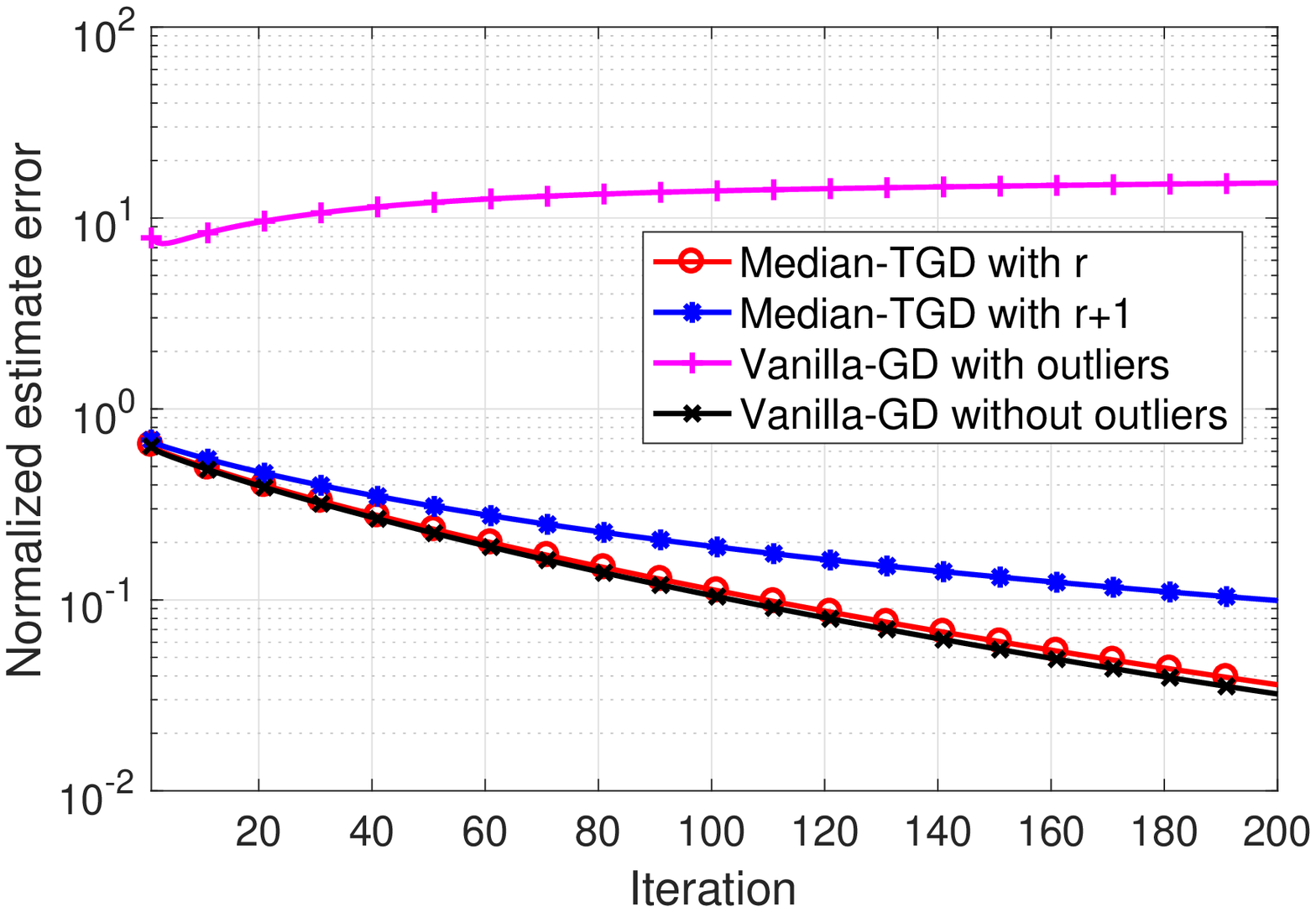}&\hspace{-0.2in}\includegraphics[width=0.35\textwidth]{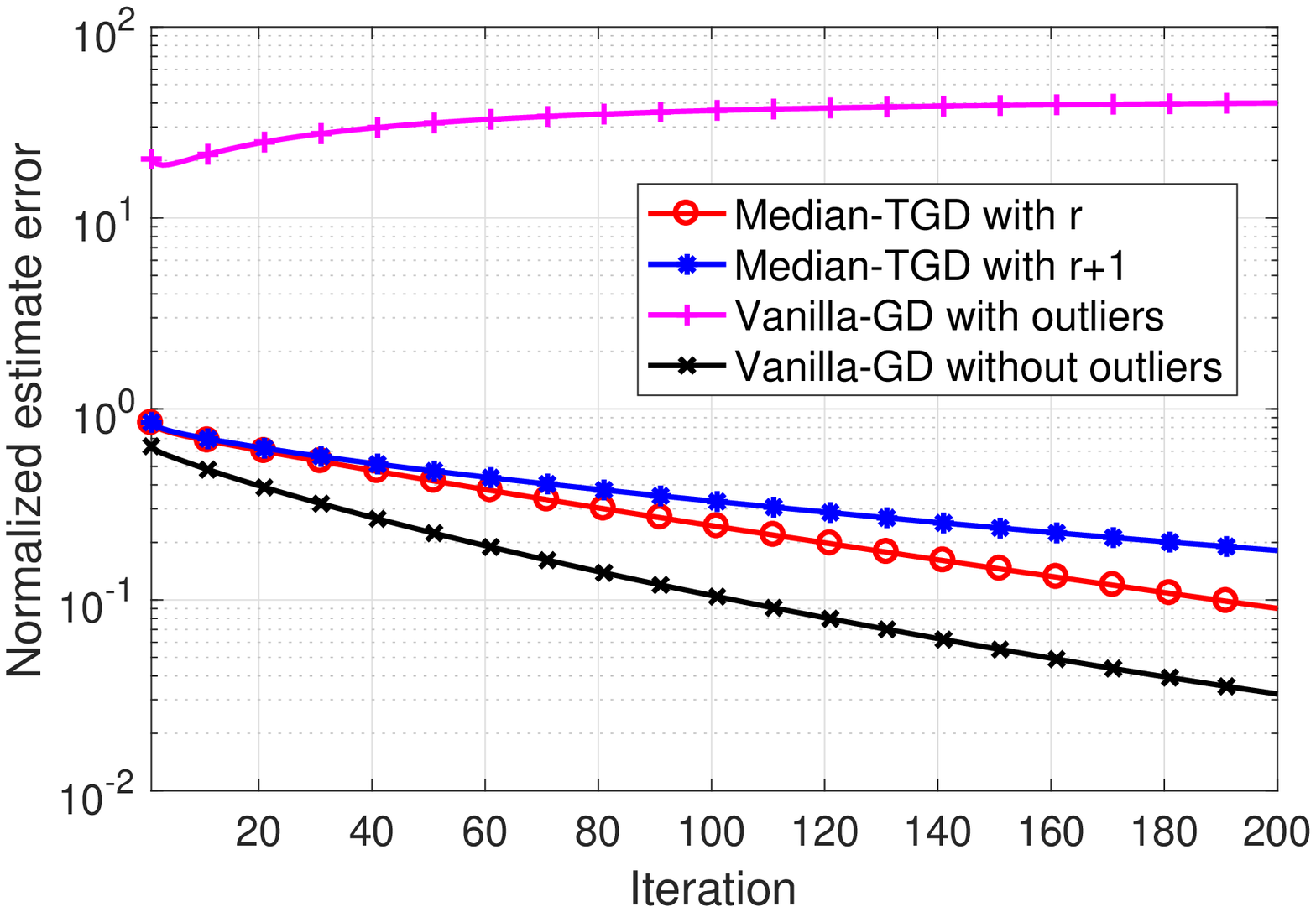}\\
\hspace{-0.15in} (a) $s=0$ &\hspace{-0.2in} (b) $s=1\%$ & \hspace{-0.2in} (c) $s=10\%$
\end{tabular}
\caption{The comparisons of convergence rates between median-TGD and vanilla-GD in different outlier-corruption scenarios, when $m = 2400$, $n_1=150$ and $n_2=120$.}
\label{fig_Rectangle_est_random_matrix_outlier_noisy_convergence_rate}
\end{figure*}

\section{Proof of Linear Convergence}\label{sec_proof}

In this section, we present the proof of Theorem~\ref{main_theorem}. Section~\ref{sec_RIP} first establishes an RIP-like property for the {\em median} of random linear measurements of low-rank matrices, which can be of independent interest. Section~\ref{sec_RC} describes the regularity condition (RC), which is used to certify the linear convergence of the proposed algorithm. Section~\ref{sec_certifyRC} proves several properties of the truncated gradient which are then used in Section~\ref{sec_certifyRC_outlier} to prove the RC and finish the proof.

\subsection{Concentration Property of Sample Median}\label{sec_RIP}

To begin, we define below the quantile function of a population distribution and its corresponding sample version.

\begin{definition}[Generalized quantile function] 
Let $0<\tau<1$. For a cumulative distribution function (CDF) $F(x)$, the generalized quantile function is defined as 
\begin{equation*}
F^{-1}\left(\tau\right) = \inf{\left\{x\in\mathbb{R}: F\left(x\right)\ge \tau \right\}}.
\end{equation*}
For simplicity, denote $\theta_{\tau}\left(F\right) = F^{-1}\left(\tau\right) $ as the $\tau$-quantile of $F$. Moreover, for a sample collection $\by = \left\{y_{i}\right\}_{i=1}^{m}$, the sample $\tau$-quantile $\theta_{\tau}\left(\by\right)$ means $\theta_{\tau} (\hat{F} )$, where $\hat{F}$ is the empirical distribution of the samples $\by$. Specifically, $\mathrm{med}\left(\by\right) = \theta_{1/2}\left(\by\right)$.
\end{definition}

We establish a RIP-style concentration property for the sample median used in the truncation indicator of gradient descent, which provides theoretical footings on the success of the proposed algorithm. The concentration property of the sample $p$-quantile function $\theta_{p}\left(  \left\vert  \mathcal{A}\left(\bG \right) \right\vert  \right)$ of all rank-$2r$ matrices $\bG$ is formulated in the following proposition, of which the proof is shown in Appendix~\ref{proof_prop_sample_median_concen}.

\begin{proposition}\label{prop_sample_median_concen}
Fix $\epsilon \in \left(0,1\right)$. If $m \ge c_{0} \left(\epsilon^{-2} \log{\epsilon^{-1}}\right) nr \log{n}$ for some large enough constant $c_{0}$, then with probability at least $1 - c_{1}\exp{\left( -c_{2}m\epsilon^{2} \right)}$, where $c_{1}$ and $c_{2}$ are some constants, we have for all rank-$2r$ matrices $\bG\in\mathbb{R}^{n_{1}\times n_{2}}$,
\begin{align*}
& \theta_{\frac{1}{2}}\left(  \left\vert  \mathcal{A}\left(\bG \right)    \right\vert  \right)  \in \left[0.6745 - \epsilon, 0.6745 + \epsilon \right] \left\Vert \bG \right\Vert_{F}, \\
& \theta_{0.49}\left(  \left\vert  \mathcal{A}\left(\bG  \right)  \right\vert  \right)  \in \left[ 0.6588 - \epsilon, 0.6588 + \epsilon \right] \left\Vert \bG \right\Vert_{F}, \\
& \theta_{0.51}\left(  \left\vert  \mathcal{A}\left(\bG \right)  \right\vert  \right)  \in \left[0.6903 - \epsilon,  0.6903 + \epsilon \right] \left\Vert \bG \right\Vert_{F}.
\end{align*}
\end{proposition}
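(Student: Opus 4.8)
The plan is to reduce the statement about sample quantiles to a uniform concentration statement about empirical averages of indicator functions, and then to control that empirical process over the set of rank-$2r$ matrices by a smoothing-plus-covering argument. First, by positive homogeneity of the quantile map — $\theta_\tau(|\mathcal{A}(c\bG)|) = c\,\theta_\tau(|\mathcal{A}(\bG)|)$ for $c>0$ — it suffices to establish the three inclusions for all rank-$2r$ matrices $\bG$ with $\|\bG\|_F = 1$. For such a $\bG$, each entry of $\bA_i$ being i.i.d.\ standard Gaussian forces $\langle \bA_i, \bG\rangle \sim \mathcal{N}(0,1)$, so $|\langle\bA_i,\bG\rangle|$ is half-normal; its population $\tau$-quantile is $c_\tau := \Phi^{-1}\bigl((1+\tau)/2\bigr)$, which evaluates to $0.6745$, $0.6588$, $0.6903$ at $\tau = \tfrac12, 0.49, 0.51$ respectively. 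The half-normal density $\sqrt{2/\pi}\,e^{-x^2/2}$ is bounded away from zero, say by $\rho>0$, on a neighborhood of these three values.

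Next I would exploit the order-reversing duality between quantiles and CDF values. Writing $\hat F_{\bG}(x) = \tfrac{1}{m}\sum_{i=1}^m \mathbb{I}_{\{|\langle\bA_i,\bG\rangle|\le x\}}$ for the empirical CDF, the sample quantile obeys $\theta_\tau(|\mathcal{A}(\bG)|) \le x \iff \hat F_{\bG}(x) \ge \tau$. Hence, to pin $\theta_\tau$ into $[c_\tau - \epsilon, c_\tau + \epsilon]$ it is enough to show $\hat F_{\bG}(c_\tau - \epsilon) < \tau \le \hat F_{\bG}(c_\tau + \epsilon)$. Since the population CDF $F(x)=2\Phi(x)-1$ has slope at least $\rho$ near $c_\tau$, one has $F(c_\tau+\epsilon)\ge \tau + \rho\epsilon$ and $F(c_\tau-\epsilon)\le \tau-\rho\epsilon$ for small $\epsilon$. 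Thus the proposition follows once I establish the uniform deviation bound
\[
\sup_{\substack{\mathrm{rank}(\bG)\le 2r \\ \|\bG\|_F = 1}} \ \sup_{x\ge 0} \ \bigl| \hat F_{\bG}(x) - F(x) \bigr| \ < \ \rho\epsilon ,
\]
with the claimed probability (it in fact suffices to control $x$ at the finitely many shifted thresholds $c_\tau \pm \epsilon$).

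The core is therefore a uniform Bernstein-type control of the indicator empirical process indexed by the low-rank set, which I would carry out in three moves. \emph{(i) Smoothing.} Since $\mathbb{I}_{\{|z|\le x\}}$ is discontinuous in $\bG$, I would sandwich it between Lipschitz surrogates $\phi_{x,\delta'}$ that equal $1$ on $\{|z|\le x\}$, vanish outside $\{|z|\le x+\delta'\}$, and interpolate linearly; these are $(1/\delta')$-Lipschitz in $z$, and the gap between surrogate and indicator is bounded in expectation by the transition-band mass $\le C\delta'$ via the bounded half-normal density. \emph{(ii) Pointwise concentration.} For fixed $\bG$ the surrogate averages are means of $[0,1]$-valued variables, so Hoeffding gives deviation $\le t$ with failure probability $\le 2\exp(-2mt^2)$. \emph{(iii) Covering.} The set $\{\bG : \mathrm{rank}(\bG)\le 2r,\ \|\bG\|_F = 1\}$ admits a $\delta$-net in Frobenius norm of cardinality at most $(C/\delta)^{(n_1+n_2+1)\cdot 2r}$, so $\log|\mathrm{net}| \lesssim nr\log(1/\delta)$. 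Transferring control from the net to an arbitrary $\bG$ uses the Lipschitz property: replacing $\bG$ by its net point $\bG'$ perturbs each argument by $\langle\bA_i,\bG-\bG'\rangle$, whose magnitudes are absorbed by a uniform bound on $\max_i|\langle\bA_i,\bG-\bG'\rangle|$ (of order $\delta\sqrt{\log m}$), which is where the $\log n$ factor enters once $\delta$ is taken polynomially small in $n$. A union bound over the net, over the three quantile levels, and over the $\pm$ thresholds then balances the entropy $nr\log(1/\delta)$ against the exponent $m\epsilon^2$, producing both the sample complexity $m \gtrsim \epsilon^{-2}\log(\epsilon^{-1})\,nr\log n$ and the failure probability $c_1\exp(-c_2 m\epsilon^2)$.

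The step I expect to be the main obstacle is moves (i)–(iii) taken together — controlling the \emph{discontinuous} indicator process over the low-rank manifold. The difficulty is that hard thresholding is not Lipschitz in $\bG$, so one must simultaneously (a) choose the smoothing width $\delta'$ small enough that the surrogate bias $C\delta'$ is absorbed into the target accuracy $\rho\epsilon$, (b) choose the net resolution $\delta$ fine enough that the transfer error $\max_i|\langle\bA_i,\bG-\bG'\rangle|/\delta'$ is negligible, and (c) keep $\log(1/\delta)$ and $\log(1/\delta')$ from inflating the entropy beyond order $nr\log n$. Getting these scales to cooperate, together with invoking the lower bound $\rho$ on the half-normal density to convert CDF accuracy into quantile accuracy, is the delicate heart of the argument; the remaining reductions (homogeneity, quantile–CDF duality, Hoeffding) are routine.
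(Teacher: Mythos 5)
Your proposal is correct in its overall skeleton and reaches the same conclusion, but it implements the two technical ingredients differently from the paper. Both arguments reduce to unit-Frobenius-norm matrices by homogeneity, prove concentration of the sample quantile for a fixed $\bG_0$ (using that $\langle \bA_i,\bG_0\rangle\sim\mathcal{N}(0,1)$ so $|\langle\bA_i,\bG_0\rangle|$ is half-normal with density bounded below near the target quantiles), take a union bound over a $\tau$-net of the rank-$2r$ unit sphere of cardinality $(9/\tau)^{O(nr)}$, and balance the entropy against $m\epsilon^2$. The paper, however, handles the fixed-$\bG_0$ step by citing a quantile-concentration lemma (Lemma~\ref{lemma_quantile_func_bound}) as a black box — your quantile–CDF duality plus Hoeffding is essentially a rederivation of that lemma — and, more importantly, handles the net-to-arbitrary-$\bG$ transfer via the exact $\ell_\infty$-perturbation property of order statistics (Lemma~\ref{lem_orderstats}): $\left\vert \mathrm{med}(|\mathcal{A}(\bG_0)|) - \mathrm{med}(|\mathcal{A}(\bG)|)\right\vert \le \max_i |\langle\bA_i,\bG_0-\bG\rangle| \le \tau\max_i\|\bA_i\|_F$, with $\max_i\|\bA_i\|_F\le 2\sqrt{n(n+m)}$ from Lemma~\ref{lemma_max_fnorm}. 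Because the sample quantile is already $1$-Lipschitz in the data vector in $\ell_\infty$, the paper needs no smoothing of indicators at all for this proposition; your route pays for working at the empirical-CDF level by having to introduce the Lipschitz surrogates $\phi_{x,\delta'}$ and juggle the two scales $\delta$ and $\delta'$. What your route buys is a self-contained argument that does not lean on the external quantile lemma, and a uniform CDF deviation bound that is slightly more information than the quantile statement itself; what the paper's route buys is a shorter transfer step with one fewer smoothing parameter.

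One small inaccuracy to fix: in step (iii) you assert $\max_i|\langle\bA_i,\bG-\bG'\rangle|$ is of order $\delta\sqrt{\log m}$. That Gaussian-maxima rate would hold only for a \emph{fixed} direction $\bG-\bG'$ independent of the $\bA_i$; since $\bG$ ranges over the whole low-rank sphere, the correct uniform bound is $\delta\max_i\|\bA_i\|_F = O\bigl(\delta\sqrt{n(n+m)}\bigr)$. Your subsequent choice of $\delta$ polynomially small in $n$ (which is where your $\log n$ entropy factor comes from) is consistent with the correct bound rather than with $\delta\sqrt{\log m}$, so the conclusion stands, but the parenthetical should be corrected.
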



Proposition~\ref{prop_sample_median_concen} suggests that as long as $m$ is on the order of $nr\log n$, the sample median $\theta_{\frac{1}{2}}\left(  \left\vert  \mathcal{A}\left(\bG \right)  \right\vert  \right) $ concentrates around a scaled $\left\Vert \bG \right\Vert_{F}$ for all rank-$2r$ matrices $\bG$, which resembles the matrix RIP in \cite{recht2010guaranteed}. Based on Proposition~\ref{prop_sample_median_concen}, provided that $m \ge c_{0}  nr \log{n}$ for some large enough constant $c_{0}$, setting $\bG= \bX\bY^{T}  -\bU\bV^{T}$, we have
\begin{equation}\label{median_concentration}
 \theta_{0.49}, \theta_{\frac{1}{2}}, \theta_{0.51}\left(  \left\vert  \mathcal{A}\left(\bX\bY^{T} \right)  - \mathcal{A}\left( \bU\bV^{T} \right)  \right\vert  \right) \in [0.65, 0.70] \left\Vert \bX\bY^{T} - \bU\bV^{T} \right\Vert_{F}
\end{equation}
holds with probability at least $1 - c_{1}\exp{\left( -c_{2}m \right)}$ for all $\bU\in\mathbb{R}^{n_{1}\times r}$, $\bV\in\mathbb{R}^{n_{2}\times r}$, $\bX\in\mathbb{R}^{n_{1}\times r}$, and $\bY\in\mathbb{R}^{n_{2}\times r}$. On the other end, due to Lemma~\ref{lemma_samp_quantile_bound_outlier}, we have 
\begin{equation*}
 \theta_{\frac{1}{2} - s}\left(  \left\vert   \mathcal{A}\left( \bX\bY^{T} \right) - \mathcal{A}\left( \bU\bV^{T} \right) \right\vert \right) \le \mathrm{med}\left( \left\vert \by - \mathcal{A}\left(\bU\bV^{T} \right) \right\vert \right) \le  \theta_{\frac{1}{2}+s}\left(  \left\vert   \mathcal{A}\left( \bX\bY^{T} \right) - \mathcal{A}\left( \bU\bV^{T} \right) \right\vert \right).
\end{equation*}
As a result, when the fraction of corruption satisfies $s\le 0.01$, the above equation together with \eqref{median_concentration} yields 
\begin{equation}\label{equ_sample_median_concen}
0.65 \left\Vert \bX\bY^{T} - \bU\bV^{T} \right\Vert_{F} \le \mathrm{med}\left( \left\vert \by - \mathcal{A}\left(\bU\bV^{T} \right) \right\vert \right) \le 0.70 \left\Vert \bX\bY^{T} - \bU\bV^{T} \right\Vert_{F}.
\end{equation}
Therefore, an important consequence is that the truncation event $\mathcal{E}_i$ satisfies  
\begin{equation}\label{truncation_bound}
\mathbb{I}_{ \left\{ \left\vert  \langle \bA_{i}, \bU\bV^{T} \rangle - y_i  \right\vert  \le 0.65 \alpha_{h}  \left\Vert \bU\bV^{T}- \bX\bY^{T} \right\Vert_{F}   \right\} }  \le   \mathbb{I}_{ \mathcal{E}_{i}} \le \mathbb{I}_{ \left\{ \left\vert  \langle \bA_{i}, \bU\bV^{T} \rangle - y_i  \right\vert \le 0.70 \alpha_{h}  \left\Vert \bU\bV^{T}- \bX\bY^{T} \right\Vert_{F}   \right\} }.
\end{equation} 


\subsection{Regularity Condition} \label{sec_RC}

We first introduce the so-called Regularity Condition (RC) \cite{candes2015phase, zheng2015convergent, tu2016low} that characterizes the benign curvature of the loss function around the ground truth, and guarantees the linear convergence of gradient descent to the ground truth. 

We first rewrite the loss function in terms of the augmented variables in \eqref{augmented_variables}. Denote the matrix $\bB_{i} = \begin{bmatrix} \boldsymbol{0} & \frac{1}{2}\bA_{i} \\  \frac{1}{2}\bA_{i}^{T}  & \boldsymbol{0} \end{bmatrix}$, and define $\mathcal{B}_{i}\left(\bW\bW^{T}\right) := \langle \bB_{i}, \bW\bW^{T} \rangle$ and $\mathcal{B}\left(\bW\bW^{T}\right) \allowbreak := \left\{ \mathcal{B}_{i}\left(\bW\bW^{T}\right) \right\}_{i=1}^{m}$, then we can have the equivalent representation 
\begin{equation}
\mathcal{B}_{i}\left(\bW\bW^{T}\right) = \langle \bB_{i}, \bW\bW^{T} \rangle = \langle \bA_{i}, \bU\bV^{T} \rangle = \mathcal{A}_{i}\left(\bU\bV^{T}\right).
\end{equation}
The regularizer can be rewritten as
\begin{equation}
g\left(\bW\right) = \frac{\lambda}{4} \left\Vert \bW^{T}\bD \bW\right\Vert_{F}^{2},
\end{equation}
where $\bD = \begin{bmatrix} \bI_{n_{1}} & \boldsymbol{0} \\ \boldsymbol{0} & -\bI_{n_{2}}\end{bmatrix}$, and its gradient can be rewritten as
\begin{equation}
\nabla g(\bW) = \lambda \bD\bW\left(\bW^{T}\bD\bW\right).
\end{equation}
Then the truncated gradient can be rewritten as a function of $\bW$,
\begin{equation}
\nabla h(\bW) = \frac{1}{m} \sum_{i=1}^{m} \left( \mathcal{B}_{i}\left(\bW\bW^{T}\right) - y_{i}\right) \bB_{i}\bW \mathbb{I}_{\mathcal{E}_{i}} + \lambda \bD\bW\left(\bW^{T}\bD\bW\right) := \nabla f_{tr} (\bW) +  \nabla g(\bW),
\end{equation}
where 
\begin{equation}
\mathcal{E}_{i} = \left\{ \left\vert y_{i} -  \mathcal{B}_i (\bW\bW^{T})  \right\vert \le \alpha_{h} \cdot \mathrm{med}\left(  \left\vert \by - \mathcal{B}\left( \bW\bW^{T} \right)     \right\vert  \right)  \right\}.
\end{equation}

Then the RC is defined in the following definition.

\begin{definition}[Regularity Condition]
Suppose $\bZ \in \mathbb{R}^{(n_{1}+n_{2})\times r}$ is the ground truth. The set of matrices that are in an $\epsilon$-neighborhood of $\bZ$ is defined as
\begin{equation*}
\mathcal{C}\left(\epsilon\right) = \left\{ \bW\in\mathbb{R}^{(n_{1}+n_{2})\times r}: \mathrm{dist}\left(\bW, \bZ\right)\le\epsilon \right\}.
\end{equation*}
Then the function $h(\bW)$ is said to satisfy the RC, denoted by $\mathrm{RC}\left(\alpha, \beta, \epsilon\right)$, if for all matrices $\bW\in \mathcal{C}\left(\epsilon\right)$, the following inequality holds:
\begin{equation}\label{def_RC}
\langle \nabla h\left(\bW\right), \bW - \bZ\bQ \rangle  \ge \frac{\sigma_{r}^{2}\left(\bZ\right)}{\alpha} \left\Vert \bW-\bZ\bQ \right\Vert_{F}^{2}  + \frac{1}{\beta\left\Vert\bZ\right\Vert^{2}} \left\Vert \nabla h\left(\bW\right)  \right\Vert_{F}^{2},
\end{equation} 
where $\bQ$ is an orthonormal matrix given in \eqref{def_Q}.
\end{definition}

The neighborhood $\mathcal{C}(\epsilon)$ is known as the basin of attraction. Interestingly, if $h(\bW)$ satisfies the RC, then initializing a simple gradient descent algorithm in the basin of attraction guarantees that the iterates converge at a linear rate to the ground truth, as summarized in the following lemma. 

\begin{lemma}\cite{candes2015phase, zheng2015convergent, tu2016low}\label{lemma_RC}
Suppose that $h(\bW)$ satisfies $\mathrm{RC}\left(\alpha, \beta, \epsilon\right)$ and $\bW_{0} \in \mathcal{C}\left(\epsilon\right)$. Consider the gradient descent update 
\begin{equation}\label{update_surrogate}
\bW_{t+1} = \bW_{t} - \frac{\mu}{\left\Vert\bZ\right\Vert^{2}} \nabla h\left(\bW_{t}\right)
\end{equation}
with the step size $0<\mu<\min\left\{\alpha/2, 2/\beta\right\}$. Then for all $t\ge 0$, we have $\bW_{t} \in \mathcal{C}\left(\epsilon\right)$ and 
\begin{equation*}
\mathrm{dist}\left(\bW_{t}, \bZ\right)\le \left(1 - \frac{2\mu}{\alpha\kappa}\right)^{t/2} \mathrm{dist}\left(\bW_{0}, \bZ\right).
\end{equation*}
\end{lemma}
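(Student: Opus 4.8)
The plan is to prove both conclusions simultaneously by induction on $t$, taking as the inductive hypothesis that $\bW_t \in \mathcal{C}(\epsilon)$ so that $\mathrm{RC}(\alpha,\beta,\epsilon)$ is applicable at $\bW_t$. The heart of the argument is a one-step contraction bound on $\mathrm{dist}^2(\bW_{t+1},\bZ)$. Because $\mathrm{dist}(\bW_{t+1},\bZ)$ is a minimum over all orthonormal $\bP$, it is upper bounded by the value at any fixed choice; writing $\bQ_t := \bQ_{(\bW_t,\bZ)}$ for the alignment matrix of the \emph{previous} iterate, I would start from
\begin{equation*}
\mathrm{dist}^2(\bW_{t+1},\bZ) \le \left\Vert \bW_{t+1} - \bZ\bQ_t \right\Vert_F^2.
\end{equation*}
Expanding the update rule \eqref{update_surrogate} then gives
\begin{equation*}
\left\Vert \bW_{t+1} - \bZ\bQ_t \right\Vert_F^2 = \left\Vert \bW_t - \bZ\bQ_t \right\Vert_F^2 - \frac{2\mu}{\left\Vert\bZ\right\Vert^2}\langle \nabla h(\bW_t), \bW_t - \bZ\bQ_t\rangle + \frac{\mu^2}{\left\Vert\bZ\right\Vert^4}\left\Vert \nabla h(\bW_t) \right\Vert_F^2.
\end{equation*}
Since $\bQ_t$ is exactly the alignment matrix entering the RC at $\bW_t$, the inner-product term is precisely the quantity lower bounded in \eqref{def_RC}.

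Substituting the RC lower bound, the coefficient multiplying $\left\Vert \nabla h(\bW_t) \right\Vert_F^2$ becomes $\frac{\mu}{\left\Vert\bZ\right\Vert^4}(\mu - 2/\beta)$, which is nonpositive precisely when $\mu \le 2/\beta$; this is guaranteed by the stated restriction $\mu < 2/\beta$, so that term can be discarded, leaving
\begin{equation*}
\mathrm{dist}^2(\bW_{t+1},\bZ) \le \left(1 - \frac{2\mu\sigma_r^2(\bZ)}{\alpha\left\Vert\bZ\right\Vert^2}\right)\mathrm{dist}^2(\bW_t,\bZ).
\end{equation*}
The remaining ingredient is to identify $\sigma_r^2(\bZ)/\left\Vert\bZ\right\Vert^2$ with $1/\kappa$: under the balancing convention $\bX^T\bX = \bY^T\bY$, taking the balanced factorization $\bX=\bC_L\bSigma^{1/2}$, $\bY=\bC_R\bSigma^{1/2}$ of $\bM=\bC_L\bSigma\bC_R^T$ yields $\bZ^T\bZ = 2\bSigma$, hence $\sigma_i^2(\bZ)=2\sigma_i(\bM)$ and therefore $\sigma_r^2(\bZ)/\left\Vert\bZ\right\Vert^2 = \sigma_r(\bM)/\sigma_1(\bM) = 1/\kappa$. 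This produces the contraction factor $1 - \frac{2\mu}{\alpha\kappa}$.

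To close the induction I would observe that the restriction $\mu < \alpha/2$ forces $0 \le 1 - \frac{2\mu}{\alpha\kappa} < 1$ (using $\kappa\ge 1$), so the contraction factor lies in $[0,1)$; consequently $\mathrm{dist}(\bW_{t+1},\bZ) \le \mathrm{dist}(\bW_t,\bZ) \le \epsilon$, which reestablishes $\bW_{t+1}\in\mathcal{C}(\epsilon)$ and lets the induction proceed. Iterating the one-step bound and taking square roots then delivers the claimed linear rate. I expect the one genuine subtlety to be the choice of alignment matrix: the RC at $\bW_{t+1}$ cannot be invoked directly because the optimal alignment $\bQ_{(\bW_{t+1},\bZ)}$ is uncontrolled, and the device of upper-bounding $\mathrm{dist}(\bW_{t+1},\bZ)$ through the \emph{previous} alignment $\bQ_t$ is exactly what reconciles the quantity being contracted with the quantity that the RC controls. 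Everything else is a routine completion of the square combined with the step-size bookkeeping.
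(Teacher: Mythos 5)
Your proof is correct and is essentially the standard argument: the paper itself states this lemma without proof, citing \cite{candes2015phase, zheng2015convergent, tu2016low}, and the proof in those references is exactly your one-step expansion with the previous iterate's alignment matrix $\bQ_t$, the RC substitution to kill the gradient-norm term via $\mu < 2/\beta$, and the identification $\sigma_r^2(\bZ)/\|\bZ\|^2 = 1/\kappa$ from the balanced factorization (consistent with the paper's own fact $\sigma_i(\bZ) = \sqrt{2\sigma_i(\bM)}$). Nothing is missing.
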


Note that since the initialization satisfies $\mathrm{dist}\left(\bW_{0},\bZ\right) \le \frac{1}{24}\sigma_{r}\left(\bZ\right)$, by the triangle inequality we can guarantee that
\begin{equation*}
\frac{23}{24}\left\Vert\bZ\right\Vert \le \left\Vert\bW_{0}\right\Vert \le \frac{25}{24}\left\Vert\bZ\right\Vert,
\end{equation*} 
which implies 
\begin{align*}
&\frac{23}{24\sqrt{2}}\left\Vert\bZ\right\Vert \le \left\Vert\bU_{0}\right\Vert \le \frac{25}{24\sqrt{2}}\left\Vert\bZ\right\Vert;\\
&\frac{23}{24\sqrt{2}}\left\Vert\bZ\right\Vert \le \left\Vert\bV_{0}\right\Vert \le \frac{25}{24\sqrt{2}}\left\Vert\bZ\right\Vert,
\end{align*}
where we use the fact $\left\Vert \bU_{0} \right\Vert = \left\Vert \bV_{0} \right\Vert = \left\Vert \bW_{0} \right\Vert / \sqrt{2}$. Therefore, instead of proving the linear convergence of the actual update size $\frac{\mu}{\left\Vert\bU_{0}\right\Vert^{2}}$ and $\frac{\mu}{\left\Vert\bV_{0}\right\Vert^{2}}$, we prove it for the step size $\frac{\mu}{\left\Vert\bZ\right\Vert^{2}}$ in \eqref{update_surrogate}, since they only differ by a constant scaling of $\mu$. Hence, the rest of the proof is to verify that RC holds for the truncated gradient.

\subsection{Properties of Truncated Gradient}\label{sec_certifyRC}

We start by proving a few key properties of the truncated gradient $\nabla h(\bW) = \nabla f_{tr} (\bW) +  \nabla g(\bW)$. Consider the measurement model with sparse outliers in \eqref{measurement_model}. 
Define the truncation event
\begin{equation*}
\tilde{\mathcal{E}}_{i} = \left\{ \left\vert \mathcal{B}_i (\bZ\bZ^{T}) -  \mathcal{B}_i (\bW\bW^{T})  \right\vert \le \alpha_{h}  \mathrm{med}\left(  \left\vert \by - \mathcal{B}\left( \bW\bW^{T} \right)     \right\vert  \right)  \right\},
\end{equation*}
which is the same as $\mathcal{E}_i$ except that the measurements used to calculate the residual are replaced by clean measurements. In particular, it is straight to see that \eqref{truncation_bound} also holds for $\tilde{\mathcal{E}}_{i} $. Then we can write $\nabla f_{tr}\left(\bW\right)$ as
\begin{align*}
\nabla f_{tr}\left(\bW\right)
& = \frac{1}{m} \sum_{i=1}^{m} \left( \mathcal{B}_{i}\left(\bW\bW^{T} \right) - y_{i}\right)\bB_{i}\bW\mathbb{I}_{ \mathcal{E}_{i}} \\
& = \frac{1}{m} \sum_{i \notin \mathcal{S} } \left( \mathcal{B}_{i}\left(\bW\bW^{T} \right) - \mathcal{B}_{i}\left(\bZ\bZ^{T} \right) \right)\bB_{i}\bW\mathbb{I}_{ \tilde{\mathcal{E}}_{i}}  +   \frac{1}{m} \sum_{i \in \mathcal{S}} \left( \mathcal{B}_{i}\left(\bW\bW^{T} \right) - y_{i}\right)\bB_{i}\bW\mathbb{I}_{ \mathcal{E}_{i}}\\
& = \underbrace{ \frac{1}{m} \sum_{i=1}^{m} \left( \mathcal{B}_{i}\left(\bW\bW^{T} \right) -  \mathcal{B}_{i}\left(\bZ\bZ^{T} \right) \right)\bB_{i}\bW\mathbb{I}_{ \tilde{\mathcal{E}}_{i}} }_{\nabla^{c} f_{tr}\left(\bW\right)}\\
& \quad +  \underbrace{ \frac{1}{m} \sum_{i \in \mathcal{S}} \left[ \left(\mathcal{B}_{i}\left(\bW\bW^{T} \right) - y_{i}\right)\mathbb{I}_{ \mathcal{E}_{i}}  -  \left(\mathcal{B}_{i}\left(\bW\bW^{T} \right) - \mathcal{B}_{i}\left(\bZ\bZ^{T} \right) \right)\mathbb{I}_{ \tilde{\mathcal{E}}_{i}}  \right] \bB_{i}\bW }_{\nabla^{o} f_{tr}\left(\bW\right)},
\end{align*}
where $\nabla^{c} f_{tr}\left(\bW\right)$ corresponds to the truncated gradient {\em as if} all measurements are clean, and $\nabla^{o} f_{tr}\left(\bW\right)$ corresponds to the contribution of the outliers.

For notational simplicity, define 
\begin{equation}\label{def_H}
\bH = \begin{bmatrix} \bH_{1} \\ \bH_{2} \end{bmatrix} =  \bW - \bZ\bQ = \begin{bmatrix} \bU - \bX\bQ  \\  \bV - \bY\bQ \end{bmatrix}, 
\end{equation}
where $\bQ$ is given in \eqref{def_Q}. We have
\begin{align}
\left\langle \nabla^c f_{tr}\left(\bW\right), \bH \right \rangle 
& =  \frac{1}{m} \sum_{i=1}^{m} \langle \bB_{i}, \bW\bW^{T} - \bZ\bZ^{T} \rangle  \cdot   \langle \bB_{i}, \bH\bW^{T} \rangle  \cdot  \mathbb{I}_{ \tilde{\mathcal{E}}_{i}} .\label{setD_split}
\end{align}
Define the set $\mathcal{D} $ as
\begin{align}\label{def_setD}
\mathcal{D} 
& = \left\{ i |  \langle \bB_{i}, \bW\bW^{T} - \bZ\bZ^{T}   \rangle  \cdot \langle \bB_{i}, \bH\bW^{T} \rangle  < 0  \right\} .
\end{align}
We can then split \eqref{setD_split} and bound it as
\begin{align}
  & \left\langle \nabla^c f_{tr}\left(\bW\right), \bH \right \rangle\\
  & \ge \frac{1}{m}\sum_{i \notin \mathcal{D}} \langle \bB_{i}, \bW\bW^{T} - \bZ\bZ^{T}   \rangle  \cdot \langle \bB_{i}, \bH\bW^{T} \rangle  \cdot    \mathbb{I}_{ \left\{   \left\vert  \langle \bB_{i}, \bW\bW^{T} - \bZ\bZ^{T}  \rangle   \right\vert   \le   0.65 \alpha_{h}  \left\Vert \bU\bV^{T}- \bX\bY^{T} \right\Vert_{F}  \right\}  } \nonumber \\
& \quad  + \frac{1}{m}\sum_{i \in \mathcal{D}} \langle \bB_{i}, \bW\bW^{T} - \bZ\bZ^{T}   \rangle  \cdot \langle \bB_{i}, \bH\bW^{T} \rangle  \cdot    \mathbb{I}_{ \left\{   \left\vert  \langle \bB_{i}, \bW\bW^{T} - \bZ\bZ^{T}  \rangle   \right\vert   \le   0.70 \alpha_{h}  \left\Vert \bU\bV^{T}- \bX\bY^{T} \right\Vert_{F}  \right\}  }  \nonumber\\
& = \underbrace{\frac{1}{2m}\sum_{i \notin \mathcal{D}} \langle \bA_{i}, \bU\bV^{T} - \bX\bY^{T}   \rangle  \cdot \langle \bA_{i}, \bH_{1}\bV^{T} + \bU\bH_{2}^{T} \rangle  \cdot    \mathbb{I}_{ \left\{   \left\vert  \langle \bA_{i}, \bU\bV^{T} - \bX\bY^{T}  \rangle   \right\vert   \le   0.65 \alpha_{h}  \left\Vert \bU\bV^{T}- \bX\bY^{T} \right\Vert_{F}  \right\}  }}_{B_1} \nonumber \\
& \quad  + \underbrace{\frac{1}{2m}\sum_{i \in \mathcal{D}} \langle \bA_{i}, \bU\bV^{T} - \bX\bY^{T}   \rangle  \cdot \langle \bA_{i}, \bH_{1}\bV^{T} + \bU\bH_{2}^{T} \rangle  \cdot    \mathbb{I}_{ \left\{   \left\vert  \langle \bA_{i}, \bU\bV^{T} - \bX\bY^{T}  \rangle   \right\vert   \le   0.70 \alpha_{h}  \left\Vert \bU\bV^{T}- \bX\bY^{T} \right\Vert_{F}  \right\}  }}_{B_2}  .\label{equ_rc_left_relax}
\end{align}

The first term in \eqref{equ_rc_left_relax} can be lower bounded by  Proposition~\ref{prop_local_curv_clean_first}, whose proof is given in Appendix~\ref{proof_prop_local_curv_clean_first}.
\begin{proposition}\label{prop_local_curv_clean_first}
Provided $m \ge c_{1}  nr $, then
\begin{align} \label{bound_B1}
B_1 
& \ge \frac{\gamma_{1}}{2}    \langle \bU\bV^{T} - \bX\bY^{T},  \bH_{1}\bV^{T} + \bU\bH_{2}^{T}  \rangle  -   0.0006 \alpha_{h} \left\Vert \bU\bV^{T}- \bX\bY^{T} \right\Vert_{F} \left\Vert \bH_{1}\bV^{T} + \bU\bH_{2}^{T}  \right\Vert_{F} 
\end{align}
holds for all $\bZ,\bW\in\mathbb{R}^{(n_{1}+n_{2})\times r}$ with probability at least $1 - \exp{\left( - c_2  m \right)}$, where $\gamma_{1} = \mathbb{E}\left[ \xi^{2}  \mathbb{I}_{\left\{  \left\vert \xi \right\vert \le 0.65\alpha_{h}  \right\}} \right] $ with $\xi \sim \mathcal{N}\left(0, 1\right)$, and $c_1,c_2>0$ are numerical constants.
\end{proposition}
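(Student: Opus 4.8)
The plan is to lower bound $B_1$ by the \emph{mean} of a closely related full sum, and then to concentrate that sum uniformly over all low-rank matrices. Write $\bE := \bU\bV^{T} - \bX\bY^{T}$ and $\bN := \bH_{1}\bV^{T} + \bU\bH_{2}^{T}$, both of rank at most $2r$, so that (using $\langle\bB_i,\bW\bW^T-\bZ\bZ^T\rangle=\langle\bA_i,\bE\rangle$ and $\langle\bB_i,\bH\bW^T\rangle=\tfrac12\langle\bA_i,\bN\rangle$) each summand of $B_1$ equals $\tfrac12\langle\bA_i,\bE\rangle\langle\bA_i,\bN\rangle\,\mathbb{I}_{\{|\langle\bA_i,\bE\rangle|\le 0.65\alpha_h\|\bE\|_F\}}$. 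First I would remove the dependence on the data-dependent set $\mathcal{D}$: by \eqref{def_setD}, extending the sum from $i\notin\mathcal{D}$ to all $i\in\{1,\dots,m\}$ only adds terms with $\langle\bA_i,\bE\rangle\langle\bA_i,\bN\rangle<0$ multiplied by a nonnegative indicator, hence nonpositive. Therefore
\begin{equation*}
B_1 \ge S := \frac{1}{2m}\sum_{i=1}^{m}\langle\bA_i,\bE\rangle\langle\bA_i,\bN\rangle\,\mathbb{I}_{\{|\langle\bA_i,\bE\rangle|\le 0.65\alpha_h\|\bE\|_F\}},
\end{equation*}
and it suffices to lower bound $S$, whose indicator now depends on $\bE$ alone.

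Next I would compute $\mathbb{E}[S]$ pointwise. For fixed $\bE,\bN$ and Gaussian $\bA_i$, the pair $(\langle\bA_i,\bE\rangle,\langle\bA_i,\bN\rangle)$ is jointly Gaussian with variances $\|\bE\|_F^{2},\|\bN\|_F^{2}$ and covariance $\langle\bE,\bN\rangle$. Regressing $\langle\bA_i,\bN\rangle$ on $\langle\bA_i,\bE\rangle$ and noting that the residual is zero-mean and independent of $\langle\bA_i,\bE\rangle$ (equivalently, $\langle\bA_i,\bE\rangle\,\mathbb{I}$ is an odd function), the cross term vanishes, leaving $\mathbb{E}[\langle\bA_i,\bE\rangle\langle\bA_i,\bN\rangle\mathbb{I}]=\tfrac{\langle\bE,\bN\rangle}{\|\bE\|_F^{2}}\mathbb{E}[\langle\bA_i,\bE\rangle^{2}\mathbb{I}]$. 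Substituting $\langle\bA_i,\bE\rangle=\|\bE\|_F\,\xi$ with $\xi\sim\mathcal{N}(0,1)$ turns the remaining factor into $\|\bE\|_F^{2}\,\mathbb{E}[\xi^{2}\mathbb{I}_{\{|\xi|\le 0.65\alpha_h\}}]=\gamma_1\|\bE\|_F^{2}$, so $\mathbb{E}[S]=\tfrac{\gamma_1}{2}\langle\bE,\bN\rangle$, which is exactly the target main term.

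It then remains to show $|S-\mathbb{E}[S]|\le 0.0006\alpha_h\|\bE\|_F\|\bN\|_F$ uniformly. By bilinearity in $(\bE,\bN)$ and scale-invariance of the indicator in $\bE$, I would normalize $\|\bE\|_F=\|\bN\|_F=1$. On the truncation event $|\langle\bA_i,\bE\rangle|\le 0.65\alpha_h$, so each summand is bounded by $0.65\alpha_h|\langle\bA_i,\bN\rangle|$, a constant times a sub-Gaussian variable; the centered summands are thus sub-exponential with parameter $\mathcal{O}(\alpha_h)$, and Bernstein's inequality gives $\mathbb{P}(|S-\mathbb{E}S|>t)\le 2\exp(-c\,m\,t^{2}/\alpha_h^{2})$ for small $t$. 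To make this uniform, I would cover the unit-Frobenius-norm rank-$2r$ sets of $\bE$ and of $\bN$ by $\epsilon$-nets, whose log-cardinalities are $\mathcal{O}(nr\log(1/\epsilon))$; a union bound together with $m\ge c_1 nr$ for a large constant $c_1$ then forces $t$ below $0.0006\alpha_h$ while keeping the failure probability at most $e^{-c_2 m}$. Chaining $B_1\ge S\ge\mathbb{E}[S]-0.0006\alpha_h\|\bE\|_F\|\bN\|_F$ yields \eqref{bound_B1}.

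The main obstacle is transferring the pointwise estimate onto the net, because the indicator $\mathbb{I}_{\{|\langle\bA_i,\bE\rangle|\le 0.65\alpha_h\|\bE\|_F\}}$ is discontinuous in $\bE$ and its threshold moves with $\|\bE\|_F$, so a small net perturbation can flip the indicator for many indices. I would handle this by sandwiching: for a net point near $\bE$, replace the hard threshold $0.65\alpha_h$ by the tightened and relaxed thresholds $0.65\alpha_h(1\pm c\epsilon)$, and bound the resulting discrepancy by the expected contribution of indices with $\langle\bA_i,\bE\rangle$ in a band of width $\mathcal{O}(\epsilon)$ about $\pm 0.65\alpha_h$. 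Since the standard Gaussian density is bounded, this band carries only $\mathcal{O}(\epsilon)$ mass, and the bilinear factor on it is again sub-exponential and concentrates, so the indicator perturbation contributes only an $\mathcal{O}(\epsilon)$ error that is absorbed into $0.0006\alpha_h$ once $\epsilon$ is taken small.
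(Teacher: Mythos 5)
Your decomposition is sound and reaches \eqref{bound_B1} by a genuinely different route, so let me compare it with the paper's. You discard the sign-agreement restriction $i\notin\mathcal{D}$ at the \emph{sample} level (valid, since the discarded summands are nonpositive), which buys you an exact mean $\mathbb{E}[S]=\tfrac{\gamma_1}{2}\langle\bE,\bN\rangle$ via the Gaussian regression identity. The paper (Lemma~\ref{lemma_prop2}) instead keeps the indicator $\mathbb{I}_{\{\langle\bA_i,\bG\rangle\langle\bA_i,\bT\rangle\ge0\}}$ through the concentration step and only drops it when lower-bounding the expectation. This is not cosmetic: with the sign indicator retained, each summand equals the positive part $(u_i)_+$ of $u_i=\langle\bA_i,\bG\rangle\langle\bA_i,\bT\rangle\,\chi(\langle\bA_i,\bG\rangle)$, and the paper replaces the hard truncation indicator by a Lipschitz mollifier $\chi$ sandwiched between $\mathbb{I}_{\{|t|\le0.65\alpha_h-\delta\}}$ and $\mathbb{I}_{\{|t|\le0.65\alpha_h\}}$ --- a one-sided substitution that is legitimate only because the summands are nonnegative on the retained set. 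The $1$-Lipschitz map $u\mapsto u_+$ then lets the covering step be absorbed entirely by the RIP (Lemma~\ref{lemma_RIP_A_rec}) at a \emph{constant} net resolution $\tau$, so $m\gtrsim nr$ suffices. Once you drop the sign indicator, this mollification route is closed (the sandwich no longer controls a signed sum one-sidedly), which is why you are pushed into the harder boundary-band argument.

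That band argument is the one place your sketch is materially under-specified. The indices whose indicator flips between $\bE$ and a net point $\bE_0$ lie in $\{i:|\langle\bA_i,\bE_0\rangle|\in0.65\alpha_h\pm w_i\}$ with $w_i\approx|\langle\bA_i,\bE-\bE_0\rangle|+0.65\alpha_h\tau$; controlling $w_i$ uniformly in $i$ via $\tau\max_i\|\bA_i\|_F\approx\tau\sqrt{nm}$ (as the paper does in Proposition~\ref{prop_sample_median_concen}) forces $\tau\lesssim1/\sqrt{nm}$, inflating the net and costing a logarithmic factor over the stated $m\ge c_1 nr$; and one must still prove, uniformly over exponentially many net points, that the empirical count in the band concentrates at its $O(\epsilon)m$ mean. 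This is repairable --- e.g., use the RIP to show that at most an $\epsilon m$ fraction of indices have $|\langle\bA_i,\bE-\bE_0\rangle|\gtrsim\tau/\sqrt{\epsilon}$, handle those by Cauchy--Schwarz, and apply a Chernoff-plus-union bound to the count in a constant-width band for the rest --- but as written, the claim that ``the band carries only $O(\epsilon)$ mass'' silently assumes a uniform-in-$i$ control of the perturbation that you have not supplied. With that step filled in, your argument goes through (possibly under $m\gtrsim nr\log n$, which the main theorem assumes anyway).
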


The second term in \eqref{equ_rc_left_relax} can be lower bounded by Proposition~\ref{prop_local_curv_clean_second}, whose proof is given in Appendix~\ref{proof_prop_local_curv_clean_second}.

\begin{proposition}\label{prop_local_curv_clean_second}
Provided $m \ge c_{1}nr$, we have
\begin{align}\label{bound_B2}
B_2 &\ge -  0.36 \alpha_{h}  \left\Vert \bU\bV^{T}- \bX\bY^{T} \right\Vert_{F}  \left\Vert \bH_{1}\bH_{2}^{T} \right\Vert_{F}
\end{align}
holds for all $\bZ,\bW\in\mathbb{R}^{(n_{1}+n_{2})\times r}$ with probability at least $1-\exp\left(-c_{2}m\right)$, where $c_1, c_2>0$ are numerical constants.
\end{proposition}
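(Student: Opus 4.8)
The plan is to combine the sign restriction that defines $\mathcal{D}$ with an exact algebraic identity relating the two bilinear factors appearing in $B_2$. First I would record the identity
\begin{equation*}
\langle \bA_i, \bH_1\bV^T + \bU\bH_2^T\rangle = \langle \bA_i, \bU\bV^T - \bX\bY^T\rangle + \langle \bA_i, \bH_1\bH_2^T\rangle,
\end{equation*}
which follows by substituting $\bU = \bX\bQ + \bH_1$ and $\bV = \bY\bQ + \bH_2$ (recall $\bH = \bW - \bZ\bQ$) and using $\bQ\bQ^T = \bI$. Abbreviating $a_i = \langle\bA_i, \bU\bV^T-\bX\bY^T\rangle$ and $b_i = \langle\bA_i,\bH_1\bH_2^T\rangle$, the summand of $B_2$ becomes $\tfrac{1}{2m}\,a_i(a_i+b_i)$ restricted to $i\in\mathcal{D}$ and to the truncation event. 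By the definition of $\mathcal{D}$ (translated from $\bB_i$ to $\bA_i$) this product is strictly negative, so $B_2\le 0$ and it suffices to upper bound $-B_2$.

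The sign condition $a_i(a_i+b_i)<0$ on $\mathcal{D}$ yields, since $-a_i^2\le 0$,
\begin{equation*}
-a_i(a_i+b_i) = -a_i^2 - a_i b_i \le -a_i b_i \le |a_i|\,|b_i|,
\end{equation*}
so each term of $-B_2$ is at most $\tfrac{1}{2m}|a_i|\,|b_i|$ times the truncation indicator. I would then cap $|a_i|\le 0.70\,\alpha_h\|\bU\bV^T-\bX\bY^T\|_F$ using the indicator, and discard both the restriction to $\mathcal{D}$ and the indicator itself (all dropped terms are nonnegative), to obtain
\begin{equation*}
-B_2 \;\le\; \frac{0.70\,\alpha_h}{2}\,\left\Vert \bU\bV^T-\bX\bY^T\right\Vert_F\cdot\frac{1}{m}\sum_{i=1}^m \left\vert\langle\bA_i,\bH_1\bH_2^T\rangle\right\vert.
\end{equation*}
It then remains to control the empirical absolute first moment $\tfrac{1}{m}\sum_i|\langle\bA_i,\bN\rangle|$ with $\bN=\bH_1\bH_2^T$. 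For fixed $\bN$ one has $\langle\bA_i,\bN\rangle\sim\mathcal{N}(0,\|\bN\|_F^2)$, so $\mathbb{E}|\langle\bA_i,\bN\rangle| = \sqrt{2/\pi}\,\|\bN\|_F\approx 0.7979\,\|\bN\|_F$, and the average concentrates around this. Since $\bN$ ranges over rank-$\le r$ matrices as $\bW$ varies, I would upgrade this to a bound uniform in $\bN$ by a standard $\epsilon$-net over unit-Frobenius-norm rank-$\le r$ matrices together with the $1$-Lipschitz dependence of the functional on $\bN$; this is precisely where $m\ge c_1 nr$ enters, giving $\tfrac{1}{m}\sum_i|\langle\bA_i,\bN\rangle|\le(\sqrt{2/\pi}+\delta)\|\bN\|_F$ uniformly with probability $1-\exp(-c_2 m)$. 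Substituting back gives $-B_2 \le 0.35(\sqrt{2/\pi}+\delta)\,\alpha_h\|\bU\bV^T-\bX\bY^T\|_F\|\bH_1\bH_2^T\|_F$, and since $0.35\sqrt{2/\pi}\approx 0.279$, a sufficiently small $\delta$ yields the claimed constant $0.36$.

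The main obstacle is the uniform concentration in the final step: controlling $\tfrac{1}{m}\sum_i|\langle\bA_i,\bN\rangle|$ simultaneously over the whole low-rank manifold rather than for a single $\bN$. The functional is nonsmooth because of the absolute value, so I would lean on its Lipschitz dependence on $\bN$ combined with a covering-number bound of order $\exp(Cnr)$ for unit-norm rank-$\le r$ matrices, mirroring the net arguments behind the median RIP of Proposition~\ref{prop_sample_median_concen}. (Alternatively, Cauchy--Schwarz reduces the first moment to the second, $\tfrac1m\sum_i|b_i|\le(\tfrac1m\sum_i b_i^2)^{1/2}$, which the matrix RIP bounds by $(1+\delta)\|\bN\|_F$; this is slightly tighter on the constant but still leaves room below $0.36$.) Everything else is deterministic and the target constant carries ample slack, so the crux is exactly this uniform-over-low-rank estimate.
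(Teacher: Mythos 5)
Your argument is correct and follows essentially the same route as the paper: the sign condition defining $\mathcal{D}$ yields $-a_i(a_i+b_i)\le |a_i|\,|b_i|$ (the paper packages this as Lemma~\ref{lemma_setD}), the truncation indicator caps $|a_i|$ at $0.70\,\alpha_h\left\Vert \bU\bV^{T}-\bX\bY^{T}\right\Vert_{F}$, and the empirical first moment of $\left\vert\langle\bA_i,\bH_{1}\bH_{2}^{T}\rangle\right\vert$ is controlled via Cauchy--Schwarz plus the matrix RIP (Lemma~\ref{lemma_RIP_A_rec}) --- exactly the alternative you note in your parenthetical, landing at $0.35(1+\delta)\le 0.36$. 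One minor correction: the Cauchy--Schwarz route gives the \emph{looser} constant $1+\delta$ rather than the tighter $\sqrt{2/\pi}$ of your direct first-moment net argument (not the other way around), but either comfortably clears $0.36$.
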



The contribution of outliers $\nabla^{o} f_{tr}\left(\bW\right)$ can be bounded by the following proposition, whose proof is given in Appendix~\ref{proof_outlier_curvature}.
\begin{proposition} \label{prop_outlier_curvature}
Provided $m \ge c_{1}nr\log{n}$, we have
\begin{equation}\label{bound_outlier_curvature}
|\langle \nabla^{o} f_{tr}\left(\bW\right) ,\bH \rangle| \leq 0.71\alpha_{h}\sqrt{s} \left\Vert \bX\bY^{T} - \bU\bV^{T} \right\Vert_{F} \| \bH_{1}\bV^T + \bU\bH_{2}^{T}\|_F
\end{equation}
holds for all $\bZ,\bW\in\mathbb{R}^{(n_{1}+n_{2})\times r}$ with probability at least $1-\exp\left(-c_{2}m\right)$, where $c_1, c_2>0$ are numerical constants.
\end{proposition}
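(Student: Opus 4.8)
The plan is to exploit that every summand of $\nabla^{o} f_{tr}(\bW)$ is indexed by the outlier set $\mathcal{S}$, whose size is only $|\mathcal{S}| = sm$, and that each scalar weight multiplying $\bB_{i}\bW$ is uniformly capped by the median-based truncation. Writing $\bG := \bH_{1}\bV^{T} + \bU\bH_{2}^{T}$, which has rank at most $2r$, I would first record the identity $\langle \bB_{i}\bW, \bH\rangle = \langle \bB_{i}, \bH\bW^{T}\rangle = \tfrac{1}{2}\langle \bA_{i}, \bG\rangle$, so that
\[
\langle \nabla^{o} f_{tr}(\bW), \bH\rangle = \frac{1}{2m}\sum_{i\in\mathcal{S}} w_{i}\,\langle \bA_{i}, \bG\rangle,
\]
where $w_{i} = \big(\mathcal{B}_{i}(\bW\bW^{T}) - y_{i}\big)\mathbb{I}_{\mathcal{E}_{i}} - \big(\mathcal{B}_{i}(\bW\bW^{T}) - \mathcal{B}_{i}(\bZ\bZ^{T})\big)\mathbb{I}_{\tilde{\mathcal{E}}_{i}}$ is exactly the bracketed factor appearing in the definition of $\nabla^{o} f_{tr}(\bW)$.

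Second, I would bound each weight $|w_{i}|$ uniformly. On $\mathcal{E}_{i}$ the truncation bound \eqref{truncation_bound} gives $|\mathcal{B}_{i}(\bW\bW^{T}) - y_{i}| \le 0.70\,\alpha_{h}\left\Vert \bU\bV^{T} - \bX\bY^{T}\right\Vert_{F}$, and since \eqref{truncation_bound} also holds for $\tilde{\mathcal{E}}_{i}$, on $\tilde{\mathcal{E}}_{i}$ one has $|\mathcal{B}_{i}(\bW\bW^{T}) - \mathcal{B}_{i}(\bZ\bZ^{T})| \le 0.70\,\alpha_{h}\left\Vert \bU\bV^{T} - \bX\bY^{T}\right\Vert_{F}$. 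By the triangle inequality, $|w_{i}| \le 1.40\,\alpha_{h}\left\Vert \bU\bV^{T} - \bX\bY^{T}\right\Vert_{F}$ for every $i$, and this holds deterministically on the high-probability event that the median concentration \eqref{equ_sample_median_concen} is valid. This step is what neutralizes the arbitrary amplitudes of the outliers $\eta_{i}$: even though $y_{i}$ may be enormous, any $i$ that survives the truncation contributes a residual of size $O(\alpha_{h}\left\Vert \bU\bV^{T}-\bX\bY^{T}\right\Vert_{F})$.

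Third, I would apply Cauchy--Schwarz over the outlier indices to separate the bounded weights from the Gaussian linear forms, $\big|\sum_{i\in\mathcal{S}} w_{i}\langle \bA_{i},\bG\rangle\big| \le \big(\sum_{i\in\mathcal{S}} w_{i}^{2}\big)^{1/2}\big(\sum_{i\in\mathcal{S}} \langle \bA_{i},\bG\rangle^{2}\big)^{1/2}$. The first factor is at most $\sqrt{sm}\cdot 1.40\,\alpha_{h}\left\Vert \bU\bV^{T} - \bX\bY^{T}\right\Vert_{F}$ because $|\mathcal{S}| = sm$. For the second factor I would enlarge the sum to all $m$ indices and invoke the upper restricted isometry bound for rank-$2r$ matrices, which holds uniformly over all such matrices under the hypothesis $m \ge c_{1}nr\log n$, yielding $\sum_{i=1}^{m}\langle \bA_{i},\bG\rangle^{2} \le (1+\delta)m\left\Vert \bG\right\Vert_{F}^{2}$ for an arbitrarily small $\delta$. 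Multiplying the two factors, the two powers of $m$ cancel against the $1/m$ prefactor together with the $\tfrac{1}{2}$, leaving the coefficient $\tfrac{1}{2}\cdot 1.40\cdot\sqrt{1+\delta}$, which is below $0.71$ once $\delta$ is small enough; recalling $\left\Vert \bG\right\Vert_{F} = \left\Vert \bH_{1}\bV^{T} + \bU\bH_{2}^{T}\right\Vert_{F}$ gives precisely \eqref{bound_outlier_curvature}.

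The main obstacle, and the reason the bound scales like $\sqrt{s}$ rather than $s$, is the interplay between the uniform weight bound and Cauchy--Schwarz: the $\ell_{2}$ norm of the weight vector over $\mathcal{S}$ contributes only $\sqrt{|\mathcal{S}|} = \sqrt{sm}$, while the Gaussian side contributes $\sqrt{m}$, so their product over $m$ gives the net $\sqrt{s}$. The delicate point is making everything uniform over $\bW$ without an extra union bound: the weight bound is deterministic once the single median event underlying \eqref{truncation_bound} holds for both $\mathcal{E}_{i}$ and $\tilde{\mathcal{E}}_{i}$, and the restricted isometry estimate is itself a uniform statement over rank-$2r$ matrices, so both ingredients are simultaneously valid on one high-probability event.
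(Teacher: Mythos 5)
Your proposal is correct and follows essentially the same route as the paper's proof in Appendix~E: a uniform bound on the surviving residuals via the median truncation events $\mathcal{E}_i$, $\tilde{\mathcal{E}}_i$ together with the sample-median concentration \eqref{equ_sample_median_concen}, then Cauchy--Schwarz over $\mathcal{S}$ (yielding the $\sqrt{s}$ factor), enlargement of the quadratic sum to all $m$ indices, and the restricted isometry property for rank-$2r$ matrices. The only cosmetic difference is that the paper keeps the factor $2\alpha_h\,\mathrm{med}(|\by-\mathcal{A}(\bU\bV^T)|)$ outside the Cauchy--Schwarz step and converts it to $1.40\alpha_h\|\bU\bV^T-\bX\bY^T\|_F$ at the end, whereas you substitute the $0.70$ bound up front; the resulting constant $0.70\sqrt{1+\delta}\le 0.71$ is identical.
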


On the other end, Proposition~\ref{prop_local_smooth_clean} establishes an upper bound for $\left\Vert \nabla f_{tr}\left(\bW\right)  \right\Vert_{F}^{2}$, whose proof is given in Appendix~\ref{proof_prop_local_smooth_clean}.
\begin{proposition}\label{prop_local_smooth_clean}
Let $n = \max\{n_{1}, n_{2}\}$. Provided $m \ge c_{1}nr \log n$, we have
\begin{equation}\label{bound_spectral_gradient}
\left\Vert \nabla f_{tr}\left(\bW\right)  \right\Vert_{F}^{2} \le  0.25 \alpha_{h}^{2}  \left\Vert \bU\bV^{T} - \bX\bY^{T} \right\Vert_{F}^{2} \left\Vert \bW \right\Vert^{2}
\end{equation}
holds for all $\bZ,\bW\in\mathbb{R}^{(n_{1}+n_{2})\times r}$ with probability at least $1-\exp\left(-c_{2}m\right)$, where $c_1,c_2>0$ are numerical constants.
\end{proposition}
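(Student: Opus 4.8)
The plan is to control $\left\Vert \nabla f_{tr}(\bW)\right\Vert_F$ through its variational characterization $\left\Vert \nabla f_{tr}(\bW)\right\Vert_F = \max_{\|\bH\|_F\le 1}\langle \nabla f_{tr}(\bW),\bH\rangle$, where $\bH=[\bH_1^T,\bH_2^T]^T$ is now an \emph{arbitrary} unit-norm test direction (not the specific $\bH=\bW-\bZ\bQ$ of \eqref{def_H}). Writing $\nabla f_{tr}(\bW)=\frac1m\sum_{i}\left(\mathcal{B}_i(\bW\bW^T)-y_i\right)\mathbb{I}_{\mathcal{E}_i}\,\bB_i\bW$ and using that $\bB_i$ is symmetric together with the block structure of $\bW$ and $\bH$, I would first reduce the inner product to a bilinear form in the residuals and a rank-$\le 2r$ test matrix:
\[
\langle \nabla f_{tr}(\bW),\bH\rangle=\frac{1}{2m}\sum_{i=1}^{m}\left(\mathcal{B}_i(\bW\bW^T)-y_i\right)\mathbb{I}_{\mathcal{E}_i}\,\langle \bA_i,\ \bH_1\bV^T+\bU\bH_2^T\rangle,
\]
where I set $\bN:=\bH_1\bV^T+\bU\bH_2^T$, which has rank at most $2r$.

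Next I would decouple the two factors by Cauchy--Schwarz,
\[
\langle \nabla f_{tr}(\bW),\bH\rangle\le\frac{1}{2m}\Bigl(\textstyle\sum_i\left(\mathcal{B}_i(\bW\bW^T)-y_i\right)^2\mathbb{I}_{\mathcal{E}_i}\Bigr)^{1/2}\Bigl(\textstyle\sum_i\langle \bA_i,\bN\rangle^2\Bigr)^{1/2},
\]
and bound each factor separately. The first factor is precisely where the median truncation does its work: on the event $\mathcal{E}_i$ the residual is by definition at most $\alpha_h\,\mathrm{med}\left(|\by-\mathcal{B}(\bW\bW^T)|\right)$, and \eqref{equ_sample_median_concen}, which rests on the uniform median concentration of Proposition~\ref{prop_sample_median_concen}, upgrades this to $0.70\,\alpha_h\left\Vert\bU\bV^T-\bX\bY^T\right\Vert_F$ \emph{uniformly} over all low-rank $\bW$. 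Summing over at most $m$ active indices yields $\sum_i(\cdot)^2\mathbb{I}_{\mathcal{E}_i}\le m\,(0.70\,\alpha_h)^2\left\Vert\bU\bV^T-\bX\bY^T\right\Vert_F^2$. For the second factor, since $\bN$ has rank at most $2r$, the standard restricted isometry property of the sample mean for Gaussian operators (cf.\ \cite{recht2010guaranteed,candes2011tight}) gives $\frac1m\sum_i\langle\bA_i,\bN\rangle^2\le(1+\delta)\left\Vert\bN\right\Vert_F^2$ uniformly over rank-$2r$ matrices as soon as $m\gtrsim nr$.

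To finish I would bound the test-matrix norm by $\left\Vert\bN\right\Vert_F\le\left\Vert\bH_1\right\Vert_F\left\Vert\bV\right\Vert+\left\Vert\bU\right\Vert\left\Vert\bH_2\right\Vert_F\le\sqrt{2}\,\left\Vert\bW\right\Vert\left\Vert\bH\right\Vert_F$, using $\left\Vert\bU\right\Vert,\left\Vert\bV\right\Vert\le\left\Vert\bW\right\Vert$ and Cauchy--Schwarz over the two summands, and then collect constants: the prefactor is $\tfrac12\cdot 0.70\cdot\sqrt{1+\delta}\cdot\sqrt{2}=0.70\sqrt{(1+\delta)/2}$, which stays below $0.5$ once $\delta$ is taken small (a fixed small RIP constant is attainable under $m\gtrsim nr$ with a suitable absolute constant). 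Taking $\left\Vert\bH\right\Vert_F\le 1$ and squaring then recovers $\left\Vert\nabla f_{tr}(\bW)\right\Vert_F^2\le 0.25\,\alpha_h^2\left\Vert\bU\bV^T-\bX\bY^T\right\Vert_F^2\left\Vert\bW\right\Vert^2$.

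The step I expect to be the main obstacle is securing the \emph{uniformity} of the first factor over all low-rank $\bW$: the truncation indicator $\mathbb{I}_{\mathcal{E}_i}$ depends on $\bW$ both through the residual and through the data-dependent median threshold, so a naive pointwise argument does not suffice. This is exactly what Proposition~\ref{prop_sample_median_concen} is for, and it is responsible for the $nr\log n$ (rather than $nr$) sample complexity, i.e.\ the extra $\log n$ factor; the second factor only needs the usual $nr$-sample matrix RIP, which is subsumed. A secondary point requiring care is that the residual bound must hold even at corrupted indices, but this is automatic because $\mathcal{E}_i$ caps $|y_i-\mathcal{B}_i(\bW\bW^T)|$ regardless of whether $i$ is an outlier.
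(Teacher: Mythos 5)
Your proposal is correct and follows essentially the same route as the paper's proof: the variational characterization $\left\Vert \nabla f_{tr}(\bW)\right\Vert_F = \max_{\|\bG\|_F\le 1}\langle \nabla f_{tr}(\bW),\bG\rangle$, Cauchy--Schwarz to decouple the truncated residuals from the rank-$2r$ test matrix $\bG_1\bV^T+\bU\bG_2^T$, the uniform median concentration (i.e.\ \eqref{truncation_bound}, resting on Proposition~\ref{prop_sample_median_concen}) to cap each active residual by $0.70\,\alpha_h\left\Vert\bU\bV^T-\bX\bY^T\right\Vert_F$, and the rank-$2r$ RIP of Lemma~\ref{lemma_RIP_A_rec} for the second factor, with the same $\tfrac12\cdot 0.70\cdot\sqrt{2}$ bookkeeping yielding the constant $0.25$. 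Your diagnosis of where the $\log n$ enters (the uniform median bound, not the RIP) and of why outlier indices need no special treatment also matches the paper.
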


Moreover, for the regularizer, we have
\begin{align}
\langle \nabla g(\bW), \bH \rangle
& = \lambda  \langle \bD\bW\left(\bW^{T}\bD\bW\right), \bH\rangle \nonumber\\
& = \lambda \langle \bU\left(\bU^{T}\bU - \bV^{T}\bV\right), \bH_{1} \rangle  + \lambda \langle \bV\left(\bV^{T}\bV - \bU^{T}\bU\right), \bH_{2} \rangle \nonumber\\
& = \lambda \langle \bU\bU^{T}, \bH_{1}\bU^{T} \rangle + \lambda \langle \bV\bV^{T}, \bH_{2}\bV^{T}  \rangle - \lambda \langle \bU\bV^{T}, \bH_{1}\bV^{T} + \bU\bH_{2}^{T} \rangle, \label{eq_reg_g_curv}
\end{align}
and
\begin{align}
\left\Vert \nabla g(\bW) \right\Vert_{F}^{2} 
& = \lambda^{2} \left\Vert  \bD\bW\left(\bW^{T}\bD\bW\right) \right\Vert_{F}^{2} \nonumber\\
& = \lambda^{2} \left\Vert  \bW\left(\bW^{T}\bD\bW\right) \right\Vert_{F}^{2} \nonumber\\
& \le \lambda^{2} \left\Vert  \bW \right\Vert^{2} \left\Vert  \bW^{T}\bD\bW \right\Vert_{F}^{2}\nonumber\\
& = \lambda^{2} \left\Vert  \bW \right\Vert^{2} \left\Vert  \left(\bH + \bZ\bQ\right)^{T}\bD\left(\bH + \bZ\bQ\right) \right\Vert_{F}^{2}\nonumber\\
& = \lambda^{2} \left\Vert  \bW \right\Vert^{2} \left\Vert  \bH^{T}\bD\bH +  \bH^{T}\bD\bZ\bQ  +  \left( \bZ\bQ\right)^{T}\bD\bH \right\Vert_{F}^{2} \label{nablag_xx}\\
& \leq \lambda^{2} \left\Vert  \bW \right\Vert^{2} \left(\left\Vert  \bH^{T}\bD\bH \right\Vert_{F}  + 2\left\Vert  \bH^{T}\bD\bZ \right\Vert_{F} \right)^{2},  \label{equ_regu_g_local_smooth}
\end{align}
where \eqref{nablag_xx} follows from $\bX^{T}\bX = \bY^{T}\bY$.


\subsection{Certifying the RC with Sparse Outliers}\label{sec_certifyRC_outlier}

We are now ready to establish the RC in the neighborhood where $\left\Vert \bH \right\Vert_{F} \le \frac{1}{24} \sigma_{r}\left(\bZ\right) $. Recall that based on Propositions~\ref{prop_local_curv_clean_first}, \ref{prop_local_curv_clean_second} and \ref{prop_outlier_curvature}, and \eqref{eq_reg_g_curv}, we have 
\begin{align}
&\langle \nabla  h\left(\bW\right), \bW - \bZ\bQ \rangle \nonumber\\
&\ge  \langle \nabla f_{\mathrm{tr}} (\bW), \bH\rangle  + \langle  \nabla g(\bW), \bH \rangle \nonumber\\
& \ge  \langle \nabla f_{\mathrm{tr}}^{c} (\bW), \bH\rangle  - \left\vert \langle \nabla f_{\mathrm{tr}}^{o} (\bW), \bH\rangle \right\vert  + \langle  \nabla g(\bW), \bH \rangle   \nonumber\\
&\geq \frac{\gamma_1}{2}  \langle \bU\bV^{T} - \bX\bY^{T},  \bH_{1}\bV^{T} + \bU\bH_{2}^{T}   \rangle  -    0.0006 \alpha_{h} \left\Vert \bU\bV^{T}- \bX\bY^{T} \right\Vert_{F} \left\Vert \bH_{1}\bV^{T} + \bU\bH_{2}^{T}  \right\Vert_{F} \nonumber \\
& \quad -  0.36 \alpha_{h}  \left\Vert \bU\bV^{T}- \bX\bY^{T} \right\Vert_{F}  \left\Vert \bH_{1}\bH_{2}^{T} \right\Vert_{F} - 0.71\alpha_{h}\sqrt{s} \left\Vert \bU\bV^{T} - \bX\bY^{T} \right\Vert_{F} \| \bH_{1}\bV^T+\bU\bH_{2}^{T}\|_F\nonumber\\
& \quad + \langle  \nabla g(\bW), \bH \rangle. \label{RC_lhs}
\end{align}
Set $\alpha_{h} = 6$, we have $\gamma_{1} \approx 0.998348$. Set $\lambda = \gamma_{1}/4$, then we can write 
\begin{align}
&\quad \frac{\gamma_1}{2} \langle \bU\bV^{T} - \bX\bY^{T},  \bH_1\bV^T +\bU\bH_2^{T}   \rangle + \langle \nabla g(\bW), \bH\rangle \nonumber\\
& =  2\lambda  \langle \bU\bV^{T} - \bX\bY^{T},  \bH_1\bV^T +\bU\bH_2^{T}   \rangle +  \lambda \langle \bU\bU^{T}, \bH_{1}\bU^{T} \rangle + \lambda \langle \bV\bV^{T}, \bH_{2}\bV^{T}  \rangle\nonumber\\
&\quad - \lambda \langle \bU\bV^{T}, \bH_{1}\bV^{T} + \bU\bH_{2}^{T} \rangle\nonumber\\
&= \lambda \langle \bW\bW^T- \bZ\bZ^T,\bH\bW^T\rangle -\lambda\langle \bX\bY^T, \bH_1\bV^T+\bU\bH_2^T\rangle \nonumber  \\
&\quad + \lambda \langle \bX\bX^T, \bH_1\bU^T\rangle+\lambda \langle \bY\bY^T, \bH_2\bV^T\rangle,\label{positive_side}
\end{align}
where the last three terms can be re-arranged as 
\begin{align}
& \langle \bX\bX^T, \bH_1\bU^T\rangle + \langle \bY\bY^T, \bH_2\bV^T\rangle -  \langle \bX\bY^T, \bH_1\bV^T+\bU\bH_2^T\rangle  \nonumber \\
 & =   \langle (\bX\bQ)^T\bU, (\bX\bQ)^T\bH_1 \rangle +  \langle  (\bY\bQ)^T\bV, (\bY\bQ)^T \bH_2 \rangle \nonumber\\
 &\quad - \langle (\bY\bQ)^T\bV, (\bX\bQ)^T\bH_1 \rangle - \langle  (\bY\bQ)^T\bH_2, (\bX\bQ)^T \bU\rangle \nonumber\\
 & = \langle (\bY\bQ)^T\bV,  (\bY\bQ)^T \bH_2 - (\bX\bQ)^T\bH_1 \rangle +\langle  (\bX\bQ)^T\bH_1 - (\bY\bQ)^T\bH_2, (\bX\bQ)^T \bU\rangle   \nonumber \\
 & =  \langle (\bY\bQ)^T\bV -(\bX\bQ)^T\bU, (\bY\bQ)^T (\bV-\bY\bQ) - (\bX\bQ)^T(\bU-\bX\bQ) \rangle \nonumber \\
 & = \|  (\bY\bQ)^T\bV - (\bX\bQ)^T\bU\|_F^2 = \|\bH^{T}\bD\bZ\|_F^2, \label{relation_XY}
\end{align}
where \eqref{relation_XY} follows from $\bX^T\bX=\bY^T\bY$. Moreover, using the facts that $\bH^{T}\bZ\bQ$ and $\bH^T\bW$ are symmetric matrices and $\bW^{T}\bZ\bQ  \succeq 0$ \cite{sanghavi2016local}, we have the first term in \eqref{positive_side} bounded as
\begin{align}
 \langle \bW\bW^{T} - \bZ\bZ^{T}, & \bH\bW^{T}   \rangle  
  = \| \bH\bQ^T\bZ^T\|_F^2 + \| \bH^T\bZ\|_F^2 +  \|\bH\bH^T\|_F^2 + 3\langle \bH\bH^T, \bH\bQ^T\bZ^T \rangle \nonumber\\
 & \geq \| \bH\bQ^T\bZ^T\|_F^2 + \| \bH^T\bZ\|_F^2 +  \|\bH\bH^T\|_F^2 - 3 \|\bH\bH^T\|_F \| \bH\bQ^T\bZ^T\|_F \label{RC_lhs_line1}  \\
 & \geq  \| \bH\bQ^T\bZ^T\|_F^2 + \| \bH^T\bZ\|_F^2 +  \|\bH\bH^T\|_F^2 - \frac{1}{8} \| \bH\bQ^T\bZ^T\|_F^2 \label{RC_lhs_line2}\\
 &\geq \frac{7}{8} \| \bH\bQ^T\bZ^T\|_F^2 + \| \bH^T\bZ\|_F^2 +  \|\bH\bH^T\|_F^2\label{RC_lhs_term1}  ,   
\end{align}
where \eqref{RC_lhs_line1} follows from Cauchy-Schwarz inequality, \eqref{RC_lhs_line2} follows from $\left\Vert\bH\bH^{T}\right\Vert_{F} \le \left\Vert\bH\right\Vert_{F}^{2} \le \frac{1}{24}\sigma_{r}\left(\bZ\bQ\right)\left\Vert\bH\right\Vert_{F}\le \frac{1}{24}\left\Vert \bH\bQ^{T}\bZ^{T} \right\Vert_{F}$, where we used $\left\Vert \bH \right\Vert_{F} \le \frac{1}{24} \sigma_{r}\left(\bZ\right) = \frac{1}{24} \sigma_{r}\left(\bZ\bQ\right)$. In addition,  we have 
\begin{align}
\| \bH_{1}\bV^T + \bU\bH_{2}^{T}\|_F & \leq  \sqrt{2}\| \bH\bW^T\|_F \nonumber \\
& \leq \sqrt{2}\| \bH\bH\|_F + \sqrt{2} \|\bH\bQ^T\bZ^T\|_F \nonumber\\ 
&\leq \frac{25}{24} \sqrt{2}\| \bH\bQ^T\bZ^T\|_F,\label{HU_upper}
\end{align}
and
\begin{align} 
\left\Vert \bU\bV^{T} - \bX\bY^{T} \right\Vert_{F}
& \le \frac{1}{\sqrt{2}} \left\Vert \bW\bW^{T} - \bZ\bZ^{T} \right\Vert_{F} \nonumber\\
& = \frac{1}{\sqrt{2}}  \left\Vert \bH\bH^{T} + \bZ\bQ\bH^{T} + \bH\left(\bZ\bQ\right)^{T} \right\Vert_{F}\nonumber \\
 &\le \frac{1}{\sqrt{2}}  \left\Vert \bH\bH^{T} \right\Vert_{F} + \sqrt{2}\left\Vert \bH\bQ^{T}\bZ^T \right\Vert_{F}\nonumber \\
& \leq\frac{49}{48}\sqrt{2}\left\Vert \bH\bQ^{T}\bZ^T \right\Vert_{F},\label{difference_upper}
\end{align}
and 
\begin{equation}\label{eq_bound_h12}
\left\Vert \bH_{1}\bH_{2}^{T}\right\Vert_{F} \le \frac{1}{\sqrt{2}}\left\Vert \bH\bH^{T}\right\Vert_{F} .
\end{equation}

Plugging \eqref{RC_lhs_term1}, \eqref{HU_upper}, \eqref{difference_upper} and \eqref{eq_bound_h12} into \eqref{RC_lhs}, we have
\begin{align}
& \langle \nabla h\left(\bW\right), \bW - \bZ\bQ \rangle  \nonumber \\
& \geq \left[\frac{7}{8}\lambda - (0.0006+0.71\sqrt{s})\alpha_h\frac{25\cdot 49}{24^2}- 0.36\alpha_h\frac{49}{2\cdot24^2} \right]\| \bH\bQ^T\bZ^T\|_F^2\nonumber\\
&\quad  + \lambda \|\bH^T\bZ\|_F^2 + \lambda\left\Vert \bH^{T}\bD\bZ \right\Vert_{F}^{2} +  \lambda \|\bH\bH^{T}\|_F^2 \nonumber \\
&\geq \left(0.1188 - 9.06 \sqrt{s}\right)\| \bH\bQ^T\bZ^T\|_F^2 + \lambda \|\bH^T\bZ\|_F^2 + \lambda\left\Vert \bH^{T}\bD\bZ \right\Vert_{F}^{2} +  \lambda \|\bH\bH^{T}\|_F^2, \label{RC_lhs_sorted}
\end{align}
where \eqref{RC_lhs_sorted} follows from the setting $\alpha_{h} = 6$ and $\lambda = \gamma_{1}/4$.
 
On the other end, since 
\begin{align*}
\left\Vert  \bH^{T}\bD\bH \right\Vert_{F}^{2} 
& = \left\Vert  \bH_{1}^{T}\bH_{1} - \bH_{2}^{T}\bH_{2} \right\Vert_{F}^{2} \\
& \le 2\left(\left\Vert  \bH_{1}\bH_{1}^{T} \right\Vert_{F}^{2} + \left\Vert \bH_{2}\bH_{2}^{T} \right\Vert_{F}^{2}\right)\\
& \le 2\left\Vert  \bH\bH^{T} \right\Vert_{F}^{2} ,
\end{align*}
from Proposition~\ref{prop_local_smooth_clean} and \eqref{equ_regu_g_local_smooth} we have
\begin{align*}
\left\Vert \nabla h\left(\bW\right)  \right\Vert_{F}^{2}
& \le 2\left\Vert \nabla f_{tr}\left(\bW\right)  \right\Vert_{F}^{2}  +  2\left\Vert \nabla g\left(\bW\right)  \right\Vert_{F}^{2}\\
& \le 0.5 \alpha_{h}^{2}  \left\Vert \bU\bV^{T} - \bX\bY^{T} \right\Vert_{F}^{2} \left\Vert \bW \right\Vert^{2} + 2\lambda^{2} \left\Vert  \bW \right\Vert^{2} \left(\left\Vert  \bH^{T}\bD\bH \right\Vert_{F}  + 2\left\Vert  \bH^{T}\bD\bZ \right\Vert_{F} \right)^{2}  \\
& \le \left( 0.5 \alpha_{h}^{2}  \left\Vert \bU\bV^{T} - \bX\bY^{T} \right\Vert_{F}^{2}  + 4\lambda^{2} \left\Vert  \bH^{T}\bD\bH \right\Vert_{F}^{2} + 16\lambda^{2} \left\Vert  \bH^{T}\bD\bZ \right\Vert_{F}^{2} \right)\left\Vert \bW \right\Vert^{2} \\
& \le \left( 0.5 \alpha_{h}^{2} 2\left(\frac{49}{48}\right)^{2}\left\Vert \bH\bQ^{T}\bZ^T \right\Vert_{F}^{2}  + 8\lambda^{2} \left\Vert  \bH\bH^{T} \right\Vert_{F}^{2} + 16\lambda^{2} \left\Vert  \bH^{T}\bD\bZ \right\Vert_{F}^{2} \right) \left(\frac{25}{24}\right)^{2} \left\Vert \bZ \right\Vert^{2} \\
& \le \left(  40.8 \left\Vert \bH\bQ^{T}\bZ^T \right\Vert_{F}^{2}  +  1.1 \left\Vert  \bH^{T}\bD\bZ \right\Vert_{F}^{2} \right) \left\Vert \bZ \right\Vert^{2}  .
\end{align*}
Therefore, if we let $\alpha = 20$ and $\beta = 1000$, we have the right hand side of RC as
\begin{align*}
 \frac{\sigma_{r}^{2}\left(\bZ\right)}{\alpha} \left\Vert \bH \right\Vert_{F}^{2}  + \frac{1}{\beta\left\Vert\bZ\right\Vert^{2}} \left\Vert \nabla h\left(\bW\right)  \right\Vert_{F}^{2}
&\le \frac{\sigma_{r}^{2}\left(\bZ\right)}{20} \left\Vert\bH\right\Vert_{F}^{2} + 0.0408 \left\Vert \bH\bQ^{T}\bZ^T \right\Vert_{F}^2 + 0.0011 \left\Vert  \bH^{T}\bD\bZ \right\Vert_{F}^{2} \\
& \leq 0.0908 \left\Vert \bH\bQ^{T}\bZ^T \right\Vert_{F}^2 + 0.0011 \left\Vert  \bH^{T}\bD\bZ \right\Vert_{F}^{2}.
\end{align*}
Consequently, matching it with the \eqref{RC_lhs_sorted}, we conclude that when $s$ is a sufficiently small constant, RC holds with parameters $(20, 100, \sigma_r(\bZ)/24)$. Note that the parameters $\alpha,\beta, s$ have not been optimized in the proof.

\section{Proof of Robust Initialization} \label{sec_initialization}

As in the description of Algorithm~\ref{alg_minimize_quad_residual}, we split the samples into two portions $\{\by_1,\by_2\}$ in the initialization stage for the convenience of theoretical analysis.  We use the measurements $\by_{2} = \{y_{i}\}_{i=m_{1}+1}^{m}$ to estimate $\left\Vert \bM \right\Vert_{F}$ via the sample median of $\by_2$. Then, we employ the rest of measurements $\by_{1} = \{y_{i}\}_{i=1}^{m_{1}}$ to generate initialization via the truncated spectral method. Besides, denote the outlier fraction of $\by_{1}$ and $\by_{2}$ by $s_{1} = \left\vert\mathcal{S}_{1}\right\vert/m_{1}$ and $s_{2} = \left\vert\mathcal{S}_{2}\right\vert/m_{2}$, respectively, where $\mathcal{S}_{1}$ and $\mathcal{S}_{2}$ are the corresponding outlier supports of $\by_1$ and $\by_2$. Hence, $\max\{s_1,s_2\}\leq 2s$.


Due to Lemma~\ref{lemma_samp_quantile_bound_outlier}, provided $s_{2}$ is small, we have
\begin{equation}
\theta_{\frac{1}{2} - s_{2}}\left( \left\{ \left\vert \mathcal{A}_i(\bM) \right\vert \right\}_{i=m_{1}+1}^{m} \right)
\le \mathrm{med}\left(\left\vert \by_{2} \right\vert\right)
\le \theta_{\frac{1}{2} + s_{2}}\left( \left\{ \left\vert \mathcal{A}_i(\bM) \right\vert \right\}_{i=m_{1}+1}^{m} \right).
\end{equation}


Following Proposition~\ref{prop_sample_median_concen}, if $s_2\leq 2s<0.01$, we have that provided $m\geq c_1nr\log n$ for some large constant $c_1$,
\begin{equation}  \label{threshold_bound_noisy}
0.65 \left\Vert \bM \right\Vert_{F} \le \mathrm{med} \left( \left\vert \by_{2} \right\vert \right) \le 0.70 \left\Vert \bM \right\Vert_{F}
\end{equation}
holds with probability at least $1 - \exp{\left( -c_2m \right)}$ for some constant $c_2$.

Therefore, \eqref{threshold_bound_noisy} guarantees that the threshold used in the truncation is on the order of $\|\bM\|_F$. To emphasize the independence between the measurements used for norm estimation via the sample median and the rest of the measurements used in the truncated spectral method, we define $C_{M} :=  \mathrm{med}\left( \left\vert \by_{2} \right\vert \right)$, which satisfies \eqref{threshold_bound_noisy}. Rewrite \eqref{equ_initial_matrix} as 
\begin{align*}
\bK & = (1-s_1)\bK_1 +s_1 \bK_2 
\end{align*}
where
\begin{equation}\label{def_Y1Y2_noisy}
\bK_1 = \frac{1}{\left\vert \mathcal{S}_{1}^{c} \right\vert} \sum_{i\in \mathcal{S}_{1}^{c}} \mathcal{A}_i(\bM)  \bA_i \mathbb{I}_{\left\{ \left\vert \mathcal{A}_i(\bM)  \right\vert \le   \alpha_{y} C_{M}   \right\}},\quad  
\bK_2 = \frac{1}{\left\vert \mathcal{S}_{1} \right\vert} \sum_{i\in \mathcal{S}_{1}} y_i \bA_i \mathbb{I}_{\left\{ \left\vert y_i \right\vert  \le  \alpha_{y} C_{M} \right\}},
\end{equation}
where $\mathcal{S}_{1}^{c}$ is the the complementary set of $\mathcal{S}_{1}$. Note that
\begin{align*}
\mathbb{E}[\bK_1]=  \frac{1}{\left\vert \mathcal{S}_{1}^{c} \right\vert} \sum_{i\in \mathcal{S}_{1}^{c}} \mathbb{E}\left[\left( \mathcal{A}_i(\bM) \right)  \bA_i \mathbb{I}_{\left\{ \left\vert \mathcal{A}_i(\bM) \right\vert \le   \alpha_{y} C_{M}    \right\}} \right] =  \gamma_{2}   \bM,
\end{align*}
where $\gamma_{2} := \mathbb{E} \left[   \xi^{2}   \mathbb{I}_{\left\{  \left\vert \xi \right\vert \le \alpha_{y} C_{M} /  \left\Vert \bM \right\Vert_{F} \right\}} \right] \le 1$ with $\xi \sim \mathcal{N}\left(0, 1\right)$, and 
\begin{equation*}
\mathbb{E}[\bK_2] =  \frac{1}{\left\vert \mathcal{S}_{1} \right\vert} \sum_{i\in \mathcal{S}_{1}} y_i \mathbb{E}[\bA_i] \mathbb{I}_{\left\{ \left\vert y_i \right\vert  \le  \alpha_{y} C_{M} \right\}} =  \boldsymbol{0}.
\end{equation*}
We have the following proposition on the concentration of $\bK$, of which the proof is given in Appendix~\ref{proof_bound_Y_noisy}.

\begin{proposition}\label{bound_Y_noisy}
With probability at least $1-n^{-c_1}$, we have
\begin{equation}\label{concentration_Y}
\left\Vert \bK - \left(1-s_1\right)  \gamma_{2}   \bM \right\Vert \le C \alpha_y   \sqrt{\frac{n\log n}{m}} \|\bM\|_F,
\end{equation}
provided that $m\geq c_2 \log n$, where $c_1,c_2,C>1$ are numerical constants.
\end{proposition}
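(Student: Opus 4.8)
The plan is to bound the deviation by splitting $\bK$ into its clean and outlier contributions and controlling each separately. Writing $\bK - (1-s_1)\gamma_{2}\bM = (1-s_1)(\bK_1 - \gamma_{2}\bM) + s_1 \bK_2$ and recalling $\mathbb{E}[\bK_1] = \gamma_{2}\bM$ and $\mathbb{E}[\bK_2]=\boldsymbol{0}$, it suffices by the triangle inequality to show that each of $\|\bK_1 - \gamma_{2}\bM\|$ and $s_1\|\bK_2\|$ is at most a constant multiple of $\alpha_{y}\sqrt{n\log n/m}\,\|\bM\|_F$. Throughout I would condition on the sample-split event \eqref{threshold_bound_noisy}, so that $C_M = \mathrm{med}(|\by_2|)\in[0.65,0.70]\|\bM\|_F$ is a fixed quantity that, crucially, is independent of the first-batch sensing matrices $\{\bA_i\}_{i\le m_1}$; this is exactly what makes the truncation thresholds deterministic. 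Note also that $\bM$ is fixed here, so no covering argument over low-rank matrices is needed and the spectral norm can be controlled directly by a matrix concentration inequality.

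For the clean term, the summands of $\bK_1$ are $\mathcal{A}_i(\bM)\bA_i\mathbb{I}_{\{|\mathcal{A}_i(\bM)|\le\alpha_y C_M\}}$, whose scalar factor is deterministically bounded by $\alpha_y C_M\lesssim\alpha_y\|\bM\|_F$ while $\bA_i$ is an unbounded Gaussian matrix, so the bounded matrix Bernstein inequality does not apply verbatim. I would first restrict to the high-probability event $\mathcal{F}=\{\max_i\|\bA_i\|\le c_0\sqrt{n}\}$, whose complement has probability at most $m\,e^{-cn}$ by Gaussian spectral-norm concentration and a union bound, and on $\mathcal{F}$ replace each summand by its $\|\bA_i\|$-truncated version, the induced mean bias being the expectation over a rare event and hence super-polynomially small. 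Using $|\mathcal{A}_i(\bM)|\mathbb{I}\le\alpha_y C_M$ to control both the per-term spectral bound $L\sim\alpha_y\|\bM\|_F\sqrt{n}/m$ and the matrix variance $v\lesssim\alpha_y^2\|\bM\|_F^2 n/m$ (the latter from $\mathbb{E}[\bA_i\bA_i^T]=n_2\bI$ after isolating the rank-one correction along $\bM$ that the indicator carries), matrix Bernstein yields, for $t=C\alpha_y\sqrt{n\log n/m}\,\|\bM\|_F$, a failure probability of order $2n\exp(-ct^2/v)=2n^{1-c'C^2}$. Here $t^2/v\sim\log n$ is precisely what is needed to beat the dimension factor $n_1+n_2\sim 2n$, and the requirement $Lt\lesssim v$ reduces exactly to $m\gtrsim\log n$, i.e. the hypothesis $m\ge c_2\log n$.

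For the outlier term, conditioned on $C_M$ and on the (arbitrary but fixed) outlier values, $s_1\bK_2=\frac{1}{m_1}\sum_{i\in\mathcal{S}_1}\eta_i\bA_i\mathbb{I}_{\{|\eta_i|\le\alpha_y C_M\}}$ is a linear combination of the independent Gaussian matrices $\{\bA_i\}_{i\in\mathcal{S}_1}$ with deterministic coefficients of magnitude at most $\alpha_y C_M$, of which at most $|\mathcal{S}_1|=s_1 m_1$ are nonzero; it is therefore itself a centered Gaussian matrix with entrywise standard deviation at most $\sqrt{s_1}\,\alpha_y C_M/\sqrt{m_1}$. Standard concentration of the spectral norm of a Gaussian matrix then gives $s_1\|\bK_2\|\lesssim\sqrt{s_1}\,\alpha_y\sqrt{n/m}\,\|\bM\|_F$ with probability at least $1-n^{-c}$, which is dominated by the target bound (it carries an extra $\sqrt{s_1}$ and no $\log n$). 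Combining the two estimates through the triangle inequality and a union bound over the $O(1)$ failure events yields \eqref{concentration_Y} with probability at least $1-n^{-c_1}$.

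The main obstacle is the clean-part estimate. The summands are products of a bounded scalar and an unbounded Gaussian matrix, so one must either truncate $\|\bA_i\|$ and argue the bias away---the route above---or invoke a sub-exponential matrix Bernstein inequality; in either case the matrix variance must be computed with care, since the truncation indicator is correlated with $\bA_i$ through $\mathcal{A}_i(\bM)=\langle\bA_i,\bM\rangle$ and contributes a rank-one term along $\bM$. Everything else---the outlier bound and the bookkeeping of the failure probabilities---is comparatively routine, and it is the sample split that licenses treating $C_M$, hence all truncation thresholds, as deterministic and independent of the first-batch randomness.
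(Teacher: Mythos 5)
Your proposal is correct and starts from the same skeleton as the paper's proof: the split $\bK-(1-s_1)\gamma_2\bM=(1-s_1)(\bK_1-\gamma_2\bM)+s_1\bK_2$, the observation that the sample split makes $C_M=\mathrm{med}(|\by_2|)$ a deterministic threshold independent of the first batch, and the identities $\mathbb{E}[\bK_1]=\gamma_2\bM$, $\mathbb{E}[\bK_2]=\boldsymbol{0}$ (no covering argument, since $\bM$ is fixed). Where you diverge is in the concentration machinery for each piece. For the clean part the paper does not truncate $\|\bA_i\|$; it invokes the Orlicz-norm (sub-Gaussian) matrix Bernstein inequality of Lemma~\ref{lemma_orlicz_matrix_bernstein}, bounding $\|\bS_i\|_{\psi_2}\lesssim\alpha_y\sqrt{n}\,\|\bM\|_F$ via $\|\bA_i\|_{\psi_2}\le 2\sqrt{n}+c$ and the matrix variance by $c\,n\alpha_y^2\|\bM\|_F^2$ using the crude PSD domination $\mathbb{E}\bigl[(\mathcal{A}_i(\bM))^2\bA_i\bA_i^T\mathbb{I}\bigr]\preceq\alpha_y^2C_M^2\,\mathbb{E}[\bA_i\bA_i^T]$ --- so the rank-one correction you worry about need not be isolated. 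This is exactly the ``sub-exponential Bernstein'' alternative you flag; it buys freedom from the truncation-bias bookkeeping at the cost of quoting a less elementary lemma, while your truncate-and-debias route is sound since the bias per term is $O(\sqrt{n}\,e^{-cn})\cdot\alpha_y\|\bM\|_F$ and hence negligible. For the outlier part the paper again applies the same Orlicz Bernstein to $\bT_i=y_i\bA_i\mathbb{I}_{\{|y_i|\le\alpha_yC_M\}}$, obtaining $s_1\|\bK_2\|\lesssim\alpha_y\|\bM\|_F\sqrt{n}\max\{\sqrt{s_1\log n/m},\,\log n/m\}$; your observation that, conditionally on the (deterministic) outlier values and $C_M$, the matrix $s_1\bK_2$ has independent centered Gaussian entries with common variance $\frac{1}{m_1^2}\sum_{i\in\mathcal{S}_1}c_i^2\le s_1\alpha_y^2C_M^2/m_1$ is more elementary and slightly sharper (no $\log n$ factor), and either bound is dominated by the clean-part term. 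Your accounting of where $m\gtrsim\log n$ enters --- making the large-deviation term of Bernstein subordinate to the variance term --- matches the paper's use of $|\mathcal{S}_1^c|\gtrsim\log n$. No gaps.
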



Let $\epsilon:=C \alpha_y   \sqrt{\frac{n\log n}{m}}$ for short-hand notations. Denote $\tilde{n} = \min\{n_{1}, n_{2}\}$. Let $\sigma_1(\bK)\geq \sigma_2(\bK)\geq \cdots \sigma_{\tilde{n}}(\bK)$ be the singular values of $\bK$ in a nonincreasing order, and $\sigma_1(\bM)\geq \sigma_2(\bM)\geq \cdots \sigma_{\tilde{n}}(\bM)$ be the singular values of $\bM$ in a nonincreasing order. Since $\bM$ has rank $r$, we know $\sigma_{r+1}(\bM)  = \dots = \sigma_{\tilde{n}}(\bM) = 0$. By the Weyl's inequality and \eqref{concentration_Y}, we have 
\begin{equation}
\left\vert \sigma_i(\bK) - \left(1-s_1\right) \gamma_{2}\sigma_{i}(\bM) \right\vert \le \epsilon \left\Vert \bM \right\Vert_F, \quad i=1,2,\dots,\tilde{n},
\end{equation}
which implies 
\begin{equation}
\sigma_{i}(\bK)  \le \epsilon \left\Vert \bM \right\Vert_F, \quad i\ge(r+1).
\end{equation}


By definition, $\bU_{0} = \bC_{L} \boldsymbol{\Sigma}^{1/2} $, $\bV_{0} = \bC_{R} \boldsymbol{\Sigma}^{1/2} $ and $\bW_{0} = \begin{bmatrix} \bU_{0} \\ \bV_{0} \end{bmatrix}$, where $\bC_{L}\boldsymbol{\Sigma}\bC_{R}^{T}:=$ rank-$r$ SVD of $\bK$, with $\bC_{L} \in \mathbb{R}^{n_{1}\times r}$, $\bC_{R} \in \mathbb{R}^{n_{2}\times r}$ and $\boldsymbol{\Sigma} \in \mathbb{R}^{r\times r}$. Recalling $\bZ = \begin{bmatrix} \bX \\ \bY \end{bmatrix}$, then according to Lemma~\ref{lemma_fronorm_enhancebound}, we have
\begin{align*}
\left\Vert \bW_{0}\bW_{0}^{T} -  \bZ\bZ^{T} \right\Vert_{F}
& \le 2 \left\Vert \bU_{0}\bV_{0}^{T} - \bM \right\Vert_{F} \\
& = 2 \left\Vert \bC_{L}\boldsymbol{\Sigma}\bC_{R}^{T} - (1-s_1)\gamma_2\bM\right\Vert_F + 2\left\Vert \left((1-s_1)\gamma_2 - 1\right)\bM \right\Vert_{F}  \\
& \leq 2\sqrt{2r}\left(  \left\Vert \bC_{L}\boldsymbol{\Sigma}\bC_{R}^{T} - \bK \right\Vert+\left\Vert \bK - (1-s_1)\gamma_2 \bM \right\Vert \right) + 2 \left| (1-s_1)\gamma_2 - 1\right| \cdot \| \bM\|_F  \\
& \leq  2\sqrt{2r} \left( \sigma_{r+1}(\bK) + \epsilon \|\bM\|_F \right) + 2\left| (1-s_1)\gamma_2-1\right| \cdot \| \bM\|_F \\
& \le \left( 4\sqrt{2r} \epsilon +  2s_1\gamma_2 +  2(1-\gamma_2)  \right) \|\bM\|_F .
\end{align*}
By Lemma~\ref{lemma_fronorm_lowbound}, we have
\begin{align*}
\mathrm{dist}\left(\bW_{0}, \bZ\right)
&\le \frac{\left\Vert \bW_{0}\bW_{0}^{T} -  \bZ\bZ^{T} \right\Vert_{F}}{\sqrt{2\left(\sqrt{2}-1\right)} \sigma_{r} \left( \bZ \right)} \\
&\le \frac{ \left( 4\sqrt{2r} \epsilon +  2s_1\gamma_2 +  2(1-\gamma_2)  \right) \|\bM\|_F}{\sqrt{2\left(\sqrt{2}-1\right)} \sigma_{r} \left( \bZ \right)} \\
& = \frac{ \left( 2\sqrt{2r} \epsilon +  s_1\gamma_2 +  (1-\gamma_2)  \right) \|\bM\|_F}{ \sqrt{\sqrt{2}-1} \sqrt{\sigma_{r} \left( \bM \right)}},
\end{align*} 
where we use the fact that for all $i$, $\sigma_{i}(\bX) = \sigma_{i}(\bY) = \sigma_{i}(\bZ)/\sqrt{2} = \sqrt{\sigma_{i} \left( \bM \right)}$.

Therefore, we have $\mathrm{dist}\left(\bW_{0}, \bZ\right) \le \frac{1}{24} \sigma_{r}\left(\bZ\right)$ if 
\begin{equation*}
\max\{\sqrt{r}\epsilon , s_1, 1-\gamma_2 \} \leq c\frac{\sigma_r(\bM)}{\|\bM\|_F} = \frac{c}{\sqrt{r}\bar{\kappa} } .
\end{equation*}
To be more specific, we need $s_1<2s\leq   c_1/(\sqrt{r}\bar{\kappa})$, $m> c_2 \alpha_y^2 nr^2 \bar{\kappa}^2\log n $, and 
\begin{align*}
1-\gamma_2  =  \mathbb{E}_{\xi\sim\mathcal{N}(0, 1)}\left[ \xi^{2}  \mathbb{I}_{\left\{  \left\vert \xi  \right\vert > \alpha_{y} C_{M} /  \left\Vert \bM \right\Vert_{F}\right\} } \right] \leq \frac{1}{35\sqrt{r}\bar{\kappa}}.
\end{align*}
The last condition can be satisfied by setting $\alpha_y = 2\log{(r^{1/4}\bar{\kappa}_{0}^{1/2} + 20)}$, as long as $\bar{\kappa}_0$ is an upper bound of $\bar{\kappa}$ such that $\bar{\kappa}\leq \bar{\kappa}_0$.



\section{Conclusion}\label{sec_conclusions}

In this paper, we present a median-truncated gradient descent algorithm to improve the robustness of low-rank matrix recovery from random linear measurements in the presence of outliers. The effectiveness of the proposed algorithm is provably guaranteed by theoretical analysis, and validated through various numerical experiments as well. In the future work, we will extend the proposed approach to robust low-rank recovery problems such as robust PCA and blind deconvolution.

\section*{Acknowledgements}
Preliminary results of this paper were presented in part at 2017 International Conference on Sampling Theory and Applications (SampTA) \cite{li2017nonconvex}. The work of Y. Li and Y. Chi is supported in part by AFOSR under the grant FA9550-15-1-0205, by ONR under the grant N00014-15-1-2387, and by NSF under the grants CAREER ECCS-1650449, ECCS-1462191 and CCF-1704245. The work of H. Zhang and Y. Liang is supported in part by AFOSR under the grant AFOSR FA9550-16-1-0077, by NSF under grants ECCS 16-09916 and CCF 17-04169.


\appendix
\section*{Appendices}

\section{Useful Lemmas}

\begin{lemma}\cite[Lemma 1]{zhang2016provable}\label{lemma_quantile_func_bound}
Suppose $F(\cdot)$ is cumulative distribution function (i.e., non-decreasing and right-continuous) with continuous density function $f(\cdot)$. Assume the samples $\{X_{i}\}_{i=1}^{m}$ are i.i.d. drawn from $f$. Let $0<p<1$. If $l < f(\theta) < L$ for all $\theta$ in $\{\theta: \left\vert \theta - \theta_{p}\right\vert \le \epsilon \}$, then 
\begin{equation*}
\left\vert \theta_{p}(\{X_{i}\}_{i=1}^{m}) - \theta_{p}(F)\right\vert < \epsilon
\end{equation*}
holds with probability at least $1-2\exp{(-2m\epsilon^{2}l^{2})}$.
\end{lemma}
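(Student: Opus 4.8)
The plan is to reduce the two-sided deviation of the sample $p$-quantile to deviations of the empirical CDF evaluated at the two \emph{deterministic} points $\theta_p \pm \epsilon$, and then to control those deviations with an elementary Bernoulli concentration bound (Hoeffding's inequality). Write $\hat\theta_p = \theta_p(\{X_i\}_{i=1}^m)$, $\theta_p = \theta_p(F)$, and let $\hat F_m(x) = \frac{1}{m}\sum_{i=1}^m \mathbb{I}_{\{X_i \le x\}}$ be the empirical CDF. Since $f$ is continuous, $F$ is continuous, so $F(\theta_p) = p$ exactly. The crucial deterministic observation is that $\hat F_m$ is non-decreasing and right-continuous, so the level set $\{x : \hat F_m(x) \ge p\}$ is a closed half-line $[\hat\theta_p, \infty)$; equivalently $\hat\theta_p \le y \iff \hat F_m(y) \ge p$ for every $y$. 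Consequently the upper-tail event $\{\hat\theta_p > \theta_p + \epsilon\}$ is exactly $\{\hat F_m(\theta_p + \epsilon) < p\}$, and the lower-tail event $\{\hat\theta_p < \theta_p - \epsilon\}$ is contained in $\{\hat F_m(\theta_p - \epsilon) \ge p\}$.

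Next I would convert the density lower bound into a fixed margin. Because $l < f$ on the entire $\epsilon$-neighborhood $\{\theta : |\theta - \theta_p| \le \epsilon\}$, integrating the density gives $F(\theta_p + \epsilon) = p + \int_{\theta_p}^{\theta_p+\epsilon} f \ge p + l\epsilon$ and symmetrically $F(\theta_p - \epsilon) \le p - l\epsilon$. Now each indicator $\mathbb{I}_{\{X_i \le \theta_p + \epsilon\}}$ is an i.i.d. Bernoulli variable with mean $F(\theta_p + \epsilon)$, so on the event $\{\hat F_m(\theta_p + \epsilon) < p\}$ the empirical average $\hat F_m(\theta_p + \epsilon)$ lies below its expectation by at least $l\epsilon$. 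Hoeffding's inequality for bounded $[0,1]$ summands then bounds this probability by $\exp(-2m(l\epsilon)^2) = \exp(-2m l^2\epsilon^2)$, and the lower tail is handled identically using the margin at $\theta_p - \epsilon$. A union bound over the two tails yields the stated failure probability $2\exp(-2m\epsilon^2 l^2)$, giving $|\hat\theta_p - \theta_p| < \epsilon$ on the complementary event.

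The delicate points I would treat carefully are entirely deterministic rather than probabilistic. The first is the boundary conventions when translating an event on the random quantile $\hat\theta_p$ into an event on $\hat F_m$ at a fixed argument: this relies on the right-continuity of $\hat F_m$ together with the $\inf$ in the definition of the generalized quantile $F^{-1}(\tau) = \inf\{x : F(x) \ge \tau\}$, and on the strict-versus-nonstrict inequalities matching up so that the two complementary tails combine cleanly. The second worth noting is that the upper density bound $L$ in the hypothesis plays \emph{no role} in this one-sided-margin argument; only the lower bound $l$ enters, which is precisely why the exponent depends on $l$ alone. Once the quantile deviation is reduced to a deviation of an average of i.i.d.\ bounded indicators, the remaining probabilistic content is standard, so I expect the main obstacle to be bookkeeping in the reduction step rather than any genuine concentration difficulty.
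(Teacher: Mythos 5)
Your proof is correct and follows the standard argument: reduce the two tails of the sample quantile to deviations of the empirical CDF at the deterministic points $\theta_p\pm\epsilon$, convert the density lower bound $l$ into a margin of $l\epsilon$, and finish with Hoeffding plus a union bound, which is essentially the proof given in the cited source \cite[Lemma 1]{zhang2016provable} (the present paper only imports the lemma and does not reprove it). The only cosmetic discrepancy is that your complement event yields $\left\vert \hat\theta_p - \theta_p\right\vert \le \epsilon$ rather than the strict inequality in the statement, and, as you correctly observe, the upper density bound $L$ is never needed.
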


\begin{lemma}\cite[Lemma 2]{zhang2016provable}\label{lem_orderstats}
Given a vector $\bX=(X_1,X_2,...,X_n)$, where we order the entries in a non-decreasing manner $X_{(1)}\le X_{(2)}\le ... \le X_{(n-1)}\le X_{(n)}$.
Given another vector $\bY=(Y_1,Y_2,..., Y_n)$, then 
\begin{flalign}
	|X_{(k)}-Y_{(k)}|\le \|\bX-\bY\|_\infty,
\end{flalign}
holds for all $k=1,..., n.$
\end{lemma}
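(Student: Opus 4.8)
The plan is to read this as the statement that sorting is a $1$-Lipschitz map in the $\ell_\infty$ norm, and to derive it from two elementary properties of the order-statistic map: translation equivariance and monotonicity under the componentwise partial order. First I would abbreviate $\delta = \|\bX - \bY\|_\infty = \max_i |X_i - Y_i|$, so that the two-sided coordinatewise bound $X_i - \delta \le Y_i \le X_i + \delta$ holds for every index $i$.

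The key step is to establish monotonicity: if two vectors $\ba,\bb\in\mathbb{R}^n$ satisfy $a_i \le b_i$ for all $i$, then their order statistics obey $a_{(k)} \le b_{(k)}$ for all $k$. I would prove this through the variational (min--max) representation of the $k$-th smallest entry,
\begin{equation*}
a_{(k)} = \min_{\substack{S\subseteq\{1,\dots,n\}\\ |S|=k}}\ \max_{i\in S} a_i,
\end{equation*}
which holds because the minimizing subset is exactly the set of indices of the $k$ smallest entries, whose maximum is $a_{(k)}$, while every other size-$k$ subset has a larger maximum. Since $a_i\le b_i$ forces $\max_{i\in S}a_i \le \max_{i\in S}b_i$ for each fixed $S$, taking the minimum over all size-$k$ subsets preserves the inequality and yields $a_{(k)}\le b_{(k)}$. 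A purely combinatorial alternative would threshold at $a_{(k)}$ and count the coordinates exceeding it, but that route requires care with ties, whereas the variational identity sidesteps the issue.

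With monotonicity in hand, I would finish by combining it with translation equivariance, namely that adding a constant $c$ to every coordinate shifts every order statistic by the same $c$ and leaves the sorted order unchanged. Applying monotonicity to the pair $X_i - \delta \le Y_i$ gives $X_{(k)} - \delta \le Y_{(k)}$, and applying it to $Y_i \le X_i + \delta$ gives $Y_{(k)} \le X_{(k)} + \delta$; chaining these two bounds produces $|X_{(k)} - Y_{(k)}| \le \delta = \|\bX - \bY\|_\infty$ for every $k$, as claimed.

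The argument carries essentially no analytic content: the only substantive ingredient is the monotonicity of order statistics, so the one place that warrants care is its justification (either nailing the variational identity or controlling ties in the counting version). Everything else reduces to a two-sided sandwich and is immediate, so I do not anticipate any real obstacle beyond writing the monotonicity step cleanly.
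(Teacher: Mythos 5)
Your proof is correct: the min--max representation $a_{(k)}=\min_{|S|=k}\max_{i\in S}a_i$ gives monotonicity of order statistics, and the two-sided sandwich $X_i-\delta\le Y_i\le X_i+\delta$ combined with translation equivariance yields the claim. The paper itself offers no proof of this lemma (it is quoted from the cited reference \cite{zhang2016provable}), and your argument is essentially the standard one used there, so there is nothing further to reconcile.
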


\begin{lemma}\cite[Lemma 3]{zhang2016provable}\label{lemma_samp_quantile_bound_outlier}
Consider clean samples $\{\tilde{y}_{i}\}_{i=1}^{m}$. If a fraction $s$ of them are corrupted by outliers, one obtains contaminated samples $\{y_{i}\}_{i=1}^{m}$, which contain $sm$ corrupted samples and $(1-s)m$ clean samples. Then for a quantile $p$ such that $s < p < 1-s$, we have 
\begin{equation}
\theta_{p-s}\left( \left\{\tilde{y}_{i}\right\}_{i=1}^{m} \right)  \le  \theta_{p}\left( \left\{y_{i}\right\}_{i=1}^{m} \right) \le \theta_{p+s}\left( \left\{\tilde{y}_{i}\right\}_{i=1}^{m} \right).
\end{equation}
\end{lemma}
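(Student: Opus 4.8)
The plan is to work directly with the empirical cumulative distribution functions and to exploit the fact that corrupting a fraction $s$ of the samples perturbs any count of the form $\#\{i : y_i \le x\}$ by at most $sm$. Let $\hat{F}_{\tilde{y}}$ and $\hat{F}_{y}$ denote the empirical CDFs of the clean collection $\{\tilde{y}_i\}_{i=1}^m$ and the contaminated collection $\{y_i\}_{i=1}^m$, and recall that by the generalized quantile definition $\theta_p(\{y_i\}) = \inf\{x : \hat{F}_{y}(x) \ge p\}$. The key elementary observation is that, since $\{y_i\}$ and $\{\tilde{y}_i\}$ differ in exactly the $sm$ entries indexed by $\mathcal{S}$, for every threshold $x$ the counts $N_{y}(x) = \#\{i: y_i \le x\}$ and $N_{\tilde{y}}(x) = \#\{i: \tilde{y}_i \le x\}$ satisfy $|N_{y}(x) - N_{\tilde{y}}(x)| \le sm$, equivalently $|\hat{F}_{y}(x) - \hat{F}_{\tilde{y}}(x)| \le s$ pointwise.

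For the upper bound I would evaluate $\hat{F}_{y}$ at the candidate threshold $x^{\star} = \theta_{p+s}(\{\tilde{y}_i\})$. By definition of the clean quantile, $\hat{F}_{\tilde{y}}(x^{\star}) \ge p+s$, so the pointwise inequality gives $\hat{F}_{y}(x^{\star}) \ge (p+s) - s = p$; hence $x^{\star}$ lies in the set $\{x : \hat{F}_{y}(x) \ge p\}$ whose infimum defines $\theta_p(\{y_i\})$, yielding $\theta_p(\{y_i\}) \le \theta_{p+s}(\{\tilde{y}_i\})$. For the lower bound I would argue at any $x < \theta_{p-s}(\{\tilde{y}_i\})$: by definition of the infimum such $x$ satisfies $\hat{F}_{\tilde{y}}(x) < p-s$, so the pointwise bound forces $\hat{F}_{y}(x) < (p-s)+s = p$; therefore no $x$ below $\theta_{p-s}(\{\tilde{y}_i\})$ belongs to $\{x:\hat{F}_{y}(x)\ge p\}$, and taking the infimum gives $\theta_p(\{y_i\}) \ge \theta_{p-s}(\{\tilde{y}_i\})$. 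Together these establish the claimed sandwich, while the hypothesis $s < p < 1-s$ ensures the shifted levels $p\pm s$ stay in $(0,1)$ so all quantiles are well defined.

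The argument is essentially combinatorial and I do not anticipate a genuine obstacle; the one point requiring care is the interface between the strict and non-strict inequalities induced by the infimum in the generalized quantile definition. This is clean here because $|\mathcal{S}| = sm$ is an integer, so $N_{y}$ and $N_{\tilde{y}}$ differ by at most the integer $sm$ and the level shifts $p \mapsto p\pm s$ correspond to integer shifts of the order-statistic index. As an equivalent and perhaps more transparent route, which I would present as a remark, one may peel off the $sm$ corrupted coordinates so that both collections share a common sub-multiset $W$ of size $(1-s)m$; the monotonicity of order statistics under insertion of $t$ arbitrary values, $W_{(k-t)} \le (W \cup E)_{(k)} \le W_{(k)}$, then reproduces both inequalities upon setting $t = sm$ and tracking the index $k = \lceil pm\rceil$.
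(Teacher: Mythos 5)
Your proof is correct: the pointwise bound $|\hat{F}_{y}(x)-\hat{F}_{\tilde{y}}(x)|\le s$ follows exactly as you say since the two multisets differ in only $sm$ coordinates, and the two quantile inequalities then follow from the infimum definition together with the right-continuity and monotonicity of the empirical CDF. Note that this paper does not prove the lemma itself --- it is imported verbatim as Lemma 3 of \cite{zhang2016provable} --- so there is no in-paper argument to compare against; your CDF-perturbation route and the order-statistic route you sketch as a remark (insert/delete $sm$ values and track the shift of the index $\lceil pm\rceil$, using the companion fact that adding $t$ values moves the $k$th order statistic by at most $t$ positions) are both valid and essentially equivalent to the argument in the cited source.
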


\begin{lemma}\cite[Lemma 5.4]{tu2016low}\label{lemma_fronorm_lowbound}
For any $\bX$, $\bU \in \mathbb{R}^{n \times r}$, we have 
\begin{equation}\label{dist_upper}
\left\Vert \bX\bX^{T} - \bU\bU^{T} \right\Vert_{F} \ge \sqrt{2\left(\sqrt{2}-1\right)} \sigma_{r} \left( \bX \right) \mathrm{dist}(\bX,\bU).
\end{equation}
\end{lemma}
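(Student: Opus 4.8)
The plan is to exploit the rotational invariance of the inequality to reduce to an aligned configuration, then expand the squared Frobenius norm and control a single sign-indefinite cross term. Both sides are invariant under $\bU\mapsto\bU\bP$ for orthogonal $\bP$: the left-hand side because $(\bU\bP)(\bU\bP)^T=\bU\bU^T$, and the right-hand side because $\mathrm{dist}(\bX,\bU)$ is by definition a minimum over such $\bP$. So I would first replace $\bU$ by $\bU\bP^{\star}$, where $\bP^{\star}=\argmin_{\bP\bP^T=\bI}\left\Vert\bX-\bU\bP\right\Vert_F$. After this normalization $\mathrm{dist}(\bX,\bU)=\left\Vert\bU-\bX\right\Vert_F$, and the first-order optimality condition of the orthogonal Procrustes problem forces $\bX^T\bU$ to be symmetric and positive semidefinite. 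Writing $\boldsymbol{\Delta}=\bU-\bX$, symmetry of $\bX^T\bU=\bX^T\bX+\bX^T\boldsymbol{\Delta}$ is equivalent to symmetry of $\bX^T\boldsymbol{\Delta}$, and the PSD property reads $\bX^T\bX+\bX^T\boldsymbol{\Delta}\succeq 0$. These are the only structural inputs I will use.

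Next I would expand the left-hand side using $\bU\bU^T-\bX\bX^T=\bX\boldsymbol{\Delta}^T+\boldsymbol{\Delta}\bX^T+\boldsymbol{\Delta}\boldsymbol{\Delta}^T$. Squaring the Frobenius norm and using the symmetry of $\bX^T\boldsymbol{\Delta}$ to evaluate the cross terms yields
\begin{equation*}
\left\Vert \bU\bU^T-\bX\bX^T \right\Vert_F^2 = 2\left\Vert \bX\boldsymbol{\Delta}^T \right\Vert_F^2 + 2\left\Vert \bX^T\boldsymbol{\Delta} \right\Vert_F^2 + 4\,\mathrm{Tr}\!\left( \bX^T\boldsymbol{\Delta}\,\boldsymbol{\Delta}^T\boldsymbol{\Delta} \right) + \left\Vert \boldsymbol{\Delta}\boldsymbol{\Delta}^T \right\Vert_F^2 .
\end{equation*}
The leading term is benign: since $\bX^T\bX\succeq\sigma_r^2(\bX)\bI$, I would bound $\left\Vert\bX\boldsymbol{\Delta}^T\right\Vert_F^2=\mathrm{Tr}(\bX^T\bX\,\boldsymbol{\Delta}^T\boldsymbol{\Delta})\ge\sigma_r^2(\bX)\left\Vert\boldsymbol{\Delta}\right\Vert_F^2$, which already supplies a clean multiple of $\sigma_r^2(\bX)\left\Vert\boldsymbol{\Delta}\right\Vert_F^2$. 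The second and fourth terms are nonnegative and can be kept or discarded.

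The crux is the sign-indefinite cross term $\mathrm{Tr}(\bX^T\boldsymbol{\Delta}\,\boldsymbol{\Delta}^T\boldsymbol{\Delta})$, which can be very negative when $\left\Vert\boldsymbol{\Delta}\right\Vert_F$ is large. Two naive estimates each fail in isolation: Cauchy--Schwarz bounds it by $\left\Vert\bX^T\boldsymbol{\Delta}\right\Vert_F\left\Vert\boldsymbol{\Delta}\boldsymbol{\Delta}^T\right\Vert_F$ but discards the factor $\sigma_r(\bX)$, while completing the square in $\bX^T\boldsymbol{\Delta}$ costs a full $-\left\Vert\boldsymbol{\Delta}\boldsymbol{\Delta}^T\right\Vert_F^2$ and breaks down precisely when $\left\Vert\boldsymbol{\Delta}\right\Vert_F\gtrsim\sigma_r(\bX)$. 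The resolution is to bring in the PSD optimality constraint: from $\bX^T\bX+\bX^T\boldsymbol{\Delta}\succeq 0$ and $\boldsymbol{\Delta}^T\boldsymbol{\Delta}\succeq 0$, the trace of the product is nonnegative, giving
\begin{equation*}
\mathrm{Tr}\!\left( \bX^T\boldsymbol{\Delta}\,\boldsymbol{\Delta}^T\boldsymbol{\Delta} \right) \ge -\,\mathrm{Tr}\!\left( \bX^T\bX\,\boldsymbol{\Delta}^T\boldsymbol{\Delta} \right) = -\left\Vert \bX\boldsymbol{\Delta}^T \right\Vert_F^2,
\end{equation*}
which is exactly the estimate that stays effective in the large-$\boldsymbol{\Delta}$ regime where square-completion fails. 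I would then combine the square-completion bound (sharp for $\left\Vert\boldsymbol{\Delta}\right\Vert_F\lesssim\sigma_r(\bX)$) with this PSD bound (sharp for $\left\Vert\boldsymbol{\Delta}\right\Vert_F\gtrsim\sigma_r(\bX)$) through a short case split on $\left\Vert\boldsymbol{\Delta}\right\Vert_F$ versus $\sigma_r(\bX)$, reducing the remaining inequality (by a worst-case, essentially rank-one, configuration) to a one-parameter scalar optimization whose minimum produces the constant $2(\sqrt2-1)$, the relevant root of $t^2+2t-1=0$; tightness is attained when $\bX^T\boldsymbol{\Delta}=-\bX^T\bX$ and $\left\Vert\boldsymbol{\Delta}\right\Vert_F^2=\sqrt2\,\sigma_r^2(\bX)$, where the PSD constraint is active. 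Taking square roots gives the claim. The main obstacle is this last step: taming the indefinite cubic cross term \emph{uniformly} over all magnitudes of $\boldsymbol{\Delta}$, which forces the simultaneous use of the singular-value lower bound and the PSD alignment constraint rather than either one alone.
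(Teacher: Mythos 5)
First, a bookkeeping note: the paper does not prove this lemma; it is imported verbatim from \cite[Lemma 5.4]{tu2016low}, so your attempt can only be judged against the standard argument. Your setup is correct and contains all the essential ingredients: the Procrustes alignment making $\bX^T\bU=\bX^T\bX+\bX^T\boldsymbol{\Delta}$ symmetric PSD, the expansion
$\left\Vert \bU\bU^T-\bX\bX^T\right\Vert_F^2=2\left\Vert\bX\boldsymbol{\Delta}^T\right\Vert_F^2+2\left\Vert\bX^T\boldsymbol{\Delta}\right\Vert_F^2+4\,\mathrm{Tr}(\bX^T\boldsymbol{\Delta}\,\boldsymbol{\Delta}^T\boldsymbol{\Delta})+\left\Vert\boldsymbol{\Delta}\boldsymbol{\Delta}^T\right\Vert_F^2$,
the two lower bounds on the cubic cross term, and even the correct extremal configuration ($\bX^T\boldsymbol{\Delta}=-\bX^T\bX$, $\left\Vert\boldsymbol{\Delta}\right\Vert_F^2=\sqrt{2}\,\sigma_r^2(\bX)$). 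However, the final assembly step as you describe it — a case split on $\left\Vert\boldsymbol{\Delta}\right\Vert_F$ versus $\sigma_r(\bX)$, using the square-completion bound in one regime and the PSD bound in the other — does not close. Take $\bX=\be_1$, $\boldsymbol{\Delta}=\delta \be_2$ with $\sigma_r(\bX)=1$ and $\delta^2=2$: the square-completion route gives the lower bound $2\delta^2-\delta^4=0$, and the PSD route gives $-2\left\Vert\bX\boldsymbol{\Delta}^T\right\Vert_F^2+2\left\Vert\bX^T\boldsymbol{\Delta}\right\Vert_F^2+\left\Vert\boldsymbol{\Delta}\boldsymbol{\Delta}^T\right\Vert_F^2=-4+0+4=0$, while the target is $2(\sqrt{2}-1)\cdot 2\approx 1.66$. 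So there is a genuine intermediate regime (roughly $4-2\sqrt{2}<\left\Vert\boldsymbol{\Delta}\right\Vert_F^2/\sigma_r^2(\bX)<2\sqrt{2}$) where \emph{neither} bound alone suffices, and no threshold on $\left\Vert\boldsymbol{\Delta}\right\Vert_F$ rescues a pure case split. The asserted reduction to an ``essentially rank-one'' worst case is also unjustified, since after the relaxations the three quantities $\left\Vert\bX\boldsymbol{\Delta}^T\right\Vert_F$, $\left\Vert\bX^T\boldsymbol{\Delta}\right\Vert_F$, $\left\Vert\boldsymbol{\Delta}\boldsymbol{\Delta}^T\right\Vert_F$ vary independently of $\left\Vert\boldsymbol{\Delta}\right\Vert_F$ in ways a rank-one ansatz does not capture.

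The repair uses exactly your two ingredients, but as a fixed convex combination rather than a dichotomy. Write $T=\mathrm{Tr}(\bX^T\boldsymbol{\Delta}\,\boldsymbol{\Delta}^T\boldsymbol{\Delta})$ and combine your PSD bound $4T\ge-4\left\Vert\bX\boldsymbol{\Delta}^T\right\Vert_F^2$ with weight $\lambda=1-1/\sqrt{2}$ and the Young-type bound $4T\ge-2\sqrt{2}\left\Vert\bX^T\boldsymbol{\Delta}\right\Vert_F^2-\sqrt{2}\left\Vert\boldsymbol{\Delta}\boldsymbol{\Delta}^T\right\Vert_F^2$ (Cauchy--Schwarz followed by $2ab\le \sqrt{2}a^2+b^2/\sqrt{2}$, i.e.\ parameter $\sqrt{2}$ rather than your parameter $1$) with weight $1-\lambda=1/\sqrt{2}$. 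Substituting into the expansion, the coefficients of $\left\Vert\bX^T\boldsymbol{\Delta}\right\Vert_F^2$ and $\left\Vert\boldsymbol{\Delta}\boldsymbol{\Delta}^T\right\Vert_F^2$ vanish identically and you are left with
\begin{equation*}
\left\Vert \bU\bU^T-\bX\bX^T\right\Vert_F^2\ \ge\ \bigl(2\sqrt{2}-2\bigr)\left\Vert\bX\boldsymbol{\Delta}^T\right\Vert_F^2\ \ge\ 2\bigl(\sqrt{2}-1\bigr)\,\sigma_r^2(\bX)\left\Vert\boldsymbol{\Delta}\right\Vert_F^2,
\end{equation*}
with no case analysis; your extremal example saturates every inequality in this chain, confirming the constant is the right one. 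With that one-step modification your argument is complete and is, in substance, the same elementary Procrustes-plus-expansion proof as in \cite{tu2016low}.
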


\begin{lemma}\cite[Lemma 4]{zheng2016convergence}\label{lemma_fronorm_enhancebound}
For any matrix $\bZ_{i}$ of the form $\bZ_{i} = \begin{bmatrix} \bU_{i}\boldsymbol{\Sigma}_{i}^{\frac{1}{2}}\bQ_{i} \\  \bV_{i}\boldsymbol{\Sigma}_{i}^{\frac{1}{2}}\bQ_{i} \end{bmatrix}$, where $\bU_{i}$, $\bV_{i}$ and $\bQ_{i}$ are unitary matrices and $\boldsymbol{\Sigma}_{i} \succeq 0$ is a diagonal matrix, for $i=1,2$, we have 
\begin{equation}
\left\Vert \bZ_{1}\bZ_{1}^{T} - \bZ_{2}\bZ_{2}^{T} \right\Vert_{F} \le 2\left\Vert \bU_{1}\boldsymbol{\Sigma}_{1}\bV_{1}^{T} - \bU_{2}\boldsymbol{\Sigma}_{2}\bV_{2}^{T} \right\Vert_{F}.
\end{equation}
\end{lemma}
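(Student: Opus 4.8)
The plan is to strip away the orthonormal factors $\bQ_i$, which turn out to be irrelevant, and then reduce the claim to a sharp perturbation bound for the positive-semidefinite factors appearing in the polar decompositions of $\bM_i := \bU_i\bSigma_i\bV_i^T$. First I would compute $\bZ_i\bZ_i^T$ blockwise. Using $\bQ_i\bQ_i^T = \bI$, $\bU_i^T\bU_i = \bI$ and $\bV_i^T\bV_i = \bI$, a direct calculation gives
\begin{equation*}
\bZ_i\bZ_i^T = \begin{bmatrix} \bU_i\bSigma_i\bU_i^T & \bM_i \\ \bM_i^T & \bV_i\bSigma_i\bV_i^T \end{bmatrix} = \begin{bmatrix} \bK_i & \bM_i \\ \bM_i^T & \bH_i \end{bmatrix},
\end{equation*}
where $\bK_i := \bU_i\bSigma_i\bU_i^T = (\bM_i\bM_i^T)^{1/2}$ and $\bH_i := \bV_i\bSigma_i\bV_i^T = (\bM_i^T\bM_i)^{1/2}$ are exactly the left and right PSD polar factors of $\bM_i$. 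In particular neither $\bZ_i\bZ_i^T$ nor $\bM_i$ depends on $\bQ_i$, so those factors can be dropped at the outset.

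Next I would expand the target squared Frobenius norm block by block; the two off-diagonal blocks $\bM_i$ and $\bM_i^T$ each contribute $\|\bM_1-\bM_2\|_F^2$, yielding the exact identity
\begin{equation*}
\|\bZ_1\bZ_1^T - \bZ_2\bZ_2^T\|_F^2 = \|\bK_1-\bK_2\|_F^2 + 2\|\bM_1-\bM_2\|_F^2 + \|\bH_1-\bH_2\|_F^2.
\end{equation*}
Thus the desired conclusion $\|\bZ_1\bZ_1^T-\bZ_2\bZ_2^T\|_F \le 2\|\bM_1-\bM_2\|_F$ is \emph{equivalent} to the single combined polar-factor inequality $\|\bK_1-\bK_2\|_F^2 + \|\bH_1-\bH_2\|_F^2 \le 2\|\bM_1-\bM_2\|_F^2$.

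To prove this, I would pass to the symmetric dilation $\widehat{\bM}_i = \begin{bmatrix}\boldsymbol 0 & \bM_i \\ \bM_i^T & \boldsymbol 0\end{bmatrix}$, whose square is $\mathrm{diag}(\bM_i\bM_i^T,\,\bM_i^T\bM_i)$, so that $|\widehat{\bM}_i| = \mathrm{diag}(\bK_i,\bH_i)$. Then the reduced inequality reads precisely $\||\widehat{\bM}_1|-|\widehat{\bM}_2|\|_F \le \|\widehat{\bM}_1-\widehat{\bM}_2\|_F$, i.e.\ that the absolute-value map is $1$-Lipschitz in Frobenius norm on symmetric matrices. I would verify this directly from the spectral decompositions $\widehat{\bM}_1=\sum_i\lambda_i\bu_i\bu_i^T$ and $\widehat{\bM}_2=\sum_j\mu_j\bv_j\bv_j^T$: using $\sum_j|\langle\bu_i,\bv_j\rangle|^2 = 1$, the two sides expand as $\sum_{i,j}(|\lambda_i|-|\mu_j|)^2|\langle\bu_i,\bv_j\rangle|^2$ and $\sum_{i,j}(\lambda_i-\mu_j)^2|\langle\bu_i,\bv_j\rangle|^2$, and the entrywise reverse triangle inequality $\big||\lambda_i|-|\mu_j|\big|\le|\lambda_i-\mu_j|$ closes the argument.

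The main obstacle is securing the sharp constant $2$ rather than $2\sqrt2$. A naive triangle-inequality split based on $\bZ_i\bZ_i^T = \widehat{\bM}_i + |\widehat{\bM}_i|$ loses a factor of $\sqrt2$, because the cross term $\langle\widehat{\bM}_1-\widehat{\bM}_2,\,|\widehat{\bM}_1|-|\widehat{\bM}_2|\rangle$ need not vanish. The remedy is to retain the exact Pythagorean block identity above and reduce the entire claim to the combined polar-factor bound, whose proof through the symmetric dilation and the reverse triangle inequality is the real content of the lemma.
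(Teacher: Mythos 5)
Your argument is correct, and it is worth noting that the paper itself offers no proof of this lemma at all: it is imported verbatim as \cite[Lemma 4]{zheng2016convergence}, so there is no in-paper derivation to match against. Your route is a legitimate standalone proof. The blockwise identity $\bZ_i\bZ_i^T = \begin{bmatrix} \bK_i & \bM_i \\ \bM_i^T & \bH_i\end{bmatrix}$ with $\bK_i = (\bM_i\bM_i^T)^{1/2}$, $\bH_i = (\bM_i^T\bM_i)^{1/2}$ is right (and correctly shows the $\bQ_i$ are irrelevant), the reduction to $\Vert\bK_1-\bK_2\Vert_F^2 + \Vert\bH_1-\bH_2\Vert_F^2 \le 2\Vert\bM_1-\bM_2\Vert_F^2$ is an exact equivalence, and the symmetric dilation converts this into the $1$-Lipschitz property of the absolute-value map on symmetric matrices in Frobenius norm, which your eigenbasis expansion with $\sum_j \vert\langle \bu_i,\bv_j\rangle\vert^2 = 1$ and the reverse triangle inequality establishes cleanly. (This packaging of both polar factors into one dilation is essential: the individual bounds $\Vert\bK_1-\bK_2\Vert_F\le\Vert\bM_1-\bM_2\Vert_F$ would amount to $1$-Lipschitzness of the absolute value for \emph{non}-symmetric matrices, where the sharp constant is $\sqrt{2}$, so your combined treatment is exactly what saves the constant.) One small inaccuracy in your closing discussion: the cross term $\langle \widehat{\bM}_1-\widehat{\bM}_2,\,\vert\widehat{\bM}_1\vert-\vert\widehat{\bM}_2\vert\rangle$ \emph{does} vanish here, since the first matrix is supported on the off-diagonal blocks and the second on the diagonal blocks --- that orthogonality is precisely your Pythagorean identity, so the issue with the naive split is not a nonvanishing cross term but simply that the triangle inequality discards it. This does not affect the validity of the proof you actually give.
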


\begin{lemma}[Orlicz-norm version Bernstein's inequality]\cite[Proposition 2]{koltchinskii2011nuclear}\label{lemma_orlicz_matrix_bernstein}
Let $\bS_{1}, \bS_{2}, \dots, \bS_{m}$ be a finite sequence of independent zero-mean random matrices with dimensions $d_{1}\times d_{2}$. Suppose $\left\Vert \bS_{i} \right\Vert_{\psi_{2}} \le B$, and define $\sigma_{\bS}^{2} = \max\left\{ \left\Vert \frac{1}{m} \sum_{i=1}^{m}  \mathbb{E}\left[ \bS_{i}\bS_{i}^{T} \right]  \right\Vert, \left\Vert \frac{1}{m} \sum_{i=1}^{m}  \mathbb{E}\left[ \bS_{i}^{T}\bS_{i} \right]  \right\Vert \right\}$. Then there exists a constant $C>0$ such that, for all $t>0$, with probability at least $1- e^{-t}$
\begin{equation*}
\left\Vert \frac{1}{m}\sum_{i=1}^{m} \bS_{i} \right\Vert  \le C \max\left\{ \sigma_{\bS}\sqrt{\frac{t + \log{\left(d_{1}+d_{2}\right)}}{m}},  B \sqrt{\log{\left(\frac{B}{\sigma_{\bS}}\right)}} \frac{t+\log{\left(d_{1}+d_{2}\right)}}{m} \right\}.
\end{equation*}
\end{lemma}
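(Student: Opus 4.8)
The statement is the standard sub-Gaussian (Orlicz-$\psi_2$) matrix Bernstein inequality, and the plan is to combine the matrix Laplace-transform (Ahlswede--Winter/Tropp) method with a truncation step that upgrades the $\psi_2$ control on the spectral norms $\|\bS_i\|$ into the bounded, operator-order cumulant estimate that the Laplace method requires.

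First I would reduce to the self-adjoint case by passing to the Hermitian dilation $\mathcal{D}(\bS) = \begin{bmatrix} \boldsymbol{0} & \bS \\ \bS^{T} & \boldsymbol{0}\end{bmatrix} \in \mathbb{R}^{(d_1+d_2)\times(d_1+d_2)}$. Since $\|\mathcal{D}(\bS)\| = \|\bS\|$, the map preserves both the zero-mean property and the $\psi_2$ bound, and $\mathcal{D}(\bS)^2$ is block-diagonal with blocks $\bS\bS^{T}$ and $\bS^{T}\bS$, so it converts the two one-sided variance proxies into the single quantity $\sigma_{\bS}^2 = \|\frac1m\sum_i \mathbb{E}\,\mathcal{D}(\bS_i)^2\|$. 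Writing $\bT = \frac1m\sum_i \mathcal{D}(\bS_i)$, the target bound on $\frac1m\sum_i\bS_i$ becomes a two-sided eigenvalue bound $\|\bT\| = \max\{\lambda_{\max}(\bT), \lambda_{\max}(-\bT)\}$; since $-\mathcal{D}(\bS_i)$ is again zero-mean with the same $\psi_2$ bound and variance, the bound for $\lambda_{\max}(-\bT)$ follows identically, so it suffices to control $\lambda_{\max}(\bT)$.

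For the self-adjoint sum I would invoke the master tail bound $\mathbb{P}(\lambda_{\max}(\bT) \ge u) \le \inf_{s>0} e^{-su}\,\mathrm{tr}\exp\big(\sum_i \log\mathbb{E}\,e^{(s/m)\mathcal{D}(\bS_i)}\big)$, which follows from Markov's inequality together with Lieb's concavity theorem and independence. The crux is the per-summand cumulant estimate: I would truncate at a level $U$, writing $\mathcal{D}(\bS_i) = \mathcal{D}(\bS_i)\mathbb{I}_{\{\|\bS_i\|\le U\}} + \mathcal{D}(\bS_i)\mathbb{I}_{\{\|\bS_i\|> U\}}$ and recentering the truncated part. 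The bounded part obeys a Bernstein moment condition $\mathbb{E}(\cdot)^p \preceq \tfrac{p!}{2}U^{p-2}\,\mathbb{E}\,\mathcal{D}(\bS_i)^2$, which yields $\log\mathbb{E}\,e^{(s/m)(\cdot)} \preceq \frac{(s/m)^2/2}{1-(s/m)U}\,\mathbb{E}\,\mathcal{D}(\bS_i)^2$ for $0<(s/m)U<1$; the $\psi_2$ tail $\mathbb{P}(\|\bS_i\|>U)\le 2e^{-U^2/(2B^2)}$ controls both the bias from recentering and the discarded tail, and the optimal choice $U \asymp B\sqrt{\log(B/\sigma_{\bS})}$ balances these corrections against the Gaussian term --- this is precisely where the $\sqrt{\log(B/\sigma_{\bS})}$ factor in the linear term is born.

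Summing the cumulant bounds and using $\|\sum_i\mathbb{E}\,\mathcal{D}(\bS_i)^2\| = m\sigma_{\bS}^2$ gives $\mathrm{tr}\exp(\cdot) \le (d_1+d_2)\exp\!\big(\frac{s^2\sigma_{\bS}^2/(2m)}{1-sU/m}\big)$, so the master bound reads $\mathbb{P}(\lambda_{\max}(\bT)\ge u)\le (d_1+d_2)\inf_{0<s<m/U}\exp\!\big(-su + \frac{s^2\sigma_{\bS}^2/(2m)}{1-sU/m}\big)$; optimizing over $s$ produces the usual two-regime Bernstein bound, and setting the failure probability to $e^{-t}$ (absorbing the $(d_1+d_2)$ prefactor into a $\log(d_1+d_2)$ term) converts the deviation $u$ into $C\max\{\sigma_{\bS}\sqrt{(t+\log(d_1+d_2))/m},\,U(t+\log(d_1+d_2))/m\}$, which is the claim once $U$ is substituted. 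The main obstacle I anticipate is the truncation step: odd powers of $\mathcal{D}(\bS_i)$ are not positive semidefinite, so the operator-order moment condition must be justified carefully via a scalar inequality of the form $e^{x}\le 1+x+\tfrac{x^2}{2}g(x)$ applied through the functional calculus, and the bias introduced by recentering the truncated summands must be shown to be genuinely lower order than the Gaussian term uniformly over the optimization range of $s$; tracking these corrections is what pins down the exact constant structure and the $\sqrt{\log(B/\sigma_{\bS})}$ scaling.
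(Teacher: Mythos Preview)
The paper does not prove this lemma at all: it is stated in the appendix of auxiliary results with an explicit citation to \cite[Proposition 2]{koltchinskii2011nuclear}, and is used as a black-box tool in the proof of Proposition~\ref{bound_Y_noisy}. So there is no ``paper's own proof'' to compare against.

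That said, your plan is essentially the standard route to this inequality (Hermitian dilation to reduce to the self-adjoint case, Tropp's master tail bound via Lieb's concavity, and a truncation at level $U\asymp B\sqrt{\log(B/\sigma_{\bS})}$ to convert the $\psi_2$ control into a Bernstein-type moment condition), and it is the same strategy underlying Koltchinskii's original proof. The one place where your sketch is a bit loose is the handling of the re-centering bias after truncation: you will need to check that the operator-norm of $\mathbb{E}[\mathcal{D}(\bS_i)\mathbb{I}_{\{\|\bS_i\|>U\}}]$ is bounded by something like $B\,e^{-cU^2/B^2}$, and that this is dominated by $\sigma_{\bS}^2/U$ at the optimal $U$. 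This is routine but does require the $\psi_2$ tail integration you allude to. Your concern about odd moments is well placed; the cleanest way around it is to use the scalar bound $e^x \le 1 + x + x^2 e^{|x|}/2$ (or the Bennett-style $e^x - 1 - x \le x^2(e^{|x|}-1-|x|)/|x|^2$) on the spectrum of the truncated summand, which sidesteps any sign issues since only $\mathcal{D}(\bS_i)^2$ appears in the quadratic term.
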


\begin{lemma}[Covering number for low-rank matrices]\label{lemma_covering_net} 
Let $S_r = \{\bX\in\mathbb{R}^{n_1\times n_2},\mbox{rank}(\boldsymbol{X})\leq r, \| \boldsymbol{X}\|_\textrm{F}=1\}$. Then there exists an $\epsilon$-net $\bar{S}_r\subset S_r$ with respect to the Frobenius norm obeying
$$ |\bar{S}_r| \leq (9/\epsilon)^{(n_1+n_2+1)r}. $$
\end{lemma}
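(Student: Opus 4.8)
The plan is to use the singular value decomposition to reduce the covering problem for $S_r$ to covering the three much lower-dimensional factors that generate a rank-$r$ matrix. Writing any $\bX\in S_r$ as $\bX = \bU\bSigma\bV^T$, where $\bU\in\mathbb{R}^{n_1\times r}$ and $\bV\in\mathbb{R}^{n_2\times r}$ have orthonormal columns and $\bSigma\in\mathbb{R}^{r\times r}$ is diagonal with $\left\Vert\bSigma\right\Vert_F = \left\Vert\bX\right\Vert_F = 1$, I would build an $(\epsilon/3)$-net for each factor separately and then take $\bar S_r$ to consist of all products $\bar\bU\bar\bSigma\bar\bV^T$ of net elements. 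Since $\bar\bU,\bar\bV$ will be orthonormal and $\bar\bSigma$ will lie on the Frobenius sphere, each such product has unit Frobenius norm, so $\bar S_r\subset S_r$.

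For the cardinalities, I would first cover the diagonal factor: the set of diagonal matrices of unit Frobenius norm is isometric to the sphere $S^{r-1}$, whose volumetric covering number at radius $\epsilon/3$ is at most $(1+6/\epsilon)^r\le (9/\epsilon)^r$ for $\epsilon\in(0,1)$. Next I would cover each orthonormal factor in the column-wise norm $\left\Vert\bA\right\Vert_{1,2} := \max_j\left\Vert\bA_j\right\Vert_2$. Viewing the Stiefel manifold $\{\bU:\bU^T\bU=\bI\}$ as a subset of a product of $r$ unit spheres in $\mathbb{R}^{n_1}$, a maximal $(\epsilon/3)$-separated subset under $\left\Vert\cdot\right\Vert_{1,2}$ is an internal net, and a packing comparison (the disjoint $\left\Vert\cdot\right\Vert_{1,2}$-balls of radius $\epsilon/6$ around the net points, each a product of Euclidean balls, sit inside the product of Euclidean balls of radius $1+\epsilon/6$) bounds its size by $(1+6/\epsilon)^{n_1 r}\le (9/\epsilon)^{n_1 r}$, and likewise by $(9/\epsilon)^{n_2 r}$ for $\bV$. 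Multiplying the three cardinalities yields $|\bar S_r|\le (9/\epsilon)^{(n_1+n_2+1)r}$.

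To verify the covering radius, given $\bX=\bU\bSigma\bV^T$ I would pick $\bar\bU,\bar\bSigma,\bar\bV$ within their respective radii and telescope:
\[
\left\Vert\bX - \bar\bU\bar\bSigma\bar\bV^T\right\Vert_F \le \left\Vert(\bU-\bar\bU)\bSigma\bV^T\right\Vert_F + \left\Vert\bar\bU(\bSigma-\bar\bSigma)\bV^T\right\Vert_F + \left\Vert\bar\bU\bar\bSigma(\bV-\bar\bV)^T\right\Vert_F.
\]
The first term collapses to $\left\Vert(\bU-\bar\bU)\bSigma\right\Vert_F$ because right multiplication by $\bV^T$ (orthonormal rows) preserves the Frobenius norm, and the diagonality of $\bSigma$ then gives $\left\Vert\bU-\bar\bU\right\Vert_{1,2}\left\Vert\bSigma\right\Vert_F\le\epsilon/3$. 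The second term reduces to $\left\Vert\bar\bU(\bSigma-\bar\bSigma)\right\Vert_F\le\left\Vert\bSigma-\bar\bSigma\right\Vert_F\le\epsilon/3$, using that each column of $\bar\bU$ has unit norm and $\bSigma-\bar\bSigma$ is diagonal. For the third term, since $\bar\bU$ has orthonormal columns, left multiplication is a Frobenius isometry, so it equals $\left\Vert\bar\bSigma(\bV-\bar\bV)^T\right\Vert_F = \left\Vert(\bV-\bar\bV)\bar\bSigma\right\Vert_F\le\left\Vert\bV-\bar\bV\right\Vert_{1,2}\left\Vert\bar\bSigma\right\Vert_F\le\epsilon/3$. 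Summing gives $\left\Vert\bX-\bar\bU\bar\bSigma\bar\bV^T\right\Vert_F\le\epsilon$.

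The step I expect to require the most care is the third term, where the outer factor is the \emph{net} element $\bar\bU$ rather than the true orthonormal $\bU$; the clean bound $\epsilon/3$ hinges on $\bar\bU$ itself having orthonormal columns, so that left multiplication remains a Frobenius isometry. This is precisely why the nets for the orthonormal factors must be taken \emph{inside} the Stiefel manifolds rather than in the ambient column-ball, and it is the reason the packing/internal-net bound for the Stiefel manifold is the genuinely nontrivial ingredient: an ambient cover would only control $\left\Vert\bar\bU\right\Vert$ up to a factor growing like $\sqrt{r}$, which would contaminate the target exponent. The remaining work is routine accounting of the absolute constants to confirm that radius $\epsilon/3$ in each factor produces the stated base $9$ for all $\epsilon\in(0,1)$.
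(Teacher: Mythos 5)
Your proof is correct and is exactly the canonical argument for this lemma, which the paper states without proof (it is the covering lemma of Cand\`es and Plan, cited in the paper as \cite{candes2011tight}): an SVD factorization, separate $\epsilon/3$-nets for the three factors with the orthonormal factors covered \emph{internally} on the Stiefel manifold in the column-wise norm, and the three-term telescoping bound exploiting that $\bar{\bU},\bar{\bV}$ themselves have orthonormal columns. No gaps.
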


\begin{lemma}\label{lemma_max_fnorm}
Suppose $\bA_{i} \in \mathbb{R}^{n_{1}\times n_{2}}$'s are sensing matrices, each generated with i.i.d. Gaussian entries, for $i = 1,2,\dots,m$. Let $n = (n_1+n_2)/2$, and $m\geq n$. Then 
\begin{align}\label{union_max_fnorm}
\max_{i=1,2,\dots,m} \left\Vert \bA_{i} \right\Vert_{F} \le 2 \sqrt{n \left(n+m\right)}
\end{align}
holds with probability exceeding $1 - m\cdot \exp{\left(- n\left(n+m\right) \right)}$.
\end{lemma}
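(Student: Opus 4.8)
The plan is to recognize each $\left\Vert \bA_i \right\Vert_F^2$ as a chi-squared random variable and then combine a sharp one-sided tail bound with a union bound over the $m$ matrices. Since the entries $(\bA_i)_{k,t}$ are i.i.d.\ standard Gaussian, the squared Frobenius norm $\left\Vert \bA_i \right\Vert_F^2 = \sum_{k,t}(\bA_i)_{k,t}^2$ is a sum of $n_1 n_2$ independent squared standard normals, hence $\left\Vert \bA_i \right\Vert_F^2 \sim \chi^2_{n_1 n_2}$. By the arithmetic--geometric mean inequality and the definition $n = (n_1+n_2)/2$, the number of degrees of freedom obeys $n_1 n_2 \le \left((n_1+n_2)/2\right)^2 = n^2$, which lets me replace $n_1 n_2$ by the cleaner quantity $n^2$ throughout.

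First I would invoke the standard Laurent--Massart chi-squared tail bound: for $X \sim \chi^2_d$ and any $x>0$, $\Pr\!\left(X \ge d + 2\sqrt{dx} + 2x\right) \le e^{-x}$. Taking $d = n_1 n_2$ and choosing $x = n(n+m)$, it then remains to check that the deterministic threshold $4n(n+m)$ dominates $d + 2\sqrt{dx} + 2x$, so that the event $\{\left\Vert \bA_i \right\Vert_F^2 \ge 4n(n+m)\}$ is contained in the tail event and therefore has probability at most $e^{-n(n+m)}$.

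The verification of this inequality is where the hypothesis $m \ge n$ enters. Using $d \le n^2 \le n(n+m)$, the geometric-mean bound $\sqrt{n(n+m)} \le n + m/2$ (whence $2\sqrt{dx} \le 2n\sqrt{n(n+m)} \le 2n^2 + nm$), and $2x = 2n^2 + 2nm$, adding these gives $d + 2\sqrt{dx} + 2x \le 5n^2 + 3nm$. Comparing against $4n(n+m) = 4n^2 + 4nm$, the desired inequality $5n^2 + 3nm \le 4n^2 + 4nm$ reduces exactly to $n^2 \le nm$, i.e.\ $n \le m$, which is precisely the standing assumption. Hence $\Pr\!\left(\left\Vert \bA_i \right\Vert_F \ge 2\sqrt{n(n+m)}\right) \le e^{-n(n+m)}$ for each fixed $i$.

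Finally I would apply a union bound over $i = 1, \dots, m$ to conclude that $\max_i \left\Vert \bA_i \right\Vert_F \le 2\sqrt{n(n+m)}$ fails with probability at most $m\,e^{-n(n+m)}$, which is exactly \eqref{union_max_fnorm}. There is no substantive obstacle here: the argument is a routine concentration-plus-union-bound, and the only point requiring care is the bookkeeping in the algebraic comparison above, where the constants must be tracked tightly enough to land the factor $4$ (equivalently the clean bound $2\sqrt{n(n+m)}$) rather than some larger constant, and where the assumption $m \ge n$ must be used. An alternative to citing Laurent--Massart would be a direct Chernoff argument via the chi-squared moment generating function $\mathbb{E}[e^{\lambda X}] = (1-2\lambda)^{-d/2}$ optimized over $\lambda \in (0, 1/2)$, which yields the same tail with the same constants.
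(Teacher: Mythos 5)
Your proposal is correct and follows essentially the same route as the paper: the Laurent--Massart tail bound for $\left\Vert \bA_{i} \right\Vert_{F}^{2}\sim\chi^{2}_{n_{1}n_{2}}$ followed by a union bound over the $m$ matrices, with the hypothesis $m\ge n$ entering only in the algebraic comparison against the threshold $4n(n+m)$. If anything, your parametrization (taking $x=n(n+m)$ in $\Pr\left(X\ge d+2\sqrt{dx}+2x\right)\le e^{-x}$, so the failure probability is exactly $e^{-n(n+m)}$ for any $n_{1},n_{2}$) is slightly more careful than the paper's, which sets $\lambda=(n+m)/n$ and reports $\exp\left(-\lambda n_{1}n_{2}\right)=\exp\left(-(n+m)n_{1}n_{2}/n\right)$; since $n_{1}n_{2}\le n^{2}$ by AM--GM, that exponent only matches the stated $n(n+m)$ when $n_{1}=n_{2}$.
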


\begin{proof}
Let $\bA$ be a sensing matrix, generated with i.i.d. standard Gaussian entries, and $\bA_{k,t}$ be the entry of $\bA$ with index $\left(k, t\right)$, then we know $\bA_{k,t} \sim \mathcal{N}\left(0,1\right)$. Since $\left\Vert \bA \right\Vert_{F}^{2} = \sum_{k,t} \bA_{k,t}^{2}$, $\left\Vert \bA \right\Vert_{F}^{2}$ is a Chi-squared random variable with degree of freedom as $n_{1}n_{2}$. According to \cite[Lemma 1]{laurent2000adaptive}, we have 
\begin{align*}
\mathbb{P} \left\{ \left\Vert \bA \right\Vert_{F}^{2} \ge \left(1 + 2 \sqrt{\lambda}  + 2\lambda\right) n_{1}n_{2} \right\} \le \exp{\left(-\lambda n_{1}n_{2} \right)},
\end{align*}
for any $\lambda > 0$. Let $\lambda = \left(n+m\right)/n$. It is clear that $\lambda \ge 2$ for $m\ge n$. Moreover, $2\lambda \ge 2\sqrt{\lambda} + 1$ for $\lambda \ge 2$. Thus, we obtain
\begin{align*}
\mathbb{P} \left\{ \left\Vert \bA \right\Vert_{F}^{2} \ge 4 n\left(n+m\right) \right\} \le \exp{\left(- n\left(n+m\right)\right)}.
\end{align*}
Therefore the proof is completed by applying the union bound.
\end{proof}


\begin{lemma}[Restricted Isometry Property]\label{lemma_RIP_A_rec}
Fix $0 < \delta < 1$. For every $1\le r\le \min\{n_{1}, n_{2}\}$, there exist positive constants $c_{0}$ and $c_{1}$ depending only on $\delta$ such that provided $m \ge c_{0} (n_{1}+n_{2})r  $, 
\begin{equation}
\left(1-\delta\right) \left\Vert \bM\right\Vert_{F}  \le \frac{1}{\sqrt{m}} \left\Vert \mathcal{A}\left(\bM\right) \right\Vert_{2}  \le  \left(1+\delta\right) \left\Vert \bM\right\Vert_{F}
\end{equation}
holds for all matrices $\bM$ of rank at most $r$ with probability at least $1-\exp{\left( - c_{1} m\right)}$.
\end{lemma}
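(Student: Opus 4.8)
The plan is to follow the standard recipe for a Gaussian restricted isometry property: first prove the two-sided bound for a single fixed low-rank matrix by a scalar concentration inequality, then make it uniform over all rank-$r$ matrices through an $\epsilon$-net together with a union bound, and finally extend from the net to the whole low-rank set by an approximation argument.

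First I would fix $\bM$ of rank at most $r$ with $\left\Vert \bM\right\Vert_F = 1$ and exploit Gaussianity. Since each $\bA_i$ has i.i.d.\ $\mathcal N(0,1)$ entries, the scalars $z_i := \mathcal A_i(\bM) = \langle \bA_i, \bM\rangle$ are i.i.d.\ $\mathcal N(0,\left\Vert \bM\right\Vert_F^2)=\mathcal N(0,1)$, so $\left\Vert \mathcal A(\bM)\right\Vert_2^2 = \sum_{i=1}^m z_i^2$ is a chi-squared variable with $m$ degrees of freedom. The Laurent--Massart tail bound \cite{laurent2000adaptive} (the same estimate used in Lemma~\ref{lemma_max_fnorm}) then gives, for $t\in(0,1)$,
\[
\mathbb P\!\left(\left\vert \tfrac1m \left\Vert \mathcal A(\bM)\right\Vert_2^2 - 1\right\vert \ge t\right) \le 2\exp(-c m t^2),
\]
and taking $t$ proportional to $\delta$ converts this into the pointwise near-isometry $\tfrac1{\sqrt m}\left\Vert \mathcal A(\bM)\right\Vert_2 \in [1-\delta/2,\,1+\delta/2]$, valid with probability at least $1-2\exp(-c'm\delta^2)$.

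Next I would apply Lemma~\ref{lemma_covering_net} to obtain an $\epsilon$-net $\bar S$ of the unit-Frobenius-norm rank-$2r$ matrices with $\left\vert \bar S\right\vert \le (9/\epsilon)^{(n_1+n_2+1)\cdot 2r}$, and union-bound the pointwise estimate over $\bar S$. The failure probability is at most $2(9/\epsilon)^{2(n_1+n_2+1)r}\exp(-c'm\delta^2)$; since $\log\left\vert \bar S\right\vert = O\big((n_1+n_2)r\big)$ for fixed $\epsilon$, this drops below $\exp(-c_1 m)$ once $m \ge c_0(n_1+n_2)r$, with $c_0,c_1$ absorbing the dependence on $\delta$ and $\epsilon$. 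This yields precisely the claimed sample complexity, and I deliberately run the net over rank-$2r$ matrices (rather than rank-$r$) in anticipation of the extension step below.

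The remaining, and most delicate, step is to promote control on the net to control over every low-rank matrix. For a unit rank-$2r$ matrix $\bM$ with nearest net point $\bM_0$, the triangle inequality gives $\tfrac1{\sqrt m}\left\Vert \mathcal A(\bM)\right\Vert_2 \le (1+\delta/2) + \tfrac1{\sqrt m}\left\Vert \mathcal A(\bM-\bM_0)\right\Vert_2$, and the main obstacle is that the residual $\bM-\bM_0$ has rank up to $4r$, so it is not directly governed by the net. I would resolve this by splitting the normalized residual along its SVD into two orthogonal blocks, each of rank at most $2r$, whose Frobenius norms add in quadrature; applying the supremum bound $L := \sup \tfrac1{\sqrt m}\left\Vert \mathcal A(\cdot)\right\Vert_2$ (over unit rank-$2r$ matrices) to each block yields $\tfrac1{\sqrt m}\left\Vert \mathcal A(\bM-\bM_0)\right\Vert_2 \le \sqrt2\,\epsilon L$. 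Taking the supremum over $\bM$ produces the self-bounding inequality $L \le 1+\delta/2+\sqrt2\,\epsilon L$, so choosing $\epsilon$ small (depending only on $\delta$) forces $L \le 1+\delta$; the identical reasoning applied to the lower tail gives $\tfrac1{\sqrt m}\left\Vert \mathcal A(\bM)\right\Vert_2 \ge 1-\delta$. Since rank-$r$ matrices form a subset of the rank-$2r$ class, the two-sided bound holds for all rank-$r$ matrices, completing the argument.
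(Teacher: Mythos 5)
Your proof is correct. The paper itself offers no proof of Lemma~\ref{lemma_RIP_A_rec}: it is imported as the standard matrix restricted isometry property for Gaussian measurement ensembles (the result of \cite{recht2010guaranteed,candes2011tight} referenced in the body of the paper), and your argument --- pointwise chi-squared concentration for a fixed unit-norm matrix, a union bound over the $\epsilon$-net of Lemma~\ref{lemma_covering_net} applied to the rank-$2r$ class, and the orthogonal SVD splitting of the rank-$\le 4r$ residual into two rank-$\le 2r$ blocks feeding the self-bounding inequality for $L$ --- is precisely the standard proof of that imported result, with the sample complexity $m \gtrsim (n_1+n_2)r$ falling out of the net cardinality exactly as you describe. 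The only point left implicit is that $L$ must be known to be finite before $L \le 1+\delta/2+\sqrt{2}\,\epsilon L$ can be solved for $L$; this is immediate since $\mathcal{A}$ is a fixed linear map and the unit-Frobenius-sphere of the rank-$\le 2r$ variety is compact, so there is no genuine gap.
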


\section{Proof of Proposition~\ref{prop_sample_median_concen}}\label{proof_prop_sample_median_concen}

Due to scaling invariance, without loss of generality, it is sufficient to consider all rank-$2r$ matrices with unit Frobenius norm. First, we fix the rank-$2r$ matrix $\bG_0\in\mathbb{R}^{n_{1}\times n_{2}}$, and then generalize to all rank-$2r$ matrices by a covering argument. Note that  $\left\vert  \mathcal{A}_{i}\left( \bG_0 \right) \right\vert$, $i = 1,2, \dots, m$, are i.i.d. copies of 
$ \left\vert  \langle \bA, \bG_0  \rangle   \right\vert $, 
where $\bA$ is generated with i.i.d. Gaussian entries.  Since $\left\Vert \bG_0  \right\Vert_{F} = 1$, $\langle \bA, \bG_0 \rangle$ follows the distribution $\mathcal{N}\left(0, 1\right)$, and $ \left\vert \langle \bA, \bG_0 \rangle \right\vert$ follows a folded normal distribution, whose probability density function and cumulative distribution function are denoted by $f_{1}$ and $F_{1}$, respectively. It is known from Lemma~\ref{lemma_quantile_func_bound} that 
\begin{align}\label{equ_med_concentration_fixed}
0.6745 - \epsilon\leq  \mathrm{med}\left(\left\vert\mathcal{A} (\bG_0)   \right\vert  \right)  \leq 0.6745 + \epsilon  , 
\end{align}
with probability at least $1 - 2 \exp{\left( -cm\epsilon^{2} \right)}$ for a small $\epsilon$, where $c$ is a constant around $2\times 0.6356^2$. Similar arguments extend to other quantiles. From Lemma~\ref{lemma_quantile_func_bound}, we have
\begin{align}
0.6588 - \epsilon\leq & \theta_{0.49}\left(\left\vert  \mathcal{A}\left(\bG_0 \right)   \right\vert  \right)  \leq 0.6588 + \epsilon  , \\
0.6903 - \epsilon \leq &\theta_{0.51}\left(\left\vert  \mathcal{A}\left(\bG_0 \right)   \right\vert  \right)  
\leq 0.6903 + \epsilon  , 
\end{align}
with probability at least $1 - 2 \exp{\left( -cm\epsilon^{2} \right)}$ for a small $\epsilon$, where $c$ is a constant around $2\times 0.6287^2$.

Next, we extend the results to all rank-$2r$ matrices $\bG$ with $\|\bG\|_F=1$ via a covering argument. We argue for the median and similar arguments extend to other quantiles straightforwardly. Let $\mathcal{N}_{\tau}$ be a $\tau$-net covering all rank-$2r$ matrices with respect to the Frobenius norm. Let $n = (n_1+n_2)/2$, then from Lemma~\ref{lemma_covering_net}, $\left\vert \mathcal{N}_{\tau} \right\vert \le \left( 9/\tau \right)^{2r\left(2n+1\right)}$. Taking the union bound, we obtain 
\begin{equation}
0.6745 - \epsilon \le  \mathrm{med}\left( \left\vert \mathcal{A}( \bG_{0})   \right\vert  \right)   \le 0.6745 + \epsilon, \quad \forall  \bG_{0} \in \mathcal{N}_{\tau},
\end{equation}
with probability at least $1 - \left( 9/\tau \right)^{2r\left(2n+1\right)}\exp{\left( -cm\epsilon^{2} \right)}$. 
Set $\tau = \epsilon /  (2 \sqrt{ n (n+m )} )$. Under this event and  \eqref{union_max_fnorm}, which holds with probability at least $1 - m\exp{\left(- n\left(n+m\right) \right)}$ from Lemma~\ref{lemma_max_fnorm}, for any rank-$2r$ matrix $\bG$ with $\|\bG\|_F=1$, there exists $\bG_0\in\mathcal{N}_\tau$ such that $\left\Vert \bG - \bG_{0} \right\Vert_{F} \le \tau$, and
\begin{align}
 \left\vert \mathrm{med}\left( \left\vert \mathcal{A}\left( \bG_{0} \right) \right\vert \right) - \mathrm{med}\left( \left\vert \mathcal{A}\left( \bG \right) \right\vert  \right) \right\vert
& \le \max_{i=1,2,\dots,m} \big| \left\vert \langle \bA_{i}, \bG_{0} \rangle \right\vert - \left\vert \langle \bA_{i}, \bG \rangle \right\vert \big| \label{equ_med_max}\\
& \le \max_{i=1,2,\dots,m} \left\vert  \langle \bA_{i}, \bG_{0} \rangle -  \langle \bA_{i},  \bG \rangle  \right\vert \label{equ_absolute_bound}\\
& \le \max_{i=1,2,\dots,m}  \left\Vert \bG_{0} - \bG \right\Vert_{F} \left\Vert \bA_{i} \right\Vert_{F} \nonumber\\
& \le \tau \max_{i=1,2,\dots,m}  \left\Vert \bA_{i} \right\Vert_{F} \leq \epsilon\label{equ_max_fnorm_bound},
\end{align}
where \eqref{equ_med_max} follows from Lemma~\ref{lem_orderstats}, and \eqref{equ_absolute_bound} follows from the fact $\left\vert \left\vert a \right\vert - \left\vert b \right\vert \right\vert \le \left\vert a-b \right\vert$.

The rest of the proof is then to argue that \eqref{equ_max_fnorm_bound} holds with probability at least $1 - c_{1}\exp{\left( -c_{2}m\epsilon^{2} \right)}$ for some constants $c_{1}$ and $c_{2}$, as long as $m \ge c_{0} \left(\epsilon^{-2} \log{\epsilon^{-1}}\right) nr \log{\left(nr\right)}$ for some sufficiently large constant $c_{0}$. Note that 
\begin{align*}
\left( 9/\tau \right)^{2r\left(2n+1\right)} 
& = \exp\left(2r\left(2n+1\right) \left(\log{18} + \log(\epsilon^{-1}) + \frac{1}{2}\log{n} + \frac{1}{2}\log{\left(n+m\right)}  \right) \right)\\
& \le \exp\left(5nr\log{m} + c_{3}nr\log{\epsilon^{-1}}\right).
\end{align*}
It is straightforward to verify $c_{3}nr\log{\epsilon^{-1}} \le c_{4} m \epsilon^{2}$, where $2c_{4} < c - c_{2}$, based on the specific setting of $m$, as long as $c_{0}$ is large enough. Then, it suffices to show 
\begin{equation}\label{equ_number_meas_median_conc}
5nr\log{m} < c_{5} m \epsilon^{2},
\end{equation} 
where $c_{5} < c - c_{4} - c_{2}$, when $m \ge c_{0} \left(\epsilon^{-2} \log{\epsilon^{-1}}\right) nr \log{\left(nr\right)}$ for some large enough constant $c_{0}$.

First, for any fixed $n$, if \eqref{equ_number_meas_median_conc} holds for some $m$ and $m \ge \left(5/c_{5}\right) \epsilon^{-2}nr$, then \eqref{equ_number_meas_median_conc} holds for a larger $m$, since
\begin{align*}
5nr\log{\left(m+1\right)} = 5nr\log{m} + \frac{5nr}{m}\log{\left(1+\frac{1}{m}\right)^{m}} \le 5nr\log{m} + 5nr/m \le c_{5} \left(m + 1\right) \epsilon^{2}. 
\end{align*}

Next, we show that for any fixed $n$, we can find a constant $c_{0}$ such that \eqref{equ_number_meas_median_conc} holds as long as $m = c_{0} \left(\epsilon^{-2} \log{\epsilon^{-1}}\right) nr \log{\left(nr\right)}$. Pick a small enough $\epsilon < 1/e$ that is fixed throughout the proof. Given $c_{5}$, we can always find a large enough $c_{0}$ such that $\frac{1}{3}\log{c_{0}} < c_{5}c_{0}/15 - 5/3$. Then as long as $nr\ge 3$, we can get $\frac{1}{3}\log{c_{0}} < \left(c_{5}c_{0}/15 - 5/3\right) \log{\epsilon^{-1}} \log{nr}$, which further yields $\frac{1}{3}\log{c_{0}} + \log{\epsilon^{-1}} + \frac{2}{3} \log{nr} < \left(c_{5}c_{0}/15\right) \log{\epsilon^{-1}} \log{nr}$. As a result, we have
\begin{align*}
\left(c_{5}c_{0}/5\right) \log{\epsilon^{-1}} \log{nr}
& > \log{c_{0}} + 3\log{\epsilon^{-1}} + 2 \log{nr}\\
& = \log{\left(c_{0} \epsilon^{-3} \left(nr\right)^{2}\right)}\\
& > \log{\left( c_{0} \left(\epsilon^{-2}\log{\epsilon^{-1}}\right) nr \log{\left(nr\right)}  \right)},
\end{align*}
which implies \eqref{equ_number_meas_median_conc}.

\section{Proof of Proposition~\ref{prop_local_curv_clean_first}}\label{proof_prop_local_curv_clean_first}


We prove the following lemma which directly implies Proposition~\ref{prop_local_curv_clean_first}.
\begin{lemma}\label{lemma_prop2}
Under the conditions of Proposition~\ref{prop_local_curv_clean_first}, we have
\begin{equation*}
\frac{1}{m}\sum_{i=1}^{m}  \langle \bA_{i},\bG \rangle  \cdot \langle \bA_{i}, \bT \rangle  \cdot    \mathbb{I}_{ \left\{   \left\vert  \langle \bA_{i}, \bG \rangle  \right\vert   \le   0.65 \alpha_{h}  \left\Vert \bG \right\Vert_{F}  \right\}  } \mathbb{I}_{ \left\{ \langle \bA_{i}, \bG  \rangle  \cdot \langle \bA_{i}, \bT \rangle  \ge 0 \right\} }  \geq \gamma_1 \langle \bG,\bT\rangle - 0.0011\alpha_h \|\bG\|_F\|\bT\|_F
\end{equation*}
holds with high probability for all rank-$2r$ matrices $\bG,\bT\in\mathbb{R}^{n_{1}\times n_{2}}$.
\end{lemma}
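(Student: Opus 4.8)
The plan is to strip the statement down to a uniform concentration bound for a truncated bilinear form and then prove that bound by a covering argument. Since both sides are homogeneous of degree one in $\bG$ and in $\bT$ (rescaling $\bG$ leaves the truncation indicator $\mathbb{I}_{\{|\langle\bA_i,\bG\rangle|\le 0.65\alpha_h\left\Vert\bG\right\Vert_F\}}$ invariant while scaling each summand, and likewise for $\bT$), I would first normalize $\left\Vert\bG\right\Vert_F=\left\Vert\bT\right\Vert_F=1$ and abbreviate $u_i=\langle\bA_i,\bG\rangle$, $v_i=\langle\bA_i,\bT\rangle$, $\tau=0.65\alpha_h$. I would then discard the sign indicator: on every index it keeps, the summand equals $u_iv_i\mathbb{I}_{\{|u_i|\le\tau\}}$, and on every index it discards one has $u_iv_i<0$, so each summand is at least $u_iv_i\mathbb{I}_{\{|u_i|\le\tau\}}$. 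Hence it suffices to prove that $L(\bG,\bT):=\frac1m\sum_{i=1}^m u_iv_i\mathbb{I}_{\{|u_i|\le\tau\}}$ obeys $L(\bG,\bT)\ge\gamma_1\langle\bG,\bT\rangle-0.0011\alpha_h$ uniformly over rank-$2r$ matrices $\bG,\bT$.

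For a fixed pair $(\bG,\bT)$, the vectors $(u_i,v_i)$ are i.i.d.\ centered jointly Gaussian with unit variances and covariance $\rho=\langle\bG,\bT\rangle$; conditioning on $u_i$ gives $\mathbb{E}[v_i\mid u_i]=\rho u_i$, so $\mathbb{E}[L(\bG,\bT)]=\rho\,\mathbb{E}[u^2\mathbb{I}_{\{|u|\le\tau\}}]=\gamma_1\rho$, which is exactly the claimed leading term. Because each summand is dominated by $\tau|v_i|$ with $v_i$ standard Gaussian, the summands are sub-exponential, and a Bernstein inequality yields $|L(\bG,\bT)-\gamma_1\rho|\le\delta$ with probability at least $1-2\exp(-cm\delta^2)$ for any fixed small $\delta$.

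The remaining, and hardest, step is to upgrade this pointwise estimate to a bound that is uniform over all rank-$2r$ matrices. I would place an $\eta$-net over the unit-Frobenius-norm rank-$2r$ matrices for each argument via Lemma~\ref{lemma_covering_net}, whose log-cardinality is of order $nr\log(1/\eta)$, and union bound the Bernstein estimate over the product net, so that for $m\ge c_1 nr$ it holds simultaneously on the net with probability $1-\exp(-c_2 m)$. The difficulty is transferring to off-net matrices: although the bilinear factor $u_iv_i$ is smooth in $(\bG,\bT)$ and its perturbation is governed by the restricted isometry property (Lemma~\ref{lemma_RIP_A_rec}), the truncation indicator $\mathbb{I}_{\{|u_i|\le\tau\}}$ is discontinuous in $\bG$, and I expect this to be the main obstacle. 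The plan is to sandwich $\mathbb{I}_{\{|u_i|\le\tau\}}$ between the net-point indicators with thresholds widened and narrowed by $\max_i\left\Vert\bA_i\right\Vert_F\,\eta$ (controlled via Lemma~\ref{lemma_max_fnorm}), and then bound the contribution of the samples whose residual magnitude $|u_i|$ lands near $\tau$; this boundary contribution concentrates around an expectation that is small because the folded-normal density is bounded near $\tau$, using the same density estimates that underlie Lemma~\ref{lemma_quantile_func_bound} and Proposition~\ref{prop_sample_median_concen}. To keep the net at a constant scale and reach the order-$nr$ sample complexity of Proposition~\ref{prop_local_curv_clean_first}, I would instead replace the hard indicator by a Lipschitz surrogate in the residual $u_i$, making the resulting statistic Lipschitz in $\bG$ with a constant controlled by the RIP rather than by $\max_i\left\Vert\bA_i\right\Vert_F$, and separately bound the small bias introduced by the smoothing. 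Combining the mean, the Bernstein fluctuation, and the discretization error within the budget $0.0011\alpha_h\left\Vert\bG\right\Vert_F\left\Vert\bT\right\Vert_F$ then completes the proof.
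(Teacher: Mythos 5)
Your proposal is correct and follows essentially the same route as the paper's proof: normalize $\left\Vert\bG\right\Vert_F=\left\Vert\bT\right\Vert_F=1$, obtain the mean $\gamma_1\langle\bG,\bT\rangle$ by Gaussian conditioning, concentrate the fixed-pair statistic with a sub-Gaussian/sub-exponential tail bound, and achieve uniformity by replacing the hard truncation indicator with a Lipschitz surrogate whose net-transfer error is controlled via the RIP of Lemma~\ref{lemma_RIP_A_rec} --- exactly the paper's construction with its auxiliary function $\chi$ and the choice $\tau=c_1\delta\varepsilon$. The only (harmless) difference is that you discard the sign indicator on the summands at the outset, whereas the paper keeps it in the concentrated statistic and drops it only inside the expectation.
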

Specializing Lemma~\ref{lemma_prop2} to $\bG = \bU\bV^T-\bX\bY^T$ and $\bT=\bH_{1}\bV^T + \bU\bH_{2}^{T}$ yields Proposition~\ref{prop_local_curv_clean_first}. The rest of the proof is dedicated to proving Lemma~\ref{lemma_prop2}. Without loss of generality, we can assume $\|\bG\|_F=\|\bT\|_F=1$. Define an auxiliary function as
\begin{equation*}
\chi\left(t\right) = 
\begin{cases}
1, &\quad \left\vert t\right\vert < 0.65 \alpha_{h} - \delta;\\
\frac{1}{\delta}\left( 0.65 \alpha_{h} - \left\vert t\right\vert\right), & \quad  0.65 \alpha_{h} - \delta \le \left\vert t\right\vert \le 0.65 \alpha_{h};\\
0, & \quad \left\vert t\right\vert > 0.65 \alpha_{h},
\end{cases}
\end{equation*}
where $\delta$ is a sufficiently small constant. The function $\chi\left(t\right)$ is a Lipschitz function with the Lipschitz constant $1/\delta$. We have
\begin{align}
 &\quad \langle \bA_{i}, \bG \rangle \cdot \langle \bA_{i}, \bT \rangle \cdot  \mathbb{I}_{ \left\{   \left\vert  \langle \bA_{i}, \bG \rangle   \right\vert   \le  0.65 \alpha_{h} - \delta \right\}  }  \cdot \mathbb{I}_{ \left\{ \langle \bA_{i}, \bG \rangle  \cdot \langle \bA_{i}, \bT \rangle  \ge 0 \right\} } \nonumber \\
 & \le \langle \bA_{i}, \bG \rangle \cdot \langle \bA_{i}, \bT \rangle \cdot \chi ( \langle \bA_{i}, \bG \rangle  )  \cdot \mathbb{I}_{ \left\{ \langle \bA_{i}, \bG \rangle  \cdot \langle \bA_{i}, \bT \rangle  \ge 0 \right\} }\label{bound_auxiliary}\\
&\le   \langle \bA_{i}, \bG \rangle \cdot \langle \bA_{i}, \bT \rangle  \cdot    \mathbb{I}_{ \left\{   \left\vert  \langle \bA_{i}, \bG \rangle   \right\vert   \le   0.65 \alpha_{h} \right\}  } \cdot \mathbb{I}_{ \left\{ \langle \bA_{i},\bG   \rangle  \cdot \langle \bA_{i}, \bT \rangle  \ge 0 \right\} }. \nonumber
\end{align}

Let $\zeta_{i} = \langle \bA_{i}, \bG \rangle \cdot \langle \bA_{i}, \bT \rangle \cdot \chi ( \langle \bA_{i}, \bG \rangle  )  \cdot \mathbb{I}_{ \left\{ \langle \bA_{i}, \bG \rangle  \cdot \langle \bA_{i}, \bT \rangle  \ge 0 \right\} }$, $i=1,2,\dots, m$, of which each can be considered as an i.i.d. copy of $\zeta$, defined as $\zeta  = \langle \bA, \bG \rangle \cdot \langle \bA, \bT \rangle \cdot \chi ( \langle \bA , \bG \rangle  )  \cdot \mathbb{I}_{ \left\{ \langle \bA, \bG \rangle  \cdot \langle \bA, \bT \rangle  \ge 0 \right\} }$. From \eqref{bound_auxiliary}, we have 
\begin{align*}
\mathbb{E}\left[ \zeta \right] 
& \ge \mathbb{E}\left[ \langle \bA , \bG \rangle \cdot \langle \bA , \bT \rangle \cdot  \mathbb{I}_{ \left\{   \left\vert  \langle \bA , \bG \rangle   \right\vert   \le  0.65 \alpha_{h} - \delta \right\}  }  \cdot \mathbb{I}_{ \left\{ \langle \bA, \bG \rangle  \cdot \langle \bA, \bT \rangle  \ge 0 \right\} } \right] \\
& \ge  \mathbb{E}\left[ \langle \bA , \bG \rangle \cdot \langle \bA , \bT \rangle \cdot  \mathbb{I}_{ \left\{   \left\vert  \langle \bA , \bG \rangle   \right\vert   \le  0.65 \alpha_{h} - \delta \right\}  }    \right]  \\
& =\left\langle \mathbb{E}\left[ \langle \bA , \bG \rangle  \bA   \cdot  \mathbb{I}_{ \left\{   \left\vert   \langle \bA , \bG \rangle    \right\vert   \le  0.65 \alpha_{h} - \delta \right\}   } \right]  , \bT \right\rangle   =  \gamma_{1} \cdot \langle \bG, \bT \rangle,  
\end{align*}
where $\gamma_{1} = \mathbb{E}\left[ \xi^{2}  \mathbb{I}_{\left\{  \left\vert \xi \right\vert \le 0.65 \alpha_{h} - \delta \right\}} \right] $ with $\xi \sim \mathcal{N}\left(0, 1\right)$. Moreover, for $p\geq 0$,
\begin{align*}
\left(\mathbb{E}\left[ \left\vert \zeta \right\vert^{p} \right]\right)^{1/p}
& \le \left(\mathbb{E}\left[ \left\vert  \langle \bA, \bG \rangle \cdot \langle \bA, \bT \rangle \cdot  \mathbb{I}_{ \left\{   \left\vert  \langle \bA, \bG \rangle   \right\vert   \le   0.65  \alpha_{h}    \right\}  }  \cdot \mathbb{I}_{ \left\{ \langle \bA, \bG   \rangle  \cdot \langle \bA, \bT \rangle  \ge 0 \right\} }  \right\vert^{p} \right] \right)^{1/p}\\
& \le  \left(\mathbb{E}\left[ \left\vert  \langle \bA, \bG \rangle \cdot \langle \bA, \bT \rangle \cdot  \mathbb{I}_{ \left\{   \left\vert  \langle \bA, \bG \rangle   \right\vert   \le   0.65  \alpha_{h}    \right\}  }    \right\vert^{p} \right] \right)^{1/p}\\
& \le 0.65  \alpha_{h}  \left(\mathbb{E}\left[ \left\vert \langle \bA, \bT \rangle  \right\vert^{p} \right]\right)^{1/p}\leq 0.65c\alpha_h \sqrt{p},
\end{align*}
which indicates that $\zeta$ is a sub-Gaussian random variable with 
$\left\Vert \zeta \right\Vert_{\psi_{2}} \leq 0.65c \alpha_{h} $. Then applying the Hoeffding-type inequality \cite[Proposition 5.10]{Vershynin2012}, we have for any $t \ge 0$, 
\begin{equation*}
 \mathbb{P}\left\{ \left\vert \frac{1}{m}\sum_{i=1}^{m} \zeta_{i} - \mathbb{E}\left[\zeta\right] \right\vert \ge t \right\}  \le \exp{\left(- cmt^{2}/\alpha_{h}^{2}  \right)}.
\end{equation*}
for some $c>0$. Let $t = \varepsilon \alpha_{h}$, where $\varepsilon$ is small enough. Then 
\begin{align}
\frac{1}{m}\sum_{i=1}^{m} \zeta_{i} 
& \ge \mathbb{E}\left[\zeta\right]  -    \varepsilon \alpha_{h}   \ge \gamma_ 1  \langle\bG,  \bT \rangle  -    \varepsilon \alpha_{h}  \label{eq_local_curv_first_term_fixed}
\end{align}
holds with probability at least $1 - \exp{\left( - c m \varepsilon^{2}  \right)}$.

Next, a covering argument is needed to extend  \eqref{eq_local_curv_first_term_fixed} to all rank-$2r$ matrices $(\bG,\bT)$ with unit Frobenius norm. Let $\mathcal{N}_{\tau}$ be a $\tau$-net covering all rank-$2r$ matrices with respect to the Frobenius norm, and define
\begin{equation*}
\mathcal{M}_{\tau} = \left\{ \left(\bG_{0}, \bT_{0}\right): \left(\bG_{0}, \bT_{0}\right) \in \mathcal{N}_{\tau} \times \mathcal{N}_{\tau}\right\}
\end{equation*}
such that for any pair of rank-$2r$ matrices $\left(\bG, \bT\right)$ with $\left\Vert \bG \right\Vert_{F} = \left\Vert \bT \right\Vert_{F} = 1$, there exists $\left(\bG_{0}, \bT_{0}\right) \in \mathcal{M}_{\tau}$ with $\left\Vert \bG_{0} \right\Vert_{F} = \left\Vert \bT_{0} \right\Vert_{F} = 1$ satisfying $\left\Vert \bG_{0} - \bG \right\Vert_{F} \le \tau$ and $\left\Vert \bT_{0} - \bT \right\Vert_{F} \le \tau$. Since both $\mathrm{rank}\left(\bG\right) \le 2r$ and $\mathrm{rank}\left(\bT\right) \le 2r$, then Lemma~\ref{lemma_covering_net} guarantees $\left\vert \mathcal{M}_{\tau} \right\vert \le \left( 9/\tau \right)^{2r\left(2n+1\right)} \cdot \left( 9/\tau \right)^{2r\left(2n+1\right)} \le \left( 9/\tau \right)^{4r\left(2n+1\right)}$. Taking the union bound gives for all $(\bG_0,\bT_0)\in\mathcal{M}_{\tau}$,
\begin{equation*}
\frac{1}{m}\sum_{i=1}^{m}   \langle \bA_{i}, \bG_0 \rangle \cdot \langle \bA_{i}, \bT_0 \rangle \cdot  \chi \left(    \langle \bA_{i}, \bG_0 \rangle \right)  \cdot \mathbb{I}_{ \left\{ \langle \bA_{i}, \bG_0   \rangle  \cdot \langle \bA_{i}, \bT_0 \rangle  \ge 0 \right\} }  \ge \gamma_{1} \cdot  \langle \bG_0 ,  \bT_0 \rangle  -    \varepsilon \alpha_{h} 
\end{equation*}
with probability at least $1- \left( 9/\tau \right)^{4r\left(2n+1\right)}\exp{\left( - c \varepsilon^{2} m \right)}$. Furthermore, 
\begin{align}
& \quad \Big\vert \frac{1}{m}\sum_{i=1}^{m}   \langle \bA_{i}, \bG \rangle \cdot \langle \bA_{i}, \bT \rangle \cdot  \chi \left(    \langle \bA_{i}, \bG \rangle \right)  \cdot \mathbb{I}_{ \left\{ \langle \bA_{i}, \bG   \rangle  \cdot \langle \bA_{i}, \bT \rangle  \ge 0 \right\} } \nonumber\\
& \quad\quad- \frac{1}{m}\sum_{i=1}^{m}  \langle \bA_{i}, \bG_{0} \rangle \cdot \langle \bA_{i}, \bT_{0} \rangle \cdot  \chi \left(    \langle \bA_{i}, \bG_{0} \rangle \right)  \cdot \mathbb{I}_{ \left\{ \langle \bA_{i}, \bG_{0}   \rangle  \cdot \langle \bA_{i}, \bT_{0} \rangle  \ge 0 \right\} } \Big\vert \nonumber\\
&\le  \frac{1}{m}\sum_{i=1}^{m} \Big\vert \langle \bA_{i}, \bG \rangle \cdot \langle \bA_{i}, \bT \rangle \cdot  \chi \left(    \langle \bA_{i}, \bG \rangle \right)  \cdot \mathbb{I}_{ \left\{ \langle \bA_{i}, \bG   \rangle  \cdot \langle \bA_{i}, \bT \rangle  \ge 0 \right\} }  \nonumber\\
& \quad\quad\quad\quad\quad\quad - \langle \bA_{i}, \bG_{0} \rangle \cdot \langle \bA_{i}, \bT_{0} \rangle \cdot  \chi \left(    \langle \bA_{i}, \bG_{0} \rangle \right)  \cdot \mathbb{I}_{ \left\{ \langle \bA_{i}, \bG_{0}   \rangle  \cdot \langle \bA_{i}, \bT_{0} \rangle  \ge 0 \right\} }\Big\vert \nonumber\\
&\le  \frac{1}{m}\sum_{i=1}^{m}\left\vert \langle \bA_{i}, \bG \rangle \cdot \langle \bA_{i}, \bT \rangle \cdot  \chi \left(    \langle \bA_{i}, \bG \rangle \right) - \langle \bA_{i}, \bG_{0} \rangle \cdot \langle \bA_{i}, \bT_{0} \rangle \cdot  \chi \left(    \langle \bA_{i}, \bG_{0} \rangle \right) \right\vert \nonumber\\
& \le \frac{1}{m}\sum_{i=1}^{m}  \left\vert \langle \bA_{i}, \bG \rangle \cdot  \chi \left(    \langle \bA_{i}, \bG \rangle \right) - \langle \bA_{i}, \bG_{0} \rangle \cdot  \chi \left(    \langle \bA_{i}, \bG_{0} \rangle \right) \right\vert \cdot \left\vert \langle \bA_{i}, \bT \rangle  \right\vert \nonumber\\
& \quad\quad\quad\quad +  \frac{1}{m}\sum_{i=1}^{m} \left\vert   \langle \bA_{i}, \bT - \bT_{0} \rangle \right\vert  \cdot \left\vert  \langle \bA_{i}, \bG_{0} \rangle \cdot  \chi \left(    \langle \bA_{i}, \bG_{0} \rangle \right) \right\vert \nonumber\\
& \le \frac{0.65\alpha_h}{\delta }\left( \frac{1}{m}\sum_{i=1}^{m}  \left\vert \langle \bA_{i}, \bG - \bG_{0} \rangle \right\vert \cdot \left\vert \langle \bA_{i}, \bT \rangle  \right\vert +  \frac{1}{m}\sum_{i=1}^{m} \left\vert   \langle \bA_{i}, \bT - \bT_{0} \rangle \right\vert  \cdot \left\vert  \langle \bA_{i}, \bG_{0} \rangle   \right\vert \right) \label{equ_proof_localcurv_clean_first_lip}\\
& \le \frac{0.65\alpha_h}{\delta }\left(  \frac{1}{\sqrt{m}}\| \mathcal{A}( \bG - \bG_{0} )\|_2  \cdot \frac{1}{\sqrt{m}}\| \mathcal{A}( \bT )\|_2  + \frac{1}{\sqrt{m}}\| \mathcal{A}( \bT - \bT_{0} )\|_2  \cdot \frac{1}{\sqrt{m}}\| \mathcal{A}( \bG_0 )\|_2 \right) \label{equ_proof_localcurv_clean_first_cauchy} \\
& \le \frac{c_2\alpha_h}{\delta } \left( \left\Vert \bG - \bG_{0} \right\Vert_{F} \left\Vert \bT \right\Vert_{F} +  \left\Vert \bT - \bT_{0} \right\Vert_{F}\left\Vert \bG_{0} \right\Vert_{F} \right)\label{equ_proof_localcurv_clean_first_rip}\\
& \le \frac{c_2\alpha_h\tau}{\delta } ,\nonumber
\end{align}
where \eqref{equ_proof_localcurv_clean_first_lip} follows from the Lipschitz property of $t\chi(t)$, \eqref{equ_proof_localcurv_clean_first_cauchy} follows from the Cauchy-Schwarz inequality, and \eqref{equ_proof_localcurv_clean_first_rip} follows from Lemma~\ref{lemma_RIP_A_rec}.

Let $\tau = c_{1}\delta \varepsilon$, then provided  $m \ge c_{2} \varepsilon^{-2} \left(\log{\frac{1}{\delta\varepsilon}}\right) nr $, 
\begin{equation*}
\frac{1}{m}\sum_{i=1}^{m}   \langle \bA_{i}, \bG \rangle \cdot \langle \bA_{i}, \bT \rangle \cdot  \chi \left(    \langle \bA_{i}, \bG \rangle \right)  \cdot \mathbb{I}_{ \left\{ \langle \bA_{i}, \bG   \rangle  \cdot \langle \bA_{i}, \bT \rangle  \ge 0 \right\} }  \ge \gamma_{1}\cdot  \langle \bG,  \bT \rangle  -    1.1\varepsilon \alpha_{h} 
\end{equation*}
holds for all rank-$2r$ matrices $\bG$ and $\bT$ with probability at least $1- \exp{\left(-c\varepsilon^{2}m\right)}$. The proof is finished by setting  $\delta$ arbitrarily small and $\varepsilon=0.001$.

\section{Proof of Proposition~\ref{prop_local_curv_clean_second}}\label{proof_prop_local_curv_clean_second}

First, note that due to the definition of $\mathcal{D}$ in \eqref{def_setD}, $-B_2$ can be written as
\begin{align}
-B_2 & = \frac{1}{2m}\sum_{i \in \mathcal{D}}  \left\vert \langle \bA_{i}, \bU\bV^{T} - \bX\bY^{T}   \rangle  \right\vert \cdot  \left\vert \langle \bA_{i}, \bH_{1}\bV^{T} + \bU\bH_{2}^{T} \rangle \right\vert  \cdot    \mathbb{I}_{ \left\{   \left\vert  \langle \bA_{i}, \bU\bV^{T} - \bX\bY^{T}  \rangle   \right\vert   \le   0.70 \alpha_{h}  \left\Vert \bU\bV^{T}- \bX\bY^{T} \right\Vert_{F}  \right\}  } \nonumber \\
& =   \frac{1}{m}\sum_{i \in \mathcal{D}} \left\vert \langle \bA_{i}, \bU\bV^{T} - \bX\bY^{T}   \rangle\right\vert  \cdot \left\vert\langle \bB_{i}, \bH\bW^{T} \rangle \right\vert \cdot    \mathbb{I}_{ \left\{   \left\vert  \langle \bA_{i}, \bU\bV^{T} - \bX\bY^{T}  \rangle   \right\vert   \le   0.70 \alpha_{h}  \left\Vert \bU\bV^{T}- \bX\bY^{T} \right\Vert_{F}  \right\}  } \nonumber\\
& \leq  0.70 \alpha_{h} \left\Vert \bU\bV^{T}- \bX\bY^{T} \right\Vert_{F} \cdot \frac{1}{m}\sum_{i \in \mathcal{D}} \left\vert\langle \bB_{i}, \bH\bW^{T} \rangle \right\vert .\label{equ_proof_b1}
\end{align}
Note that when $i\in\mathcal{D}$, we have the following lemma, whose proof is given in Appendix~\ref{proof_lemma_setD}.
\begin{lemma}\label{lemma_setD}
If $i\in\mathcal{D}$, then $\left\vert \langle \bB_i, \bH\bW^T \rangle \right\vert  <   \frac{1}{2}   \left\vert \langle \bB_i,  \bH\bH^T \rangle  \right\vert$. 
\end{lemma}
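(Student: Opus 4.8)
The plan is to strip away everything probabilistic and reduce the claim to an elementary sign inequality between two scalars attached to index $i$. The starting point is the substitution $\bW = \bZ\bQ + \bH$ together with $\bQ\bQ^{T} = \bI$, which I would use to expand the three matrices appearing in the definition of $\mathcal{D}$. Concretely,
\[
\bW\bW^{T} - \bZ\bZ^{T} = \bZ\bQ\bH^{T} + \bH\bQ^{T}\bZ^{T} + \bH\bH^{T}, \qquad \bH\bW^{T} = \bH\bQ^{T}\bZ^{T} + \bH\bH^{T}.
\]
Since each $\bB_{i}$ is symmetric, $\langle \bB_{i}, \bZ\bQ\bH^{T}\rangle = \langle \bB_{i}, \bH\bQ^{T}\bZ^{T}\rangle$ (the two matrices are transposes of each other). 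Writing $u_{i} := \langle \bB_{i}, \bH\bQ^{T}\bZ^{T}\rangle$ and $v_{i} := \langle \bB_{i}, \bH\bH^{T}\rangle$, this bookkeeping collapses the three inner products into $\langle \bB_{i}, \bW\bW^{T} - \bZ\bZ^{T}\rangle = 2u_{i} + v_{i}$, $\langle \bB_{i}, \bH\bW^{T}\rangle = u_{i} + v_{i}$, and $\langle \bB_{i}, \bH\bH^{T}\rangle = v_{i}$.

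With these identities in hand, the defining condition $i\in\mathcal{D}$ in \eqref{def_setD} becomes exactly $(2u_{i}+v_{i})(u_{i}+v_{i}) < 0$. Setting $x_{i} := u_{i}+v_{i} = \langle \bB_{i}, \bH\bW^{T}\rangle$ and observing that $2u_{i}+v_{i} = 2x_{i}-v_{i}$, the membership condition reads $x_{i}(2x_{i}-v_{i}) < 0$, so that $x_{i}$ and $2x_{i}-v_{i}$ must have opposite nonzero signs. The goal $\left\vert\langle \bB_{i}, \bH\bW^{T}\rangle\right\vert < \tfrac{1}{2}\left\vert\langle \bB_{i}, \bH\bH^{T}\rangle\right\vert$ is then precisely $\left\vert x_{i}\right\vert < \tfrac{1}{2}\left\vert v_{i}\right\vert$.

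Finally I would finish with a two-case sign analysis. If $x_{i} > 0$, then $2x_{i}-v_{i} < 0$ forces $v_{i} > 2x_{i} > 0$, whence $\left\vert x_{i}\right\vert = x_{i} < v_{i}/2 = \left\vert v_{i}\right\vert/2$; if $x_{i} < 0$, then $2x_{i}-v_{i} > 0$ forces $v_{i} < 2x_{i} < 0$, whence $\left\vert x_{i}\right\vert = -x_{i} < -v_{i}/2 = \left\vert v_{i}\right\vert/2$ (the degenerate case $x_{i}=0$ is excluded, since it makes the product vanish rather than be negative). In either case $\left\vert x_{i}\right\vert < \tfrac{1}{2}\left\vert v_{i}\right\vert$, which is the claim. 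The argument is entirely deterministic and short; the only step requiring genuine care is the correct identification of the cross term $u_{i}$ via the symmetry of $\bB_{i}$, after which the inequality drops out of the sign bookkeeping. I do not anticipate any real obstacle beyond getting that decomposition exactly right.
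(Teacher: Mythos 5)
Your proposal is correct and follows essentially the same route as the paper: both reduce membership in $\mathcal{D}$ to the identity $\langle \bB_i, \bW\bW^{T}-\bZ\bZ^{T}\rangle = 2\langle \bB_i,\bH\bW^{T}\rangle - \langle \bB_i,\bH\bH^{T}\rangle$ and then conclude by the same two-case sign analysis. The only cosmetic difference is that you expand $\bW=\bZ\bQ+\bH$ and invoke the symmetry of $\bB_i$ to merge the cross terms, while the paper substitutes $\bZ\bQ=\bW-\bH$ directly; the resulting identity and argument are identical.
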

Plugging Lemma~\ref{lemma_setD} into \eqref{equ_proof_b1}, we obtain
\begin{align}
-B_2 & \le  0.35 \alpha_{h} \left\Vert \bU\bV^{T}- \bX\bY^{T} \right\Vert_{F} \cdot \frac{1}{m}\sum_{i \in \mathcal{D}} \left\vert\langle \bB_{i}, \bH\bH^{T} \rangle \right\vert \nonumber \\
& \le 0.35 \alpha_{h}  \left\Vert \bU\bV^{T}- \bX\bY^{T} \right\Vert_{F} \frac{1}{m} \sqrt{m} \left(\sum_{i\in\mathcal{D}} \left\vert\langle \bA_{i}, \bH_{1}\bH_{2}^{T} \rangle \right\vert^{2} \right)^{1/2} \label{equ_B2_local_curv_cauchy}\\
&\leq 0.35 \alpha_{h}  \left\Vert \bU\bV^{T}- \bX\bY^{T} \right\Vert_{F} \frac{1}{\sqrt{m}}\|\mathcal{A}(\bH_{1}\bH_{2}^T)\|_2 \nonumber \\ 
& \le   0.35\left(1+\delta\right) \alpha_{h}  \left\Vert \bU\bV^{T}- \bX\bY^{T} \right\Vert_{F}  \left\Vert \bH_{1}\bH_{2}^{T} \right\Vert_{F},
\end{align}
where \eqref{equ_B2_local_curv_cauchy} follows from the Cauchy-Schwarz inequality and the last inequality follows from Lemma~\ref{lemma_RIP_A_rec}.

\section{Proof of Proposition~\ref{prop_outlier_curvature}}\label{proof_outlier_curvature}
First, note that by the definitions of $\mathcal{E}_{i}$ and $\tilde{\mathcal{E}}_i$, we have 
\begin{align*}
\left| \left(\mathcal{A}_{i}\left(\bU\bV^{T} \right) - y_{i}\right)\mathbb{I}_{ \mathcal{E}_{i}} \right| & \leq \alpha_h \mathrm{med}\left(  \left\vert \by - \mathcal{A}\left( \bU\bV^{T} \right)     \right\vert  \right), \\
\left| \left(\mathcal{A}_{i}\left(\bU\bV^{T} \right) - \mathcal{A}_{i}\left(\bX\bY^{T} \right) \right)\mathbb{I}_{ \tilde{\mathcal{E}}_{i}} \right| & \leq  \alpha_h \mathrm{med}\left(  \left\vert \by - \mathcal{A}\left( \bU\bV^{T} \right)     \right\vert  \right).
\end{align*}
Then we further obtain
\begin{align}
 |\langle \nabla^{o} f_{tr}\left(\bW\right),  \bH  \rangle|
& \le \frac{1}{m} \sum_{i \in \mathcal{S}}  \left\vert  \left[ \left(\mathcal{B}_{i}\left(\bW\bW^{T} \right) - y_{i}\right)\mathbb{I}_{ \mathcal{E}_{i}}  -  \left(\mathcal{B}_{i}\left(\bW\bW^{T} \right) - \mathcal{B}_{i}\left(\bZ\bZ^{T} \right) \right)\mathbb{I}_{ \tilde{\mathcal{E}}_{i}}  \right] \langle \bB_{i}, \bH\bW^{T} \rangle  \right\vert \nonumber\\
& =  \frac{1}{m}   \sum_{i \in \mathcal{S}}  \left\vert \left[ \left(\mathcal{A}_{i}\left(\bU\bV^{T} \right) - y_{i}\right)\mathbb{I}_{ \mathcal{E}_{i}}  -  \left(\mathcal{A}_{i}\left(\bU\bV^{T} \right) - \mathcal{A}_{i}\left(\bX\bY^{T} \right) \right)\mathbb{I}_{ \tilde{\mathcal{E}}_{i}}  \right]  \langle  \bB_{i}, \bH\bW^{T} \rangle \right\vert \nonumber\\
& \leq \frac{2\alpha_{h}}{m}\mathrm{med}\left(  \left\vert \by - \mathcal{A}\left( \bU\bV^{T} \right)     \right\vert  \right) \sum_{i\in\cS}   \left|  \langle  \bB_{i}, \bH\bW^{T} \rangle\right| \nonumber \\
& \leq \frac{2\alpha_{h}}{m}\mathrm{med}\left(  \left\vert \by - \mathcal{A}\left( \bU\bV^{T} \right)     \right\vert  \right) \sqrt{|\cS|} \left(\sum_{i\in\cS} \left|  \langle  \bB_{i}, \bH\bW^{T} \rangle\right|^2 \right)^{1/2} \label{equ_local_curv_outlier_cauchy} \\
& \leq \alpha_{h} \sqrt{ \frac{|\cS|}{m}} \mathrm{med}\left(  \left\vert \by - \mathcal{A}\left( \bU\bV^{T} \right)     \right\vert  \right)  \left(\frac{1}{m}\sum_{i=1}^m \left|  \langle  \bA_{i}, \bH_{1}\bV^{T}+\bU\bH_{2}^{T} \rangle\right|^2 \right)^{1/2} \nonumber \\
&\leq  0.70\alpha_{h} \sqrt{s}\left\Vert \bX\bY^{T} - \bU\bV^{T} \right\Vert_{F} \cdot  (1+\delta) \| \bH_{1}\bV^T + \bU\bH_{2}^{T}\|_F \label{equ_local_curv_outlier_rip} \\
& \leq 0.71\alpha_{h}\sqrt{s} \left\Vert \bX\bY^{T} - \bU\bV^{T} \right\Vert_{F} \| \bH_{1}\bV^T + \bU\bH_{2}^{T}\|_F, \nonumber
\end{align}
where \eqref{equ_local_curv_outlier_cauchy} follows from the Cauchy-Schwarz inequality, \eqref{equ_local_curv_outlier_rip} follows from \eqref{equ_sample_median_concen} and Lemma~\ref{lemma_RIP_A_rec}, and the last inequality follows by setting $\delta$ sufficiently small.

\section{Proof of Proposition~\ref{prop_local_smooth_clean}}\label{proof_prop_local_smooth_clean}

Since $ \left\Vert \nabla f_{tr}\left(\bW\right)  \right\Vert_{F}^{2} = \max_{\left\Vert \bG \right\Vert_{F} = 1} \left\vert  \langle  \nabla f_{tr}\left(\bW\right) , \bG \rangle \right\vert^{2}$, it is sufficient to upper bound $\left\vert  \langle  \nabla f_{tr}\left(\bW\right) , \bG \rangle \right\vert^{2}$ for any arbitrary $\bG = \begin{bmatrix} \bG_{1}^{T} & \bG_{2}^{T}\end{bmatrix}^{T}  \in\mathbb{R}^{(n_{1}+n_{2}) \times r}$ with $\bG_{1}\in\mathbb{R}^{n_{1}\times r}$ and $\bG_{2}\in\mathbb{R}^{n_{2}\times r}$ satisfying $\left\Vert \bG \right\Vert_{F} = 1$. We have
\begin{align}
&\left\vert  \langle  \nabla f_{tr}\left(\bW\right) , \bG \rangle \right\vert^{2}\nonumber\\
& = \left\vert  \langle \frac{1}{m} \sum_{i=1}^{m} \left( \mathcal{B}_{i}\left(\bW\bW^{T} \right) - y_{i}\right)\bB_{i}\bW\mathbb{I}_{ \mathcal{E}_{i}} , \bG\rangle \right\vert^{2}  \nonumber\\
& = \left\vert  \left \langle \frac{1}{m} \sum_{i=1}^{m} \left( \mathcal{A}_{i}\left(\bU\bV^{T}\right) - y_{i}\right)\bB_{i}\bW\mathbb{I}_{ \mathcal{E}_{i}}, \bG \right\rangle \right\vert^{2} \nonumber\\
& = \left\vert   \frac{1}{m} \sum_{i=1}^{m} \left(\langle \bA_i, \bU\bV^{T}\rangle  - y_{i} \right)  \cdot \langle \bB_{i},\bG\bW^{T} \rangle \cdot \mathbb{I}_{ \mathcal{E}_{i}} \right\vert^{2} \nonumber \\
& \le \left( \frac{1}{m} \sum_{i=1}^{m} \left(\langle \bA_i, \bU\bV^{T}\rangle - y_{i} \right)^{2}  \cdot \mathbb{I}_{ \mathcal{E}_{i}} \right)  \cdot \left( \frac{1}{m} \sum_{i=1}^{m}  \left| \langle \bA_{i}, \frac{1}{2}\left(\bG_{1}\bV^{T} + \bU\bG_{2}^{T}\right) \rangle\right|^{2}  \right) \label{equ_local_smooth_cau_sch}
\end{align}
where \eqref{equ_local_smooth_cau_sch} follows from the Cauchy-Schwarz inequality. Due to \eqref{truncation_bound}, we have
\begin{align}
 \frac{1}{m} \sum_{i=1}^{m} \left(\langle \bA_i, \bU\bV^{T}\rangle  - y_{i} \right)^{2}  \cdot \mathbb{I}_{ \mathcal{E}_{i}}  & \leq  \frac{1}{m} \sum_{i=1}^{m} \left(\langle \bA_i, \bU\bV^{T}\rangle  - y_{i} \right)^{2}  \cdot \mathbb{I}_{ \left\{ \left\vert \langle \bA_i, \bU\bV^{T}\rangle  - y_i  \right\vert \le 0.70 \alpha_{h}  \left\Vert \bU\bV^{T}- \bX\bY^{T} \right\Vert_{F}   \right\} } \nonumber \\
 & \leq 0.70^2\alpha_h^2\left\Vert \bU\bV^{T}- \bX\bY^{T} \right\Vert_{F} ^2.\label{first_term_prop4}
\end{align}
From Lemma~\ref{lemma_RIP_A_rec}, we have
\begin{align}
 \frac{1}{m} \sum_{i=1}^{m}  \left| \langle \bA_{i},\frac{1}{2}\left(\bG_{1}\bV^{T} + \bU\bG_{2}^{T}\right) \rangle\right|^{2} 
 & \leq \frac{1}{4}(1+\delta)^2 \left\Vert \bG_{1}\bV^{T} + \bU\bG_{2}^{T}\right\Vert_{F}^{2} \nonumber\\
 & \le  \frac{1}{2}(1+\delta)^2 \left( \left\Vert \bG_{1}\bV^{T} \right\Vert_{F}^{2} +  \left\Vert  \bU\bG_{2}^{T}\right\Vert_{F}^{2}\right)\nonumber\\
 & \le \frac{1}{2}(1+\delta)^2  \max\left\{ \left\Vert \bU\right\Vert^{2}, \left\Vert \bV \right\Vert^{2} \right\}\nonumber\\
 & \le \frac{1}{2}(1+\delta)^2 \left\Vert \bW \right\Vert^{2}. \label{second_term_prop4}
\end{align}
Plugging \eqref{first_term_prop4} and \eqref{second_term_prop4} into \eqref{equ_local_smooth_cau_sch}, we have
\begin{align*}
\left\vert  \langle  \nabla f_{tr}\left(\bW\right) , \bG \rangle \right\vert^{2}
&  \le  \frac{1}{2} \cdot 0.70^2 \left(1+\delta\right)^{2} \alpha_{h}^{2}  \left\Vert \bU\bV^{T} - \bX\bY^{T} \right\Vert_{F}^{2} \left\Vert \bW \right\Vert^{2},
\end{align*}
and the proof is completed by setting $\delta$ small enough.


\section{Proof of Proposition~\ref{bound_Y_noisy}}\label{proof_bound_Y_noisy}

First, consider the bound of $\left\Vert \bK_{1}  - \mathbb{E}\left[\bK_{1}\right] \right\Vert $. Define
\begin{align*}
\bS_{i} = \mathcal{A}_{i}\left( \bM \right) \bA_i \mathbb{I}_{\left\{ \left\vert \mathcal{A}_{i}\left( \bM \right) \right\vert \le   \alpha_{y} C_{M} \right\}}    -     \gamma_{2}   \bM,\quad i\in\mathcal{S}_1^c, 
\end{align*}
which satisfies $\mathbb{E}\left[ \bS_{i} \right] = \boldsymbol{0}$, and $\bK_{1} -  \mathbb{E}\left[ \bK_{1} \right] =  \frac{1}{\left\vert \mathcal{S}_{1}^{c} \right\vert}  \sum_{i\in \mathcal{S}_{1}^{c}} \bS_{i}$.

Based on \cite[Proposition 5.34]{Vershynin2012}, we know 
\begin{equation*}
\mathbb{P}\left\{ \left\vert \left\Vert \bA_{i} \right\Vert - \mathbb{E}\left[ \left\Vert \bA_{i} \right\Vert  \right] \right\vert > t \right\} \le 2 e^{-t^{2}/2},
\end{equation*}
which shows $\left\Vert \bA_{i} \right\Vert - \mathbb{E}  \left\Vert \bA_{i} \right\Vert  $ is a  sub-Gaussian random variable satisfying $\left\Vert \left\Vert \bA_{i} \right\Vert - \mathbb{E}  \left\Vert \bA_{i} \right\Vert  \right\Vert_{\psi_{2}} \le c$. Then, we have $\left\Vert \bA_{i} \right\Vert_{\psi_{2}} \leq  \mathbb{E}  \left\Vert \bA_{i} \right\Vert  + c \le 2\sqrt{n} + c$, where the last inequality follows from the fact $\mathbb{E}  \left\Vert \bA_{i} \right\Vert  \le 2\sqrt{n}$. As a result, we can calculate 
\begin{align*}
\left\Vert \bS_{i} \right\Vert_{\psi_{2}}
& = \left\Vert   \mathcal{A}_{i}\left( \bM \right)  \bA_i \mathbb{I}_{\left\{ \left\vert \mathcal{A}_{i}\left( \bM \right) \right\vert \le   \alpha_{y} C_{M} \right\}}    -     \gamma_{2}   \bM \right\Vert_{\psi_{2}}\\
& \le \left\Vert   \mathcal{A}_{i}\left( \bM \right)  \bA_i \mathbb{I}_{\left\{ \left\vert \mathcal{A}_{i}\left( \bM \right)  \right\vert \le   \alpha_{y} C_{M} \right\}}  \right\Vert_{\psi_{2}} +   \gamma_{2}    \left\Vert     \bM \right\Vert \\
& \le \alpha_{y} C_{M} \left\Vert \bA_{i} \right\Vert_{\psi_{2}} +  \gamma_{2}    \left\Vert \bM \right\Vert \le c_{1} \sqrt{n} \alpha_{y} \left\Vert \bM \right\Vert_{F},
\end{align*}
where $c_1$ is some constant. Moreover, we have  
\begin{align*}
\sigma_{\bS_{i}}^{2}  
&:= \max\left\{  \left\Vert \frac{1}{\left\vert \mathcal{S}_{1}^{c} \right\vert} \sum_{i\in\mathcal{S}_{1}^{c}}  \mathbb{E}\left[ \bS_{i}\bS_{i}^{T} \right]  \right\Vert , \left\Vert \frac{1}{\left\vert \mathcal{S}_{1}^{c} \right\vert} \sum_{i\in\mathcal{S}_{1}^{c}}  \mathbb{E}\left[ \bS_{i}^{T}\bS_{i} \right]  \right\Vert \right\}\\
& = \max\Big\{ \left\Vert  \mathbb{E}\left[ \left(   \mathcal{A}_{i}\left( \bM \right) \bA_i \mathbb{I}_{\left\{ \left\vert \mathcal{A}_{i}\left( \bM \right) \right\vert \le   \alpha_{y} C_{M} \right\}}    -      \gamma_{2}    \bM\right)\left(   \mathcal{A}_{i}\left( \bM \right) \bA_i \mathbb{I}_{\left\{ \left\vert \mathcal{A}_{i}\left( \bM \right) \right\vert \le   \alpha_{y} C_{M} \right\}}    -      \gamma_{2}    \bM\right)^{T} \right] \right\Vert ,\\
&\quad\quad\quad\quad  \left\Vert  \mathbb{E}\left[ \left(   \mathcal{A}_{i}\left( \bM \right) \bA_i \mathbb{I}_{\left\{ \left\vert \mathcal{A}_{i}\left( \bM \right) \right\vert \le   \alpha_{y} C_{M} \right\}}    -      \gamma_{2}    \bM\right)^{T}  \left(   \mathcal{A}_{i}\left( \bM \right) \bA_i \mathbb{I}_{\left\{ \left\vert \mathcal{A}_{i}\left( \bM \right) \right\vert \le   \alpha_{y} C_{M} \right\}}    -      \gamma_{2}    \bM\right) \right] \right\Vert \Big\}\\
& = \max\Big\{ \left\Vert  \mathbb{E}\left[  \left(\mathcal{A}_{i}\left( \bM \right)  \right)^{2} \bA_i\bA_{i}^{T} \mathbb{I}_{\left\{ \left\vert \mathcal{A}_{i}\left( \bM \right)  \right\vert \le   \alpha_{y} C_{M} \right\}} \right]  -  \gamma_{2}^{2}\bM\bM^{T} \right\Vert, \\
&\quad\quad\quad\quad \left\Vert  \mathbb{E}\left[  \left(\mathcal{A}_{i}\left( \bM \right)  \right)^{2} \bA_{i}^{T}\bA_i \mathbb{I}_{\left\{ \left\vert \mathcal{A}_{i}\left( \bM \right)  \right\vert \le   \alpha_{y} C_{M} \right\}} \right]  -  \gamma_{2}^{2}\bM^{T}\bM \right\Vert  \Big\}\\
& \le \max\Big\{ \left\Vert \mathbb{E}\left[ \left(\mathcal{A}_{i}\left( \bM \right)  \right)^{2} \bA_i\bA_{i}^{T} \mathbb{I}_{\left\{ \left\vert \mathcal{A}_{i}\left( \bM \right)  \right\vert \le   \alpha_{y} C_{M} \right\}} \right] \right\Vert +  \gamma_{2}^2 \left\Vert \bM \right\Vert^{2}, \\
&\quad\quad\quad\quad \left\Vert \mathbb{E}\left[ \left(\mathcal{A}_{i}\left( \bM \right)  \right)^{2} \bA_{i}^{T}\bA_i \mathbb{I}_{\left\{ \left\vert \mathcal{A}_{i}\left( \bM \right)  \right\vert \le   \alpha_{y} C_{M} \right\}} \right] \right\Vert +  \gamma_{2}^2 \left\Vert \bM \right\Vert^{2} \Big\}\\
& \le \alpha_{y}^{2} C_{M}^{2} \max\left\{ \left\Vert \mathbb{E}\left[ \bA_{i}\bA_{i}^{T} \right] \right\Vert,  \left\Vert \mathbb{E}\left[ \bA_{i}^{T}\bA_{i} \right] \right\Vert \right\} + \gamma_{2}^{2} \left\Vert \bM \right\Vert^{2}\\
& \le c_{2} n \alpha_{y}^{2}\left\Vert\bM\right\Vert_{F}^{2},
\end{align*}
where $c_2$ is some constant. By Lemma~\ref{lemma_orlicz_matrix_bernstein}, we have
\begin{align*}
\left\Vert  \frac{1}{\left\vert \mathcal{S}_{1}^{c} \right\vert} \sum_{i\in\mathcal{S}_{1}^{c}} \bS_{i} \right\Vert
& \le C \sqrt{n} \alpha_{y} \left\Vert \bM \right\Vert_{F} \max\Bigg\{  \sqrt{\frac{t + \log{\left(2n\right)}}{\left\vert \mathcal{S}_{1}^{c} \right\vert}}, \frac{t + \log{\left(2n\right)}}{\left\vert \mathcal{S}_{1}^{c} \right\vert} \Bigg\},
\end{align*}
with probability at least $1-e^{-t}$, where $C$ is some constant. Set $t=c\log n$. As long as $|\mathcal{S}_1^c| = (1-s_1) m/2 \geq c'\log (n)$, we have
\begin{equation}\label{equ_bound_Y1_spectral}
\left\Vert \bK_{1}  - \mathbb{E}\left[\bK_{1}\right] \right\Vert \leq  C \alpha_y \|\bM\|_F  \sqrt{\frac{n\log n}{m}}
\end{equation}
holds with probability at least $1-n^{-c}$ for some $c>1$.

Next, we employ the same technique to bound $\left\Vert  \bK_{2}  - \mathbb{E}\left[\bK_{2}\right] \right\Vert $. Define $\bT_{i} = y_i \bA_i \mathbb{I}_{\left\{ \left\vert y_i \right\vert  \le  \alpha_{y} C_{M} \right\}} $, which satisfies $\mathbb{E}[\bT_i] = \boldsymbol{0}$ and $\bK_{2} - \mathbb{E}\left[\bK_{2}\right] = \frac{1}{\left\vert \mathcal{S}_{1} \right\vert} \sum_{i\in \mathcal{S}_{1}} \bT_{i}$. We have 
\begin{align*}
\left\Vert \bT_{i} \right\Vert_{\psi_{2}}
& = \left\Vert y_i \bA_i \mathbb{I}_{\left\{ \left\vert y_i \right\vert  \le  \alpha_{y} C_{M} \right\}} \right\Vert_{\psi_{2}} \le \alpha_{y}C_{M} \left\Vert \bA_{i} \right\Vert_{\psi_{2}}\le c_{1}\sqrt{n}\alpha_{y}\left\Vert\bM\right\Vert_{F},
\end{align*}
where $c_1$ is some constant, and 
\begin{align*}
\sigma_{\bT_{i}}^{2} 
&: =  \max\left\{ \left\Vert \frac{1}{\left\vert \mathcal{S}_{1} \right\vert} \sum_{i\in\mathcal{S}_{1}}  \mathbb{E}\left[ \bT_{i}\bT_{i}^{T} \right]  \right\Vert,  \left\Vert \frac{1}{\left\vert \mathcal{S}_{1} \right\vert} \sum_{i\in\mathcal{S}_{1}}  \mathbb{E}\left[ \bT_{i}^{T}\bT_{i} \right]  \right\Vert \right\}\\
& \le \alpha_{y}^{2} C_{M}^{2} \max\left\{ \left\Vert \mathbb{E}\left[ \bA_i\bA_{i}^{T} \right] \right\Vert, \left\Vert \mathbb{E}\left[ \bA_i^{T}\bA_{i} \right] \right\Vert \right\} \le c_{2}n\alpha_{y}^{2}\left\Vert\bM\right\Vert_{F}^{2},
\end{align*}
where $c_2$ is some constant. Again, by Lemma~\ref{lemma_orlicz_matrix_bernstein} we have
\begin{align*}
\left\Vert \frac{1}{\left\vert \mathcal{S}_{1} \right\vert} \sum_{i\in\mathcal{S}_{1}} \bT_{i} \right\Vert 
& \le C \sqrt{n} \alpha_{y} \left\Vert\bM\right\Vert_{F} \max\Bigg\{ \sqrt{\frac{t + \log{\left(2n\right)}}{\left\vert\mathcal{S}_{1}\right\vert}},\frac{t + \log{\left(2n\right)}}{\left\vert\mathcal{S}_{1}\right\vert} \Bigg\}
\end{align*}
with probability at least $1-e^{-t}$. Then by setting $t=c\log n$, and recalling $|\mathcal{S}_1 | = s_1 m/2 $, we have with probability at least $1-n^{-c}$,
\begin{equation}\label{equ_bound_Y2_spectral}
\left\| \bK_2 \right\| \le C\sqrt{n}  \alpha_{y} \left\Vert\bM\right\Vert_{F} \max\Bigg\{ \sqrt{\frac{\log n}{s_1 m}}, \frac{\log n}{s_1 m} \Bigg\}.
\end{equation}

Combing \eqref{equ_bound_Y1_spectral} and \eqref{equ_bound_Y2_spectral}, we have with probability at least $1-n^{-c}$,
\begin{align*}
&\left\Vert \bK - \left(1-s_1\right)  \gamma_{2} \bM \right\Vert  \\
& \le \left(1-s_1\right) \left\Vert \bK_{1} -  \gamma_{2}  \bM \right\Vert + s_1 \left\Vert\bK_{2}\right\Vert \\
& \leq  C \alpha_y \|\bM\|_F  \sqrt{\frac{n\log n}{m}} + C\sqrt{n}  \alpha_{y} \left\Vert\bM\right\Vert_{F} \max\Bigg\{ \sqrt{\frac{s_1\log n}{ m}}, \frac{\log n}{ m} \Bigg\} \\
& \leq  C \alpha_y \|\bM\|_F  \sqrt{\frac{n\log n}{m}} 
\end{align*}
provided that $m>c_2 \log n$ for large enough $c_2$.



\section{Proof of Lemma~\ref{lemma_setD}}\label{proof_lemma_setD}

Since $\bH= \bW-\bZ\bQ$, we can write
\begin{align*}
\langle \bB_{i}, \bW\bW^{T} - \bZ\bZ^{T}   \rangle & = \langle \bB_{i}, \bW\bW^{T} - (\bZ\bQ)(\bZ\bQ)^{T}   \rangle \\
& =  \langle \bB_{i}, \bW\bW^{T} - (\bW-\bH)(\bW-\bH)^{T}   \rangle \\
& =2 \langle \bB_i, \bH\bW^T \rangle  -\langle \bB_i,\bH\bH^T\rangle .
\end{align*}
Therefore,  $i\in\mathcal{D} $ if and only if
\begin{equation} \label{equivalent_D}
\left(2 \langle \bB_i, \bH\bW^T \rangle  -\langle \bB_i,\bH\bH^T\rangle  \right)  \langle \bB_{i}, \bH\bW^{T} \rangle<0 .
\end{equation}
If $\langle \bB_{i}, \bH\bW^{T} \rangle>0$, then $ \langle \bB_i, \bH\bW^T \rangle  <\frac{1}{2} \langle \bB_i,\bH\bH^T\rangle$; if $\langle \bB_{i}, \bH\bW^{T} \rangle<0$, then $ \langle \bB_i, \bH\bW^T \rangle  >\frac{1}{2} \langle \bB_i,\bH\bH^T\rangle$. Therefore, we have $|\langle \bB_i, \bH\bW^T \rangle|  <\frac{1}{2} |\langle \bB_i,\bH\bH^T\rangle|$.


\bibliographystyle{IEEEtran} 
\bibliography{Outlier_medianTWF}

\begin{thebibliography}{10}
\providecommand{\url}[1]{#1}
\csname url@samestyle\endcsname
\providecommand{\newblock}{\relax}
\providecommand{\bibinfo}[2]{#2}
\providecommand{\BIBentrySTDinterwordspacing}{\spaceskip=0pt\relax}
\providecommand{\BIBentryALTinterwordstretchfactor}{4}
\providecommand{\BIBentryALTinterwordspacing}{\spaceskip=\fontdimen2\font plus
\BIBentryALTinterwordstretchfactor\fontdimen3\font minus
  \fontdimen4\font\relax}
\providecommand{\BIBforeignlanguage}[2]{{%
\expandafter\ifx\csname l@#1\endcsname\relax
\typeout{** WARNING: IEEEtran.bst: No hyphenation pattern has been}%
\typeout{** loaded for the language `#1'. Using the pattern for}%
\typeout{** the default language instead.}%
\else
\language=\csname l@#1\endcsname
\fi
#2}}
\providecommand{\BIBdecl}{\relax}
\BIBdecl

\bibitem{recht2010guaranteed}
B.~Recht, M.~Fazel, and P.~A. Parrilo, ``Guaranteed minimum-rank solutions of
  linear matrix equations via nuclear norm minimization,'' \emph{SIAM Review},
  vol.~52, no.~3, pp. 471--501, 2010.

\bibitem{gross2011recovering}
D.~Gross, ``Recovering low-rank matrices from few coefficients in any basis,''
  \emph{IEEE Transactions on Information Theory}, vol.~57, no.~3, pp.
  1548--1566, March 2011.

\bibitem{negahban2011estimation}
S.~Negahban and M.~J. Wainwright, ``Estimation of (near) low-rank matrices with
  noise and high-dimensional scaling,'' \emph{The Annals of Statistics},
  vol.~39, no.~2, pp. 1069--1097, 2011.

\bibitem{candes2012exact}
E.~Candes and B.~Recht, ``Exact matrix completion via convex optimization,''
  \emph{Communications of the ACM}, vol.~55, no.~6, pp. 111--119, 2012.

\bibitem{chen2014robust}
Y.~Chen and Y.~Chi, ``Robust spectral compressed sensing via structured matrix
  completion,'' \emph{IEEE Transactions on Information Theory}, vol.~60,
  no.~10, pp. 6576--6601, 2014.

\bibitem{chen2015exact}
Y.~Chen, Y.~Chi, and A.~Goldsmith, ``Exact and stable covariance estimation
  from quadratic sampling via convex programming,'' \emph{IEEE Transactions on
  Information Theory}, vol.~61, no.~7, pp. 4034--4059, July 2015.

\bibitem{davenport2016overview}
M.~A. Davenport and J.~Romberg, ``An overview of low-rank matrix recovery from
  incomplete observations,'' \emph{IEEE Journal of Selected Topics in Signal
  Processing}, vol.~10, no.~4, pp. 608--622, 2016.

\bibitem{jain2010guaranteed}
P.~Jain, R.~Meka, and I.~S. Dhillon, ``Guaranteed rank minimization via
  singular value projection,'' in \emph{Advances in Neural Information
  Processing Systems (NIPS)}, 2010, pp. 937--945.

\bibitem{candes2011tight}
E.~J. Cand\`es and Y.~Plan, ``Tight oracle inequalities for low-rank matrix
  recovery from a minimal number of noisy random measurements,'' \emph{IEEE
  Transactions on Information Theory}, vol.~57, no.~4, pp. 2342--2359, 2011.

\bibitem{burer2003nonlinear}
S.~Burer and R.~D. Monteiro, ``A nonlinear programming algorithm for solving
  semidefinite programs via low-rank factorization,'' \emph{Mathematical
  Programming}, vol.~95, no.~2, pp. 329--357, 2003.

\bibitem{zheng2015convergent}
Q.~Zheng and J.~Lafferty, ``A convergent gradient descent algorithm for rank
  minimization and semidefinite programming from random linear measurements,''
  in \emph{Advances in Neural Information Processing Systems (NIPS)}, 2015.

\bibitem{tu2016low}
S.~Tu, R.~Boczar, M.~Simchowitz, M.~Soltanolkotabi, and B.~Recht, ``Low-rank
  solutions of linear matrix equations via procrustes flow,'' in
  \emph{Proceedings of the 33rd International Conference on International
  Conference on Machine Learning (ICML)}, 2016, pp. 964--973.

\bibitem{zhao2015nonconvex}
T.~Zhao, Z.~Wang, and H.~Liu, ``A nonconvex optimization framework for low rank
  matrix estimation,'' in \emph{Advances in Neural Information Processing
  Systems (NIPS)}, 2015, pp. 559--567.

\bibitem{chen2015fast}
Y.~Chen and M.~J. Wainwright, ``Fast low-rank estimation by projected gradient
  descent: General statistical and algorithmic guarantees,'' \emph{arXiv
  preprint arXiv:1509.03025}, 2015.

\bibitem{park2016provable}
D.~Park, A.~Kyrillidis, S.~Bhojanapalli, C.~Caramanis, and S.~Sanghavi,
  ``Provable burer-monteiro factorization for a class of norm-constrained
  matrix problems,'' \emph{arXiv preprint arXiv:1606.01316}, 2016.

\bibitem{jain2013low}
P.~Jain, P.~Netrapalli, and S.~Sanghavi, ``Low-rank matrix completion using
  alternating minimization,'' in \emph{Proceedings of the Forty-fifth Annual
  ACM Symposium on Theory of Computing}, 2013, pp. 665--674.

\bibitem{hardt2014understanding}
M.~Hardt, ``Understanding alternating minimization for matrix completion,'' in
  \emph{IEEE 55th Annual Symposium on Foundations of Computer Science (FOCS)},
  2014, pp. 651--660.

\bibitem{bhojanapalli2016global}
S.~Bhojanapalli, B.~Neyshabur, and N.~Srebro, ``Global optimality of local
  search for low rank matrix recovery,'' \emph{arXiv preprint
  arXiv:1605.07221}, 2016.

\bibitem{ge2016matrix}
R.~Ge, J.~D. Lee, and T.~Ma, ``Matrix completion has no spurious local
  minimum,'' in \emph{Advances in Neural Information Processing Systems
  (NIPS)}, 2016, pp. 2973--2981.

\bibitem{li2016nonconvex}
Q.~Li and G.~Tang, ``The nonconvex geometry of low-rank matrix optimizations
  with general objective functions,'' \emph{arXiv preprint arXiv:1611.03060},
  2016.

\bibitem{li2016symmetry}
X.~Li, Z.~Wang, J.~Lu, R.~Arora, J.~Haupt, H.~Liu, and T.~Zhao, ``Symmetry,
  saddle points, and global geometry of nonconvex matrix factorization,''
  \emph{arXiv preprint arXiv:1612.09296}, 2016.

\bibitem{ge2015escaping}
R.~Ge, F.~Huang, C.~Jin, and Y.~Yuan, ``Escaping from saddle points---online
  stochastic gradient for tensor decomposition,'' \emph{arXiv preprint
  arXiv:1503.02101}, 2015.

\bibitem{lee2016gradient}
J.~D. Lee, M.~Simchowitz, M.~I. Jordan, and B.~Recht, ``Gradient descent only
  converges to minimizers,'' in \emph{Conference on Learning Theory}, 2016, pp.
  1246--1257.

\bibitem{jin2017escape}
C.~Jin, R.~Ge, P.~Netrapalli, S.~M. Kakade, and M.~I. Jordan, ``How to escape
  saddle points efficiently,'' \emph{arXiv preprint arXiv:1703.00887}, 2017.

\bibitem{li2017lowrank}
Y.~Li, Y.~Sun, and Y.~Chi, ``Low-rank positive semidefinite matrix recovery
  from corrupted rank-one measurements,'' \emph{IEEE Transactions on Signal
  Processing}, vol.~65, no.~2, pp. 397--408, Jan 2017.

\bibitem{wright2013compressive}
J.~Wright, A.~Ganesh, K.~Min, and Y.~Ma, ``Compressive principal component
  pursuit,'' \emph{Information and Inference}, vol.~2, no.~1, pp. 32--68, 2013.

\bibitem{huber2011robust}
P.~J. Huber, \emph{Robust Statistics}.\hskip 1em plus 0.5em minus 0.4em\relax
  Springer, 2011.

\bibitem{zhang2016provable}
H.~Zhang, Y.~Chi, and Y.~Liang, ``Provable non-convex phase retrieval with
  outliers: Median truncated {W}irtinger flow,'' \emph{arXiv preprint
  arXiv:1603.03805}, 2016.

\bibitem{candes2015phase}
E.~J. Cand\`es, X.~Li, and M.~Soltanolkotabi, ``Phase retrieval via wirtinger
  flow: Theory and algorithms,'' \emph{IEEE Transactions on Information
  Theory}, vol.~61, no.~4, pp. 1985--2007, 2015.

\bibitem{zhang2016reshaped}
H.~{Zhang}, Y.~{Zhou}, Y.~{Liang}, and Y.~{Chi}, ``Reshaped {W}irtinger flow
  and incremental algorithm for solving quadratic system of equations,''
  \emph{ArXiv 1605.07719}, May 2016.

\bibitem{chen2015solving}
Y.~Chen and E.~Candes, ``Solving random quadratic systems of equations is
  nearly as easy as solving linear systems,'' in \emph{Advances in Neural
  Information Processing Systems (NIPS)}, 2015.

\bibitem{li2016rapid}
X.~Li, S.~Ling, T.~Strohmer, and K.~Wei, ``Rapid, robust, and reliable blind
  deconvolution via nonconvex optimization,'' \emph{arXiv preprint
  arXiv:1606.04933}, 2016.

\bibitem{tibshirani2008fast}
R.~J. Tibshirani, ``Fast computation of the median by successive binning,''
  \emph{arXiv preprint arXiv:0806.3301}, 2008.

\bibitem{yi2016fast}
X.~Yi, D.~Park, Y.~Chen, and C.~Caramanis, ``Fast algorithms for robust pca via
  gradient descent,'' in \emph{Advances in neural information processing
  systems}, 2016, pp. 4152--4160.

\bibitem{cherapanamjeri2016nearly}
Y.~Cherapanamjeri, K.~Gupta, and P.~Jain, ``Nearly-optimal robust matrix
  completion,'' \emph{arXiv preprint arXiv:1606.07315}, 2016.

\bibitem{zhang2017nonconvex}
X.~Zhang, L.~Wang, and Q.~Gu, ``A nonconvex free lunch for low-rank plus sparse
  matrix recovery,'' \emph{arXiv preprint arXiv:1702.06525}, 2017.

\bibitem{sanghavi2016local}
S.~Sanghavi, R.~Ward, and C.~D. White, ``The local convexity of solving systems
  of quadratic equations,'' \emph{Results in Mathematics}, pp. 1--40, 2016.

\bibitem{li2017nonconvex}
Y.~Li, Y.~Chi, H.~Zhang, and Y.~Liang, ``Non-convex low-rank matrix recovery
  from corrupted random linear measurements,'' in \emph{2017 International
  Conference on Sampling Theory and Applications (SampTA)}, 2017.

\bibitem{zheng2016convergence}
Q.~Zheng and J.~Lafferty, ``Convergence analysis for rectangular matrix
  completion using burer-monteiro factorization and gradient descent,''
  \emph{arXiv preprint arXiv:1605.07051}, 2016.

\bibitem{koltchinskii2011nuclear}
V.~Koltchinskii, K.~Lounici, and A.~B. Tsybakov, ``Nuclear-norm penalization
  and optimal rates for noisy low-rank matrix completion,'' \emph{The Annals of
  Statistics}, pp. 2302--2329, 2011.

\bibitem{laurent2000adaptive}
B.~Laurent and P.~Massart, ``Adaptive estimation of a quadratic functional by
  model selection,'' \emph{Annals of Statistics}, pp. 1302--1338, 2000.

\bibitem{Vershynin2012}
R.~Vershynin, ``Introduction to the non-asymptotic analysis of random
  matrices,'' \emph{Compressed Sensing, Theory and Applications}, pp. 210 --
  268, 2012.

\end{thebibliography}

\end{document}